\pgfplotsset{compat=1.13}
\newcommand\blankfootnote[1]{%
  \begin{NoHyper}
  \let\svthefootnote\thefootnote%
  \let\thefootnote\relax\footnotetext{#1}%
  \let\thefootnote\svthefootnote%
  \end{NoHyper}
}
\newcommand{\bobPostproc}{{D}} %ok
\newcommand{\jammer}{\mathfrak{J}} %ok
\newcommand{\bob}{\mathfrak{B}} %ok
\newcommand{\eve}{\mathfrak{E}} %ok
\newcommand{\objectiveFunction}{{f}} %ok
\newcommand{\objectiveEstimator}{{\tilde{f}}} %ok
\newcommand{\inputRV}{{X}} %ok
\newcommand{\outputRV}{{Y}} %ok
\newcommand{\inputAlphabet}{{\mathcal{X}}} %ok
\newcommand{\inputAlphabetElement}{{x}} %!
\newcommand{\outputAlphabet}{{\mathcal{Y}}} %ok
\newcommand{\outputAlphabetElement}{{y}} %!
\newcommand{\stateSpace}{{\mathcal{S}}} %ok
\newcommand{\stateSpaceElement}{{s}} %ok
\newcommand{\compoundChannel}[1]{{W_{#1}}} %ok
\newcommand{\channel}{W} %ok
\newcommand{\approximateChannel}[2]{{\hat{W}_{{#1},{#2}}}} %ok
\newcommand{\messageRV}{\mathcal{M}} %ok
\newcommand{\information}[2]{{\mathbf{I}_{{#1},{#2}}}} %ok
\newcommand{\informationDensity}[4]{{\mathbf{i}_{{#1},{#2}}({#3};{#4})}} %ok
\newcommand{\codebook}{{\mathcal{C}}} %ok
\newcommand{\codeword}[1]{{\mathcal{C}(#1)}} %!
\newcommand{\codewordsub}[2]{{\mathcal{C}_{#1}(#2)}} %!
\newcommand{\codebookRate}{\mathcal{R}} %ok
\newcommand{\blocklength}{{n}} %ok
\newcommand{\inputDistribution}{{P}} %ok
\newcommand{\inputOutputDistribution}[2]{Q_{{#1},{#2}}} %ok
\newcommand{\marginalOutputDistribution}[2]{R_{{#1},{#2}}} %ok
\newcommand{\typicalityParameter}{{\varepsilon}} %!
\newcommand{\Probability}{{\mathbb{P}}} %!
\newcommand{\Expectation}{{\mathbb{E}}} %!
\newcommand{\channelApproximationError}{{\delta}} %ok
\newcommand{\channelApproximationIndex}{{j}} %ok
\newcommand{\channelApproximationNumber}[1]{{J({#1})}} %ok
\newcommand{\generalChannelApproximationNumber}{{J}} %ok
\newcommand{\codewordIndex}{{m}} %ok
\newcommand{\errorevent}{{\mathcal{E}}} %!
\newcommand{\boundParamOne}{{\beta_1}} %!
\newcommand{\boundParamTwo}{{\beta_2}} %!
\newcommand{\blockIndex}{{i}} %!
\newcommand{\RNDerivative}[2]{\frac{d{#1}}{d{#2}}} %ok
\newcommand{\renyiParam}{{\alpha}} %!
\newcommand{\renyiParamOne}{{\alpha_1}} %!
\newcommand{\renyiParamTwo}{{\alpha_2}} %!
\newcommand{\renyiParamThree}{{\alpha_3}} %!
\newcommand{\renyidiv}[3]{\mathbf{D}_{#1}\left({#2} || {#3}\right)} %ok
\newcommand{\indicator}[1]{\mathbf{1}_{#1}} %!
\newcommand{\kldiv}[2]{\renyidiv{1}{#1}{#2}} %ok
\newcommand{\absolutelyContinuous}{{\ll}} %ok
\newcommand{\notAbsolutelyContinuous}{{\centernot{\ll}}} %!
\newcommand{\finalconst}{{\gamma}} %!
\newcommand{\finalconstOne}{{\gamma_1}} %!
\newcommand{\finalconstTwo}{{\gamma_2}} %!
\newcommand{\finalconstThree}{{\gamma_3}} %!
\newcommand{\finalconstFour}{{\gamma_4}} %!
\newcommand{\analogMessageAlphabet}[1]{{\mathcal{S}_{#1}}} %ok, but has to be the same symbol as \stateSpace due to inconsistent usage!
\newcommand{\analogMessageAlphabetElement}[1]{{s_{#1}}} %ok, but has to be the same symbol as \stateSpaceElement due to inconsistent usage!
\newcommand{\processedAnalogMessageRV}[1]{T_{#1}} %ok
\newcommand{\alice}[1]{{\mathfrak{A}_{#1}}} %ok
\newcommand{\aliceIndex}{{k}} %!
\newcommand{\aliceNum}{{K}} %ok
\newcommand{\alicePreproc}[1]{F_{#1}} %ok
\newcommand{\eveAlphabet}{{\mathcal{Z}}} %ok
\newcommand{\eveOutputRV}{{Z}} %ok
\newcommand{\effectiveChannelBob}{W_{\mathfrak{B}}} %ok
\newcommand{\effectiveChannelEve}{W_{\mathfrak{E}}} %ok
\newcommand{\eveMarginalOutputDistribution}[1]{{\hat{R}_{#1}}} %ok
\newcommand{\totalvariation}[1]{\lVert #1 \rVert_\mathrm{TV}} %ok
\newcommand{\totalvariationlr}[1]{\left\lVert #1 \right\rVert_\mathrm{TV}} %!
\newcommand{\noiseRV}{{N}} %ok
\newcommand{\absolute}[1]{\left| {#1} \right|} %!
\newcommand{\reals}{\mathbb{R}} %!
\newcommand{\coverElement}{{\mathcal{A}}} %!
\newcommand{\numRxAntennas}{{i}} %!
\newcommand{\numTxAntennas}{{j}} %!
\newcommand{\channelMatrix}{{H}} %!
\newcommand{\covarianceMatrix}{{\Sigma}} %!
\newcommand{\psdSymMatrices}[1]{{\mathrm{Sym}_+^{#1}}} %ok
\newcommand{\pdSymMatrices}[1]{{\mathrm{Sym}_{++}^{#1}}} %ok
\newcommand{\mean}{{\mu}} %!
\newcommand{\density}[1]{{p_{#1}}} %!
\newcommand{\tail}{\varepsilon} %!
\newcommand{\fadingAliceBob}{h_{\mathfrak{A}\mathfrak{B}}} %!
\newcommand{\fadingAliceEve}{h_{\mathfrak{A}\mathfrak{E}}} %!
\newcommand{\fadingJammerBob}{h_{\mathfrak{\mathfrak{J}\mathfrak{B}}}} %!
\newcommand{\fadingJammerEve}{h_{\mathfrak{\mathfrak{J}\mathfrak{E}}}} %!
\newcommand{\noiseEve}{N_{\mathfrak{E}}} %!
\newcommand{\noiseBob}{N_{\mathfrak{B}}} %!
\newcommand{\noiseEveComponent}[1]{N_{\mathfrak{E},{#1}}} %!
\newcommand{\noiseBobComponent}[1]{N_{\mathfrak{B},{#1}}} %!
\newcommand{\stddevBob}{\sigma_\mathfrak{B}} %!
\newcommand{\stddevEve}{\sigma_\mathfrak{E}} %!
\newcommand{\stddevBobEff}{\sigma_{\mathrm{eff}, \mathfrak{B}}} %!
\newcommand{\stddevEveEff}{\sigma_{\mathrm{eff}, \mathfrak{E}}}  %!
\newcommand{\jammerPowerConstraint}{\mathfrak{P}_\mathfrak{J}} %!
\newcommand{\alicePowerConstraint}{\mathfrak{P}_\mathfrak{A}} %!
\newcommand{\alicePreprocElement}{t} %!
\newcommand{\alicePreprocRandomness}{U} %!
\newcommand{\alicePreprocInnerVariable}{u} %!
\newcommand{\functionApproximationError}{\delta} %!
\newcommand{\eveOutputDistributionGeneral}[1]{\tilde{R}_{#1}} %!
\newcommand{\eveObjective}{g} %!
\newcommand{\eveObjectiveRange}{\mathcal{T}} %!
\newcommand{\eveDecoder}{d} %!
\newcommand{\jammingDecoder}{\vartheta} %!
\newcommand{\uselessDistribution}{\mu} %!
\newcommand{\securityError}{\delta} %!
\newcommand{\secLoss}{V} %!
\newcommand{\mseError}{V} %!
\newcommand{\intervalLowerBound}{a} %!
\newcommand{\intervalUpperBound}{b} %!
\newcommand{\mseFunction}{\Psi} %!
\newcommand{\mseFunctionModified}{\hat{\Psi}} %!
\newcommand{\stddev}{\sigma} %!
\newcommand{\indicatorAbbrev}{\mathrm{ind}} %!
\newcommand{\finiteApproxInteger}{M} %!
\newcommand{\correspondingApproximateChannel}[1]{{W_{#1}'}} %!
\newcommand{\cardinality}[1]{\lvert {#1} \rvert} %!
\newcommand{\errorProb}{\epsilon} %!
\newcommand{\generalrvOne}{U} %!
\newcommand{\generalrvTwo}{V} %!
\newcommand{\generalrvOneValue}{u} %!
\newcommand{\generalrvTwoValue}{v} %!
\newcommand{\generalSummationIndex}{k} %!
\newcommand{\momentGeneratingFunction}{\varphi} %!
\newcommand{\chernoffParam}{\lambda} %!
\newcommand{\costConstraint}{C} %ok
\newcommand{\generalSummationBound}{n} %!
\newcommand{\costFunction}{c} %ok
\newcommand{\lemmaBadCodewords}{\mathfrak{N}} %!
\newcommand{\codebookBlocklength}{n} % ok, but keep same as \blocklength
\newcommand{\proofconstantOne}{{\beta_1}} %!
\newcommand{\proofconstantTwo}{{\beta_2}} %!
\newcommand{\lemmapvalue}{p} %!
\newcommand{\generalReal}{t} %!
\newcommand{\generalRealTwo}{u} %!
\newcommand{\generalRealThree}{v} %!
\newcommand{\stdnormalcdf}{\Phi_N} %ok
\newcommand{\stdnormalpdf}{\varphi_N} %ok
\newcommand{\generalMeasureOne}{\mu} %!
\newcommand{\generalMeasureTwo}{\nu} %!
\newcommand{\genericEncoder}{e} %!
\newcommand{\genericDecoder}{d} %!
\newcommand{\generalSet}{S} %!
\newcommand{\generalFunction}{f}
\newacronym{ota}{OTA}{over-the-air}
\newacronym{awgn}{AWGN}{additive white Gaussian noise}
\newacronym{dfa}{DFA}{distributed function approximation}
\newacronym{dfaj}{DFA-J}{distributed function approximation with jamming}
\newacronym{mse}{MSE}{mean square error}
\newtheorem{theorem}{Theorem}
\newtheorem{lemma}{Lemma}
\newtheorem{cor}{Corollary}
\newtheorem{remark}{Remark}
\newtheorem{definition}{Definition}
\definecolor{revcolor}{rgb}{0,0,0.5}
\newglossaryentry{objectiveFunction}{
  name={\ensuremath{\objectiveFunction: \stateSpace_1 \times \ldots \times \stateSpace_\aliceNum\to\reals}},
  description={Objective function to be approximated}
}
\newglossaryentry{alicePreproc}{
  name={\ensuremath{\alicePreproc{}^\blocklength = (\alicePreproc{1}^\blocklength, \dots, \alicePreproc{\aliceNum}^\blocklength)}},
  description={Pre-processors for $\blocklength$ channel uses at the transmitters}
}
\newglossaryentry{channel}{
  name={\ensuremath{\channel}},
  description={channel}
}
\newglossaryentry{channelInAlph}{
  name={\ensuremath{\inputAlphabet_1, \dots, \inputAlphabet_\aliceNum}},
  description={multiple-access channel input alphabets}
}
\newglossaryentry{channelOutAlph}{
  name={\ensuremath{\outputAlphabet}},
  description={legitimate receiver's channel output alphabet; output alphabet of point-to-point channel}
}
\newglossaryentry{costConstraint}{
  name={\ensuremath{(\costFunction, \costConstraint)}},
  description={additive input cost constraint for a channel}
}
\newglossaryentry{bobPostproc}{
  name={\ensuremath{\bobPostproc^\blocklength}},
  description={Post-processor for $\blocklength$ channel uses at the receiver}
}
\newglossaryentry{objectiveEstimator}{
  name={\ensuremath{\objectiveEstimator}},
  description={Estimator at the receiver for $\objectiveFunction(\stateSpaceElement_1, \dots, \stateSpaceElement_\aliceNum)$}
}
\newglossaryentry{transmitters}{
  name={\ensuremath{\alice{1}, \dots, \alice{\aliceNum}}},
  description={transmitters}
}
\newglossaryentry{receiver}{
  name={\ensuremath{\bob}},
  description={legitimate receiver}
}
\newglossaryentry{jammer}{
  name={\ensuremath{\jammer}},
  description={jammer}
}
\newglossaryentry{eavesdropper}{
  name={\ensuremath{\eve}},
  description={eavesdropper}
}
\newglossaryentry{compoundChannel}{
  name={\ensuremath{(\compoundChannel{\stateSpaceElement})_{\stateSpaceElement \in \stateSpace}}},
  description={compound channel}
}
\newglossaryentry{inAlph}{
  name={\ensuremath{\inputAlphabet}},
  description={jammer's input alphabet; input alphabet of point-to-point channel}
}
\newglossaryentry{renyidiv}{
  name={\ensuremath{\renyidiv{\renyiParam}{\cdot}{\cdot}}},
  text={\ensuremath{\renyidiv{\renyiParam}{\generalMeasureOne}{\generalMeasureTwo}}},
  description={Rényi divergence of order $\renyiParam$}
}
\newglossaryentry{kldiv}{
  name={\ensuremath{\kldiv{\cdot}{\cdot}}},
  text={\ensuremath{\renyidiv{1}{\generalMeasureOne}{\generalMeasureTwo}}},
  description={Kullback-Leibler divergence}
}
\newglossaryentry{mutInf}{
  name={\ensuremath{\information{\inputDistribution}{\channel}}},
  description={Mutual information between input and output of channel $\channel$ under input distribution $\inputDistribution$}
}
\newglossaryentry{codebookensemble}{
  name={$(\inputDistribution, \blocklength, \codebookRate)$-ensemble},
  description={random codebook ensemble with input distribution $\inputDistribution$, block length $\blocklength$ and rate $\codebookRate$}
}
\newglossaryentry{effectiveChannelBob}{
  name={\ensuremath{\effectiveChannelBob}},
  description={legitimate user's effective channel}
}
\newglossaryentry{effectiveChannelEve}{
  name={\ensuremath{\effectiveChannelEve}},
  description={eavesdropper's effective channel}
}
\newglossaryentry{eveAlph}{
  name={\ensuremath{\eveAlphabet}},
  description={eavesdropper's channel output alphabet}
}
\newglossaryentry{jointDist}{
  name={\ensuremath{\inputOutputDistribution{\inputDistribution}{\channel}}},
  description={joint input-output distribution of channel $\channel$ under input distribution $\inputDistribution$}
}
\newglossaryentry{marginalDist}{
  name={\ensuremath{\marginalOutputDistribution{\inputDistribution}{\channel}}},
  description={output distribution of channel $\channel$ under input distribution $\inputDistribution$}
}
\newglossaryentry{psdsym}{
  name={\ensuremath{\psdSymMatrices{\generalSummationBound}}},
  description={symmetric, positive semidefinite $\generalSummationBound \times \generalSummationBound$ matrices}
}
\newglossaryentry{pdsym}{
  name={\ensuremath{\pdSymMatrices{\generalSummationBound}}},
  description={symmetric, positive definite $\generalSummationBound \times \generalSummationBound$ matrices}
}
\newglossaryentry{processedAnalogMessageRV}{
  text={\ensuremath{\processedAnalogMessageRV{\aliceIndex}}},
  name={\ensuremath{\processedAnalogMessageRV{1}, \dots, \processedAnalogMessageRV{\aliceNum}}},
  description={transmitters' pre-processed channel inputs}
}
\newglossaryentry{inputRV}{
  name={\ensuremath{\inputRV}},
  description={jammer's channel input; input of point-to-point channel}
}
\newglossaryentry{outputRV}{
  name={\ensuremath{\outputRV}},
  description={legitimate receiver's channel output; output of point-to-point channel}
}
\newglossaryentry{approximateChannel}{
  name={\ensuremath{(\approximateChannel{\channelApproximationError}{\channelApproximationIndex})_{\channelApproximationIndex=1}^{\generalChannelApproximationNumber}}},
  description={sequence of channels that approximate a compound channel with error $\channelApproximationError$}
}
\newglossaryentry{messageRV}{
  name={\ensuremath{\messageRV}},
  description={randomness used in jamming strategy; transmitted message in compound channel}
}
\newglossaryentry{informationDensity}{
  name={\ensuremath{\informationDensity{\inputDistribution}{\channel}{\inputAlphabetElement^\blocklength}{\outputAlphabetElement^\blocklength}}},
  description={information density of the input-output pair $(\inputAlphabetElement^\blocklength, \outputAlphabetElement^\blocklength)$ under a channel $\channel$ with input distribution $\inputDistribution$}
}
\newglossaryentry{codebook}{
  name={\ensuremath{\codebook=\codebook(\codewordIndex)_{\codewordIndex=1}^{\exp(\blocklength\codebookRate)}}},
  text={\ensuremath{\codebook}},
  description={codebook}
}
\newglossaryentry{eveMarginalOutputDistribution}{
  name={\ensuremath{\eveMarginalOutputDistribution{\channel^\blocklength,\codebook}}},
  description={distribution of the output of channel $\channel^\blocklength$ if a uniformly random code word from $\codebook$ is transmitted}
}
\newglossaryentry{absolutelyContinuous}{
  name={\ensuremath{\generalMeasureOne \absolutelyContinuous \generalMeasureTwo}},
  description={measure $\generalMeasureOne$ is absolutely continuous with respect to measure $\generalMeasureTwo$},
}
\newglossaryentry{RNDerivative}{
  name={\ensuremath{\RNDerivative{\generalMeasureOne}{\generalMeasureTwo}}},
  description={Radon-Nikodym derivative of $\generalMeasureOne$ with respect to $\generalMeasureTwo$}
}
\newglossaryentry{eveOutputRV}{
  name={\ensuremath{\eveOutputRV}},
  description={eavesdropper's channel output}
}
\newglossaryentry{totalvariation}{
  name={\ensuremath{\totalvariation{\cdot}}},
  description={total variation norm on the vector space of signed, finite measures}
}
\newglossaryentry{exp}{
  name={\ensuremath{\exp(\cdot)}},
  description={exponentiation with Euler's number as basis}
}
\newglossaryentry{log}{
  name={\ensuremath{\log(\cdot)}},
  description={natural logarithm}
}
\newglossaryentry{noiseRV}{
  name={\ensuremath{\noiseRV}},
  description={additive channel noise}
}
\newglossaryentry{stdnormalcdf}{
  name={\ensuremath{\stdnormalcdf}},
  description={cumulative distribution function of the standard normal distribution}
}
\newglossaryentry{stdnormalpdf}{
  name={\ensuremath{\stdnormalpdf}},
  description={probability density function of the standard normal distribution}
}
\title{Towards Secure Over-The-Air Computation}
\author{\IEEEauthorblockN{ Matthias Frey,\IEEEauthorrefmark{1} Igor Bjelakovi\'c,\IEEEauthorrefmark{2}\IEEEauthorrefmark{3} and S\l awomir~Sta\'{n}czak\IEEEauthorrefmark{2}\IEEEauthorrefmark{3}
    } %
    \\
  \IEEEauthorblockA{
    \IEEEauthorrefmark{1}Department of Electrical and Electronic Engineering, The University of Melbourne, Australia,
    \IEEEauthorrefmark{2}Technische Universität Berlin, Germany,
    and
    \IEEEauthorrefmark{3}Fraunhofer Heinrich Hertz Institute, Berlin, Germany
  }%
}
\begin{document}

\maketitle
\blankfootnote{
Part of this work was presented at the 2021 IEEE International Symposium on Information Theory, 12-20 July 2021, Melbourne, Victoria, Australia.

This work was supported by the German Research Foundation (DFG)
within their priority program SPP 1798 ``Compressed Sensing in Information Processing'' and under grants STA 864/7-1 and STA 864/15-1. This work was also supported by the Federal Ministry of Education and Research of Germany in the program of ``Souverän. Digital. Vernetzt.''. Joint project 6G-RIC, project identification numbers: 16KISK020K, 16KISK030.
}

\begin{abstract}
We propose a new method to protect \gls{ota} computation schemes against passive eavesdropping. Our method uses a friendly jammer whose signal is -- contrary to common intuition -- stronger at the legitimate receiver than it is at the eavesdropper. We focus on the computation of arithmetic averages over an \gls{awgn} channel. The derived secrecy guarantee translates to a lower bound on the eavesdropper's mean square error while the question of how to provide operationally more significant guarantees such as semantic security remains open for future work. The key ingredients in proving the security guarantees are a known result on channel resolvability and a generalization of existing achievability results on coding for compound channels.
\end{abstract}

\begin{IEEEkeywords}
over-the-air computation, information-theoretic secrecy, compound channel, \gls{awgn} channel, friendly jamming, eavesdropper
\end{IEEEkeywords}

\section{Introduction}
\label{sec:introduction}
In many envisioned applications in wireless networks, the receiver requires only a function of values available at the distributed transmitters rather than the full information about the values themselves. Examples include distributed Federated Learning~\cite{amiri2020machine} and distributed anomaly detection in sensor networks~\cite{kiril}. In such cases, analog \gls{ota} computation schemes can deliver sizable performance gains over classical separation-based approaches especially when the number of transmitters is large~\cite{gastpar2003source,goldenbaum2013robust,kiril,bjelakovic2019distributed,amiri2020machine,liu2020over,frey2021over}.

In some of the foreseen scenarios, such as e-health or industrial applications, security and privacy are expected to be major concerns in addition to efficient resource usage. Information theoretic secrecy can complement classic cryptography in addressing these issues. A natural way to enhance security in  an analog \gls{ota} computation setting such as~\cite{gastpar2003source,goldenbaum2013robust,kiril,bjelakovic2019distributed,amiri2020machine,liu2020over,frey2021over} is to add a jammer to the system that deteriorates the signal-to-noise ratio (SNR) of the eavesdropper and thereby prevents it from reconstructing a low-noise estimate of the objective function. In this case, it is necessary to place the jammer so that its signal is significantly stronger at the eavesdropper than it is at the legitimate receiver. Since in general, the exact position of an eavesdropper is not known, jammers typically have to be placed at multiple locations. In this work, we propose to turn this situation around and place the jammer so that its signal is stronger at the legitimate receiver than it is at the eavesdropper. Such a setup is often easier to realize since the jammer can for instance be set up in proximity to the legitimate receiver and in certain settings, such as factory buildings, it may be feasible to assume that the attacker is located, e.g., outside of the building while the legitimate receiver and the transmitters are inside. Our proposed scheme operates under the assumption that the jamming signal is stronger at the legitimate receiver than it is at the eavesdropper. It is applicable to the special case of \gls{ota} computation of an arithmetic average over an \gls{awgn} channel. The main idea is to carefully construct a jamming signal in such a way that it can be fully reconstructed (and therefore canceled in post-processing) by the legitimate receiver, while the eavesdropper is impacted by the jamming signal as though it was white noise.

\subsection{Prior Work}
\label{sec:literature}
To the best of our knowledge, the \gls{ota} computation problem over a wiretap channel has not yet been considered in the literature. Therefore, in this subsection we briefly summarize the literature on the building blocks we use for the approach to the wiretap \gls{ota} computation channel that we propose in this work as well as for literature on concepts that are closely related to the ones presented in this paper. 

\paragraph{\texorpdfstring{\gls{ota}}{OTA} computation}
The concept of analog \gls{ota} computation was originally introduced in~\cite{gastpar2003source} and further developed in~\cite{kiril,goldenbaum2013robust}. In~\cite{bjelakovic2019distributed,frey2021over}, we revisited this idea, adapted the existing scheme and provided an extension to a large class of functions and analysis of the estimation error for finite block length in a very general, fast-fading setting. There is also a digital version of \gls{ota} computation in which domain and range of the computed versions are finite, which was introduced in~\cite{nazer2007computation}. There are many more prior works in this area. For details, we refer the reader to the literature section in~\cite{frey2021over}.

\paragraph{Coding for compound channels}
The compound channel problem was introduced independently in \cite{dobrushin1959optimum,blackwell1959capacity,wolfowitz1959simultaneous}, while first independent results for the capacity expression can be found in \cite{blackwell1959capacity,wolfowitz1959simultaneous}. These works, however, explore mainly the case of finite input and output alphabets. The \emph{semi-continuous} case in which only the input alphabet is assumed to be finite is briefly touched upon in \cite{wolfowitz1959simultaneous} and studied in more detail in~\cite{kesten1961some} which provides an example showing that the capacity expression from the finite case does not carry over to the semi-continuous case in general. The semi-continuous case was further explored in~\cite{yoshihara1965coding,ahlswede1967certain}. In many cases of practical interest, the capacity expression from the finite case can be generalized to the \emph{continuous} case in which neither input nor output alphabets are assumed to be finite, as was found in~\cite{root1968capacity} for a class of Gaussian compound channels. Wiretap compound channels with finite alphabets are studied in~\cite{bjelakovic2013secrecy}. Gaussian compound wiretap channels and related models have been investigated in~\cite{he2014mimo}. However, the compound channel part in this work focuses on continuous-alphabet extensions of point-to-point compound channels.

\paragraph{Channel Resolvability and Semantic Security}
The concept of channel resolvability was introduced in~\cite{wyner1975common,han1993approximation}. Further results relevant in the context of this work appeared, e.g., in~\cite{csiszar1996almost,devetak2005private,hayashi2016secure,cuff2016soft}. We use our generalization~\cite{frey2018resolvability} for continuous channels as a basis for our proposed scheme. Although we cannot provide full semantic security guarantees in this work, we also heavily draw from the idea of obtaining semantic security by means of channel resolvability, which is developed in~\cite{hayashi2006general,csiszar2011information,BellareSemantic,BlochStrongSecrecy}.

\paragraph{Friendly Jamming}
The idea of friendly jamming has been used in~\cite{negi2005secret} to aid a transmitter-receiver pair in protecting a point-to-point transmission from a passive eavesdropper. Distributed and centralized beamforming techniques are used so that the jamming signal impacts the signal-to-noise ratio at the eavesdropper but not at the legitimate receiver. Several more recent works (cf., e.g., \cite{vilela2010friendly,vilela2011wireless,stanojev2012improving}) have expanded upon this idea and refined the friendly jamming techniques. In the context of two-way wiretap channels,~\cite{tekin2008general,pierrot2011strongly,he2013role} use cooperative jamming, in which the transmitter/receiver nodes add artificial noise to their wiretap-encoded messages. In~\cite{pierrot2011strongly}, channel resolvability is used to prove strong secrecy guarantees for such schemes. To the best of our knowledge, there are no prior works which use jamming to protect \gls{ota} computation against eavesdropping.

\paragraph{Physical Layer Security}
The concept of information theoretic secrecy was introduced in~\cite{ShannonSecrecy} and the wiretap channel model together with a weaker, but more tractable notion of secrecy was introduced in~\cite{WynerWiretap}. Based on this, various stronger secrecy notions have been introduced and investigated (e.g., \cite{MaurerStrongSecret,HouEffectiveSecrecy,BellareSemantic}). All of these existing works investigate how digitally coded transmissions can be protected against eavesdropping, while in the present work, we focus on uncoded analog transmissions over multiple-access channels.

\paragraph{Computational Wiretap Channels}
\cite{doliveira2018computational,bassi2019mutual} study a system model in which a function computation is to be protected from an eavesdropper. Contrary to this work, there is only one transmitter, and the eavesdropper has the same channel output as the legitimate receiver. The security guarantee hinges upon the eavesdropper wanting to compute a function that is different from the receiver's intended function, and one key application that is noted by the authors is therefore information-theoretic privacy.

\subsection{Summary of the Main Contributions and Outline}
\label{sec:contributions}
The main contributions of this paper are as follows:
\begin{enumerate}
 \item \label{item:secrecy} We propose a novel framework and result for incorporating security considerations into the \gls{ota} computation of an arithmetic average over an \gls{awgn} channel. In this framework, a friendly jammer is included in the system which deteriorates the eavesdropper's SNR while not significantly impacting the legitimate receiver's ability to obtain an approximation of the function value which is to be \gls{ota} computed.
 \item In order to approach this problem, we observe a connection between the secure \gls{ota} computation problem and the problems of compound channel coding and channel resolvability for point-to-point channels. This connection is not dependent on the \gls{awgn} channel model and may therefore be useful also to establish results for more general channel models.
 \item We prove a theorem on compound channel coding for continuous alphabets. It is a generalization of the result of the part of~\cite{root1968capacity} which considers finite-dimensional Gaussian channels, and we can consequently recover this result as a special case.
\end{enumerate}

In Section~\ref{sec:secure-ota-c-awgn}, we state and prove our main result about \gls{ota} computation of an arithmetic mean over an \gls{awgn} channel. Part of the proof relies on technical results from later sections and is therefore deferred to Section~\ref{sec:general-channel}. In Section~\ref{sec:compound}, we state and prove the point-to-point compound channel coding theorem that is required in the following section. In Section~\ref{sec:general-channel}, we give the full details of the connection between the secure \gls{ota} computation problem, compound channel coding and channel resolvability that is used to establish the result of Section~\ref{sec:secure-ota-c-awgn}. Section~\ref{sec:conclusion} concludes the paper and states open questions for future research.

Throughout the paper, we define notation where it is first used. For the reader's convenience, a summary of notational symbols can be found in Fig.~\ref{fig:symbols}.

\begin{figure*}
\renewcommand{\glossarysection}[2][]{}
\glsfindwidesttoplevelname
\renewcommand{\glstreenamefmt}{}
\printnoidxglossary[sort=use, type=main, style=alttree]
\caption{Table of symbols.}
\label{fig:symbols}
\end{figure*}

\section{System Model and Main Result}
\label{sec:secure-ota-c-awgn}
In this section, we introduce the detailed system and channel model, and then proceed to state and discuss the main result of this paper. The part of the proof that requires the technical tools of later sections is only sketched here, while the full technical details are deferred to Section~\ref{sec:general-channel}.

\subsection{Distributed Function Approximation with Jamming (DFA-J)}
\label{sec:dfaj}
In the following, we introduce the system model for \acrshort{dfaj} which is an extension of the model used in~\cite{bjelakovic2019distributed}.

Let $\analogMessageAlphabet{1}, \dots, \analogMessageAlphabet{\aliceNum}$ be measurable spaces. The goal is to approximate functions \gls{objectiveFunction} over a multiple-access channel \gls{channel} with measurable input alphabets \gls{channelInAlph} and a measurable output alphabet \gls{channelOutAlph} in a distributed setting. An admissible \emph{\acrfull{dfa}} scheme for $\objectiveFunction: \analogMessageAlphabet{1} \times \ldots \times \analogMessageAlphabet{\aliceNum}\to\reals$ for $\blocklength$ channel uses is a pair $(\alicePreproc{}^\blocklength, \bobPostproc^\blocklength)$, consisting of:

\begin{enumerate}
\item A pre-processing function \gls{alicePreproc} for the transmitters \gls{transmitters}, where each $\alicePreproc{\aliceIndex}^\blocklength$ is of the form
\[
\alicePreproc{\aliceIndex}^\blocklength(\analogMessageAlphabetElement{\aliceIndex})
=
(
  \alicePreprocElement_{\aliceIndex,\blockIndex}(
    \analogMessageAlphabetElement{\aliceIndex},
    \alicePreprocRandomness_\aliceIndex(\blockIndex)
  )
)_{\blockIndex=1}^{\blocklength}\in \inputAlphabet_\aliceIndex^{\blocklength}
\]
with i.i.d. random variables $\alicePreprocRandomness_\aliceIndex(1), \ldots, \alicePreprocRandomness_\aliceIndex(\blocklength)$ and a measurable map 
\[
(
  \analogMessageAlphabetElement{\aliceIndex},
  \alicePreprocInnerVariable_1,
  \ldots,
  \alicePreprocInnerVariable_\blocklength
)
\mapsto
(
  \alicePreprocElement_{\aliceIndex,\blockIndex}(
    \analogMessageAlphabetElement{\aliceIndex},
    \alicePreprocInnerVariable_\blockIndex 
  )
)_{\blockIndex=1}^{\blocklength}\in \inputAlphabet_\aliceIndex^{\blocklength}.
\]
\item A post-processing function \gls{bobPostproc} for the receiver \gls{receiver}: The receiver is allowed to apply a measurable recovery function $\bobPostproc^\blocklength: \inputAlphabet^\blocklength \to \reals$ upon observing the output of the channel.
\end{enumerate}
Note that this, contrary to the system model in~\cite{bjelakovic2019distributed}, imposes the restriction that the pre-processing is i.i.d. across channel uses, which will be crucial for the security extension to the approximation scheme. Although this definition of admissible schemes is slightly less general, the scheme proposed in~\cite{bjelakovic2019distributed} still is an admissible scheme even in this stricter sense.

So in order to approximate $\objectiveFunction$, the transmitters apply their pre-processing maps to
\[
(\analogMessageAlphabetElement{1},\ldots, \analogMessageAlphabetElement{\aliceNum})
\in
\analogMessageAlphabet{1}\times \ldots \times \analogMessageAlphabet{\aliceNum}
\]
resulting in $\alicePreproc{1}^\blocklength (\analogMessageAlphabetElement{1}), \ldots, \alicePreproc{\aliceNum}^\blocklength(\analogMessageAlphabetElement{\aliceNum})$, which are sent over the channel. 
The receiver observes the output of the channel and applies the recovery map $\bobPostproc^\blocklength$. The whole process defines an estimate \gls{objectiveEstimator} of $\objectiveFunction$.

Depending on the application at hand, there are multiple ways in which the quality of the estimate $\objectiveEstimator$ can be quantified.

\begin{definition}
\label{def:approximation-quality}
\begin{enumerate}
 \item \label{item:tail-approximation} Let $\functionApproximationError,\tail \in (0,1)$ and $\objectiveFunction$ be given. We say that $\objectiveFunction$ is $\functionApproximationError$-approximated after $\blocklength$ channel uses with confidence level $\tail$ if there is an approximation scheme
$(\alicePreproc{}^\blocklength,\bobPostproc^\blocklength)$ such that the resulting estimate $\objectiveEstimator$ of $\objectiveFunction$ satisfies
\[
\Probability\left(
  \absolute{
    \objectiveEstimator
    -
    \objectiveFunction(\analogMessageAlphabetElement{1}, \dots, \analogMessageAlphabetElement{\aliceNum})
  }
  \ge
  \functionApproximationError
\right)
\le
\tail
\]
for all $s^K:= (s_1, \ldots , s_K) \in \analogMessageAlphabet{1}\times \ldots \times \analogMessageAlphabet{K} $.
 \item We say that $\objectiveFunction$ is $\mseError$-\acrshort{mse}-approximated if we have
\[
\Expectation\left(
  \left(
    \objectiveEstimator
    -
    \objectiveFunction(\analogMessageAlphabetElement{1}, \dots, \analogMessageAlphabetElement{\aliceNum})
  \right)^2
\right)
\leq
\mseError,
\]
where the expectation is over the joint distribution of $\stateSpaceElement_1, \dots, \stateSpaceElement_\aliceNum$ and $\objectiveEstimator$ which is induced by the \gls{dfa} scheme and the channel.
\end{enumerate}
\end{definition}

\begin{figure}
\centering
\begin{tikzpicture}
\node[rectangle,draw,minimum width=.8cm]            (inK)          at (0,0) {$\alice{\aliceNum}$};
\node[rectangle,draw,minimum width=.8cm]            (in2)          at (0,2) {$\alice{2}$};
\node[rectangle,draw,minimum width=.8cm]            (in1)          at (0,3) {$\alice{1}$};
\node[rectangle,draw]                               (encK)         at (2,0)  {$\alicePreproc{\aliceNum}^\blocklength$};
\node[rectangle,draw]                               (enc2)         at (2,2)  {$\alicePreproc{2}^\blocklength$};
\node[rectangle,draw]                               (enc1)         at (2,3)  {$\alicePreproc{1}^\blocklength$};
\node[rectangle,draw,minimum width=.8cm]            (jammer)       at (0,-1) {$\jammer$};
\node[rectangle,minimum height=4.5cm,align=center,draw] (channel) at (4,1) {$\blocklength$-fold\\channel};
\node[rectangle,draw]                               (dec)          at (6,2.5) {$\bobPostproc^\blocklength$};
\node[rectangle,draw]                               (out)          at (8,2.5) {$\bob$};
\node[rectangle,draw]                               (eve_out)      at (8,-.5) {$\eve$};
\node                                               (vdots)        at (.8,1.2) {\Shortstack{. . . . . .}};

\draw[->] (inK) -- (encK) node[midway,above] {$\analogMessageAlphabetElement{\aliceNum}$};
\draw[->] (in2) -- (enc2) node[midway,above] {$\analogMessageAlphabetElement{2}$};
\draw[->] (in1) -- (enc1) node[midway,above] {$\analogMessageAlphabetElement{1}$};

\draw[->] (encK) -- (encK-|channel.west) node[midway,above] {$\processedAnalogMessageRV{\aliceNum}^\blocklength$};
\draw[->] (enc2) -- (enc2-|channel.west) node[midway,above] {$\processedAnalogMessageRV{2}^\blocklength$};
\draw[->] (enc1) -- (enc1-|channel.west) node[midway,above] {$\processedAnalogMessageRV{1}^\blocklength$};
\draw[->] (jammer) -- (jammer-|channel.west) node[midway,above] {$\inputRV^\blocklength$};

\draw[->] (dec-|channel.east) -- (dec) node[midway,above] {$\outputRV^\blocklength$};
\draw[->] (dec) -- (out) node[midway,above] {$\objectiveEstimator$};

\draw[->] (eve_out-|channel.east) -- (eve_out) node[midway,above] {$\eveOutputRV^\blocklength$};

\draw[red] (1.2,3.7) -- (4.9,3.7) node[midway,above]{$\effectiveChannelBob^\blocklength$} -- (4.9,0) -- (4,0) -- (4,-1.4) -- (1.2,-1.4) -- (1.2,3.7);
\draw[cyan] (1.3,3.6) -- (4,3.6) -- (4,2) -- (4.8,2) -- (4.8,-1.5) -- (1.3,-1.5) node[midway, below]{$\effectiveChannelEve^\blocklength$} -- (1.3,3.6);
\end{tikzpicture}
\caption{System model for \gls{dfaj} described in Section~\ref{sec:dfaj}.}
\label{fig:systemmodel}
\end{figure}

In this work, we extend the \gls{dfa} system model adding an attacker \gls{eavesdropper} which attempts to eavesdrop on the transmission and wants to gain knowledge about $\analogMessageAlphabetElement{1}, \dots, \analogMessageAlphabetElement{\aliceNum}$. At each channel use, $\eve$ observes an output \gls{eveOutputRV} ranging over the eavesdropper's alphabet $\gls{eveAlph}$. As a counter-measure, we add a friendly jammer \gls{jammer} which transmits some jamming sequence $\gls{inputRV}^\blocklength$ with the objective to prevent $\eve$ from obtaining information while still allowing $\bob$ to obtain a good estimate of $\objectiveFunction(\analogMessageAlphabetElement{1}, \dots, \analogMessageAlphabetElement{\aliceNum})$. This extended model is depicted in Fig.~\ref{fig:systemmodel}.

\begin{definition}
\label{def:dfa-jamming}
A scheme for \acrfull{dfaj} consists of:
\begin{itemize}
 \item a \gls{dfa} scheme; i.e., pre- and post-processing schemes, and
 \item a jamming strategy given by a probability distribution on $\gls{inAlph}^\blocklength$.
\end{itemize}
We say that a \gls{dfaj} scheme \emph{allows reconstruction of the jamming signal with probability $\errorProb$} if there is a decoding function $\jammingDecoder: \outputAlphabet^\blocklength \rightarrow \inputAlphabet^\blocklength$ such that
\[
\sup_{\stateSpaceElement_1 \in \stateSpace_1, \dots, \stateSpaceElement_\aliceNum \in \stateSpace_\aliceNum}
\Probability_{\stateSpaceElement_1, \dots, \stateSpaceElement_\aliceNum}\left(
  \jammingDecoder(\gls{outputRV}^\blocklength) \neq \inputRV^\blocklength
\right)
\leq
\errorProb
\]
and $\errorProb$ is the smallest number with this property.
\end{definition}

The objective is to find admissible pre- and post-processing strategies as well as a jamming strategy such that $\bob$ can obtain a good approximation $\objectiveEstimator$ of $\objectiveFunction(\analogMessageAlphabetElement{1}, \dots, \analogMessageAlphabetElement{\aliceNum})$ while bounding the usefulness of any information that $\eve$ can obtain about $\analogMessageAlphabetElement{1}, \dots, \analogMessageAlphabetElement{\aliceNum}$.

Together with the channel, a \gls{dfaj} scheme induces a probability distribution $\eveOutputDistributionGeneral{\stateSpaceElement_1, \dots, \stateSpaceElement_\aliceNum}$ on $\eveAlphabet^\blocklength$ for each $(\stateSpaceElement_1, \dots, \stateSpaceElement_\aliceNum) \in \stateSpace_1 \times \ldots \times \stateSpace_\aliceNum$. How secure the scheme is depends on how strongly $\eveOutputDistributionGeneral{\stateSpaceElement_1, \dots, \stateSpaceElement_\aliceNum}$ depends on $\stateSpaceElement_1, \dots, \stateSpaceElement_\aliceNum$. In the following, we formalize this notion.

Any measurable function $\eveObjective: \stateSpace_1 \times \ldots \times \stateSpace_\aliceNum \rightarrow \eveObjectiveRange$, where $\eveObjectiveRange$ is a measurable space, is called an \emph{eavesdropper's objective}.
\begin{definition}
\label{def:security}
\begin{enumerate}
\item Given a real number $\securityError \geq 0$, we say that a \gls{dfaj} scheme is \emph{$\securityError$-semantically secure} if there is a probability measure $\uselessDistribution$ on $\eveAlphabet^\blocklength$ such that for all $(\stateSpaceElement_1, \dots, \stateSpaceElement_\aliceNum) \in \stateSpace_1 \times \ldots \times \stateSpace_\aliceNum$,
\begin{align}
\label{eq:semantic-security}
\totalvariationlr{
  \eveOutputDistributionGeneral{\stateSpaceElement_1, \dots, \stateSpaceElement_\aliceNum}
  -
  \uselessDistribution
}
\leq
\securityError,
\end{align}
where \gls{totalvariation} denotes the total variation norm on finite signed measures. The probability measure $\uselessDistribution$ can be arbitrary here except for the requirement that is is independent of $\stateSpaceElement_1, \dots, \stateSpaceElement_\aliceNum$.

\item Let $\eveObjective: \stateSpace_1 \times \ldots \times \stateSpace_\aliceNum \rightarrow \eveObjectiveRange$, where $\eveObjectiveRange \subseteq \reals$ is measurable and bounded, be an eavesdropper's objective. Let $\secLoss \geq 0$ be a real number. We say that a \gls{dfaj} scheme is \emph{$(\eveObjective,\secLoss)$-\acrshort{mse}-secure} if under a uniform distribution of $\eveObjective(\stateSpaceElement_1, \dots, \stateSpaceElement_\aliceNum)$, for every estimator $\eveDecoder: \eveAlphabet^\blocklength \rightarrow \eveObjectiveRange$, we have
\begin{align*}
\Expectation\left(
  \big(
    \eveDecoder(\eveOutputRV^\blocklength)
    -
    \eveObjective(\stateSpaceElement_1, \dots, \stateSpaceElement_\aliceNum)
  \big)^2
\right)
\geq
\secLoss,
\end{align*}
where the expectation is over the joint distribution of $\stateSpaceElement_1, \dots, \stateSpaceElement_\aliceNum$ and $\outputRV^\blocklength$ which results from the application of the \gls{dfaj} scheme and the channel.
\end{enumerate}
\end{definition}
If a scheme is $(\eveObjective,\secLoss)$-\acrshort{mse}-secure, it means that any estimator used by the eavesdropper has a \gls{mse} of no less than $\secLoss$ in the case of a uniformly distributed objective. This means that $\stateSpaceElement_1, \dots, \stateSpaceElement_\aliceNum$ are randomly distributed in such a way that $\eveObjective(\stateSpaceElement_1, \dots, \stateSpaceElement_\aliceNum)$ follows a uniform distribution which implies that $\stateSpaceElement_1, \dots, \stateSpaceElement_\aliceNum$ cannot be i.i.d. uniform in general. Our motivation for assuming a uniform objective instead of uniform i.i.d. $\stateSpaceElement_1, \dots, \stateSpaceElement_\aliceNum$ is the following: For many choices of $\eveObjective$ that we consider relevant (and in particular the computation of arithmetic mean on which we will focus in this paper), the function $\eveObjective(\stateSpaceElement_1, \dots, \stateSpaceElement_\aliceNum)$ tends to concentrate at its expectation for large values of $\aliceNum$ if $\stateSpaceElement_1, \dots, \stateSpaceElement_\aliceNum$ are independent. Since the statistical information is assumed to be known at the eavesdropper, it could therefore achieve low \gls{mse} even without intercepting any channel output. However, only the results for the \gls{awgn} channel in this section rely on such an assumption of uniformity while the technical results in Section~\ref{sec:general-channel} assume that $\stateSpaceElement_1, \dots, \stateSpaceElement_\aliceNum$ are deterministic, but arbitrary. This means that the results of Section~\ref{sec:general-channel} specialize to arbitrary stochastic models of $\stateSpaceElement_1, \dots, \stateSpaceElement_\aliceNum$, and in particular to the uniform and non-independent case. Therefore, there is hope that this somewhat restrictive assumption could be lifted in future research.

In a sense made explicit by the following lemma, semantic security is the stronger of the two security notions from Definition~\ref{def:security}.
\begin{lemma}
\label{lemma:semanticstronger}
Let $\eveObjectiveRange := [\intervalLowerBound, \intervalUpperBound]$, let $\eveObjective: \stateSpace_1 \times \ldots \times \stateSpace_\aliceNum \rightarrow \eveObjectiveRange$ be an eavesdropper's objective and $\securityError \geq 0$ a real number. Assume that $\eveObjective(\stateSpaceElement_1, \dots, \stateSpaceElement_\aliceNum)$ is uniformly distributed on $\eveObjectiveRange$.

Then, any \gls{dfaj} scheme that is $\securityError$-semantically secure is also $(\eveObjective,(1/12-\securityError)(\intervalUpperBound-\intervalLowerBound)^2)$-\acrshort{mse}-secure.
\end{lemma}
\begin{proof}
Let $\eveDecoder: \eveAlphabet^\blocklength \rightarrow \eveObjectiveRange$. Then, assuming the distribution of $\stateSpaceElement_1, \dots, \stateSpaceElement_\aliceNum$ corresponds to a uniform distribution on $[\intervalLowerBound,\intervalUpperBound]$ of $\eveObjective(\stateSpaceElement_1, \dots, \stateSpaceElement_\aliceNum)$, we have
\begin{align*}
\Expectation_{\stateSpaceElement_1, \dots, \stateSpaceElement_\aliceNum}
\Expectation_{\eveOutputDistributionGeneral{\stateSpaceElement_1, \dots, \stateSpaceElement_\aliceNum}}
  \left(
    \left(
      \eveDecoder(\eveOutputRV^\blocklength)
      -
      \eveObjective(\stateSpaceElement_1, \dots, \stateSpaceElement_\aliceNum)
    \right)^2
  \right)
&=
\Expectation_{\stateSpaceElement_1, \dots, \stateSpaceElement_\aliceNum}
\int_0^{(\intervalUpperBound-\intervalLowerBound)^2}
  \eveOutputDistributionGeneral{\stateSpaceElement_1, \dots, \stateSpaceElement_\aliceNum}
    \bigg(
      \Big(
        \eveDecoder(\eveOutputRV^\blocklength)
        -
        \eveObjective(\stateSpaceElement_1, \dots, \stateSpaceElement_\aliceNum)
      \Big)^2
      >
      \generalReal
    \bigg)
d \generalReal
\\
\overset{(\ref{eq:semantic-security})}&{\geq}
\Expectation_{\stateSpaceElement_1, \dots, \stateSpaceElement_\aliceNum}
\int_0^{(\intervalUpperBound-\intervalLowerBound)^2}\bigg(
  \uselessDistribution
    \Big(
      \left(
        \eveDecoder(\eveOutputRV^\blocklength)
        -
        \eveObjective(\stateSpaceElement_1, \dots, \stateSpaceElement_\aliceNum)
      \right)^2
      >
      \generalReal
    \Big)
  -
  \securityError
\bigg)
d \generalReal
\\
&=
\Expectation_{\stateSpaceElement_1, \dots, \stateSpaceElement_\aliceNum}
\Expectation_\uselessDistribution
  \left(
    \left(
      \eveDecoder(\eveOutputRV^\blocklength)
      -
      \eveObjective(\stateSpaceElement_1, \dots, \stateSpaceElement_\aliceNum)
    \right)^2
  \right)
-\securityError(\intervalUpperBound-\intervalLowerBound)^2
\\
\overset{(a)}&{\geq}
\Expectation_{\stateSpaceElement_1, \dots, \stateSpaceElement_\aliceNum}
\Expectation_\uselessDistribution
  \left(
    \left(
      \Expectation_\uselessDistribution
        \eveObjective(\stateSpaceElement_1, \dots, \stateSpaceElement_\aliceNum)
      -
      \eveObjective(\stateSpaceElement_1, \dots, \stateSpaceElement_\aliceNum)
    \right)^2
  \right)
-\securityError(\intervalUpperBound-\intervalLowerBound)^2
\\
\overset{(b)}&{=}
\left(
  \frac{1}{12} - \securityError
\right)
(\intervalUpperBound-\intervalLowerBound)^2,
\end{align*}
where the step (a) is because under $\uselessDistribution$, $\eveOutputRV^\blocklength$ is independent of $\stateSpaceElement_1, \dots, \stateSpaceElement_\aliceNum$, and therefore the \gls{mse} is minimized by the mean of $\eveObjective(\stateSpaceElement_1, \dots, \stateSpaceElement_\aliceNum)$. Finally, step (b) follows from the assumption that $\eveObjective(\stateSpaceElement_1, \dots, \stateSpaceElement_\aliceNum)$ is uniform on $[\intervalLowerBound,\intervalUpperBound]$, and hence its variance is known (see, e.g.,~\cite[Example 3.4]{bertsekas2008introduction}).
\end{proof}

\subsection{\texorpdfstring{\gls{awgn}}{AWGN} Channel Model}
\label{sec:AWGN}
In general, the approximation scheme even without an eavesdropper or jammer highly depends on the particular structure of the channel and $\objectiveFunction$. In this work, we focus on the computation of arithmetic means over \gls{awgn} channels (although some of the technical results we develop for this purpose hold under more general conditions). Specifically, the objective function is given as
\begin{align}
\label{eq:awgnObjective}
\objectiveFunction:
(\analogMessageAlphabetElement{1}, \dots, \analogMessageAlphabetElement{\aliceNum})
\mapsto
\frac{1}{\aliceNum}\sum_{\aliceIndex=1}^\aliceNum \analogMessageAlphabetElement{\aliceIndex},
\end{align}
where for all $\aliceIndex$, $\analogMessageAlphabet{\aliceIndex} = [-1,1]$. The channel is given by
\begin{align}
\label{eq:awgnbob}
\outputRV
&=
\fadingAliceBob \sum_{\aliceIndex=1}^\aliceNum \gls{processedAnalogMessageRV}
+
\fadingJammerBob \inputRV
+
\noiseBob
\\
\label{eq:awgneve}
\eveOutputRV
&=
\fadingAliceEve \sum_{\aliceIndex=1}^\aliceNum \processedAnalogMessageRV{\aliceIndex}
+
\fadingJammerEve \inputRV
+
\noiseEve.
\end{align}
$\noiseBob$ is centered normal with variance $\stddevBob^2$ and $\noiseEve$ is centered normal with variance $\stddevEve^2$. The real channel coefficients $\fadingAliceBob, \fadingJammerBob, \fadingAliceEve, \fadingJammerEve$ are assumed deterministic and known everywhere. The channel is used $\blocklength$ times with transmitter input sequences $\processedAnalogMessageRV{\aliceIndex}^\blocklength$ for each $\aliceIndex \in \{1, \dots, \aliceNum\}$ and $\inputRV^\blocklength$ for the jammer. The input sequences are subject to the average power constraints
\[
\frac{1}{\blocklength}
\sum_{\blockIndex=1}^\blocklength
  \left(\processedAnalogMessageRV{\aliceIndex,\blockIndex}\right)^2
\leq
\alicePowerConstraint,
~~
\frac{1}{\blocklength}
\sum_{\blockIndex=1}^\blocklength
  (\inputRV_\blockIndex)^2
\leq
\jammerPowerConstraint.
\]

\subsection{Main Result}
\begin{figure}
\centering
\begin{tikzpicture}
\begin{axis}[
  xlabel={$\stddev^2$},
  ylabel={$\mseFunctionModified(\stddev)$},
  ymax=.4,
]
\addplot[mark=none] table [x=var, y=mse, col sep=comma] {mse_function_plot.csv};
\draw[dashed] ({rel axis cs:1,0}|-{axis cs:0,0.333}) -- ({rel axis cs:0,0}|-{axis cs:0,0.333});
\end{axis}
\end{tikzpicture}
\caption{Illustration of the \gls{mse} guarantees of Theorem~\ref{theorem:awgn-scheme-real}. The dashed line is the \gls{mse} which an eavesdropper would have without any received signal (i.e., guessing the middle of the interval).}
\label{fig:mseFunction}
\end{figure}
\begin{theorem}
\label{theorem:awgn-scheme-real}
Consider the wiretap channel given by (\ref{eq:awgnbob}) and (\ref{eq:awgneve}) and the objective function $\objectiveFunction$ defined in (\ref{eq:awgnObjective}). Assume that $\stateSpaceElement_1, \dots, \stateSpaceElement_\aliceNum$ are distributed in such a way that $\objectiveFunction(\stateSpaceElement_1, \dots, \stateSpaceElement_\aliceNum)$ is uniform in $[-1,1]$. Define
\begin{equation}
\label{eq:effective-noise}
\stddevBobEff^2
:=
\frac{\stddevBob^2}{\fadingAliceBob^2 \aliceNum^2 \alicePowerConstraint}
,~~
\stddevEveEff^2
:=
\frac{\stddevEve^2 + \fadingJammerEve^2\jammerPowerConstraint}
     {\fadingAliceEve^2 \aliceNum^2 \alicePowerConstraint}
\end{equation}
and
\begin{equation}
\label{eq:mseFunction}
\mseFunction(\generalReal)
:=
\int_0^\generalReal
  \int_{-\infty}^\infty
    \left(
      \generalRealThree
      +
      \frac{\stdnormalpdf(-\generalRealThree) - \stdnormalpdf(\generalReal - \generalRealThree)}
           {\stdnormalcdf(\generalReal - \generalRealThree) - \stdnormalcdf(-\generalRealThree)}
      -
      \generalRealTwo
    \right)^2
    \cdot
    \frac{1}{\generalReal}
    \stdnormalpdf(\generalRealTwo - \generalRealThree)
  d \generalRealThree
d \generalRealTwo,
\end{equation}
where \gls{stdnormalpdf} denotes the probability density function and \gls{stdnormalcdf} the cumulative distribution function of the standard normal distribution, respectively.
Assume that the channel from $\jammer$ to $\bob$ is stronger than the channel from $\jammer$ to $\eve$, i.e., $\fadingJammerBob/\stddevBob > \fadingJammerEve/\stddevEve$. Then there is a \gls{dfaj} scheme and there are constants $\finalconstOne, \finalconstTwo > 0$ such that for sufficiently large $\blocklength$, the following hold:
\begin{itemize}
 \item $\bob$ can approximate the objective function $\objectiveFunction(\analogMessageAlphabetElement{1}, \dots, \analogMessageAlphabetElement{\aliceNum})$ with a \gls{mse} not exceeding
 \begin{align}
 \label{eq:awgn-scheme-real-bob}
 \stddevBobEff^2 \mseFunction\left(\frac{2}{\stddevBobEff}\right)
 +
 \exp(-\blocklength\finalconstOne)
 \end{align}
 \item The scheme is $(\objectiveFunction,\secLoss)$-\acrshort{mse}-secure, where
 \begin{align}
 \label{eq:awgn-scheme-real-eve}
 \secLoss
 :=
 \stddevEveEff^2 \mseFunction\left(\frac{2}{\stddevEveEff}\right)
 -
 \exp(-\blocklength\finalconstTwo).
 \end{align}
\end{itemize}
\end{theorem}

Here and for the remainder of the paper, the functions \gls{exp} and \gls{log} both use Euler's number as their basis. In order to understand the impact of the function $\mseFunction$ that appears in the security and approximation guarantees, we refer the reader to the plot of the function $\mseFunctionModified: \stddev \mapsto \stddev^2 \mseFunction(2/\stddev)$ in Fig.~\ref{fig:mseFunction}.

The proof of Theorem~\ref{theorem:awgn-scheme-real} is divided into two main steps:
\begin{itemize}
 \item In Lemma~\ref{lemma:awgn-scheme-ideal}, we examine a scenario where the jammer employs white noise as a jamming signal. The instantaneous channel inputs of the jammer are known to $\bob$ but not to $\eve$. We establish \gls{mse}-security and \gls{mse}-approximation guarantees for this case.
 \item We show that there is a jamming strategy induced by a suitable code book which guarantees that $\bob$ can reconstruct the jammer's channel input with high probability, but for $\eve$, the jamming strategy resembles the case of white noise. The \gls{mse}-security and \gls{mse}-approximation guarantees then follow via a comparison to the case of Lemma~\ref{lemma:awgn-scheme-ideal}. This part of the proof is deferred to the end of Section~\ref{sec:general-channel}, after all necessary technical ingredients are introduced. The reconstruction of the jamming signal at $\bob$ is based on the compound channel coding result in Section~\ref{sec:compound}, and the resemblance of white noise at $\eve$ is based on a known channel resolvability result.
\end{itemize}

\begin{lemma}
\label{lemma:awgn-scheme-ideal}
Consider the wiretap channel given by (\ref{eq:awgnbob}) and (\ref{eq:awgneve}) and the objective function $\objectiveFunction$ defined in (\ref{eq:awgnObjective}). Assume that $\objectiveFunction(\stateSpaceElement_1, \dots, \stateSpaceElement_\aliceNum)$ is uniformly distributed on $\eveObjectiveRange$. Furthermore, suppose that the jamming sequence $\inputRV^\blocklength$ is i.i.d. centered Gaussian with variance $\jammerPowerConstraint$ and that it is known at the legitimate receiver while the eavesdropper has only statistical information. Then, under the definitions (\ref{eq:effective-noise}) and (\ref{eq:mseFunction}), there is a \gls{dfaj} scheme which is $(\objectiveFunction, \stddevEveEff^2 \mseFunction(2/\stddevEveEff))$-\acrshort{mse}-secure and $(\stddevBobEff^2 \mseFunction(2/\stddevBobEff))$-\acrshort{mse}-approximates $\objectiveFunction$ at the receiver.
\end{lemma}

The proof of Lemma~\ref{lemma:awgn-scheme-ideal} is based on a few facts from statistics. We only state the relevant lemmas here. Since they are straightforward consequences of elementary known facts about minimum \gls{mse} estimators, we expect that they are folklore in the field of statistics, however, we are not aware of any reference that states these facts in the form in which we need them for our proof. Therefore, we include the proofs of the following two lemmas in the appendix for the sake of completeness.

\begin{lemma}
\label{lemma:bayes-estimator}
If $\generalrvOne$ is distributed uniformly on $[\intervalLowerBound, \intervalUpperBound]$ and, conditioned on $\generalrvOne$, $\generalrvTwo_1, \dots, \generalrvTwo_\blocklength$ are i.i.d. normally distributed with mean $\generalrvOne$ and variance $\stddev^2$, then the minimum \gls{mse} estimator for estimating $\generalrvOne$ from the observations $\generalrvTwo_1, \dots, \generalrvTwo_\blocklength$ is
\begin{align}
\label{eq:bayes-estimator}
\hat{\generalrvOne}
:=
\bar{\generalrvTwo}
+
\frac{\stddev}{\sqrt{\blocklength}}
\cdot
\frac{\stdnormalpdf\left(\frac{\intervalLowerBound-\bar{\generalrvTwo}}{\stddev/\sqrt{\blocklength}}\right)
      -
      \stdnormalpdf\left(\frac{\intervalUpperBound-\bar{\generalrvTwo}}{\stddev/\sqrt{\blocklength}}\right)}
     {\stdnormalcdf\left(\frac{\intervalUpperBound-\bar{\generalrvTwo}}{\stddev/\sqrt{\blocklength}}\right)
      -
      \stdnormalcdf\left(\frac{\intervalLowerBound-\bar{\generalrvTwo}}{\stddev/\sqrt{\blocklength}}\right)},
\end{align}
where $\bar{\generalrvTwo} := \frac{1}{\blocklength}\sum_{\blockIndex=1}^{\blocklength} \generalrvTwo_\blockIndex$.
\end{lemma}

\begin{lemma}
\label{lemma:bayes-estimator-mse}
Under the assumptions of Lemma~\ref{lemma:bayes-estimator}, the estimator $\hat{\generalrvOne}$ satisfies
\[
\Expectation\left(
  \left(
    \generalrvOne
    -
    \hat{\generalrvOne}
  \right)^2
\right)
=
\frac{\stddev^2}{\blocklength}
\mseFunction\left(
  \frac{\intervalUpperBound - \intervalLowerBound}
       {\stddev/\sqrt{\blocklength}}
\right),
\]
with $\mseFunction$ as defined in (\ref{eq:mseFunction}).
\end{lemma}

\begin{proof}[Proof of Lemma~\ref{lemma:awgn-scheme-ideal}]
We use the following transmission strategy:
\begin{align}
\label{eq:awgnJammingStrategy}
\inputRV_\blockIndex:~ &\text{Gaussian with mean } 0 \text{ and variance } \jammerPowerConstraint,
\\
\label{eq:awgnPreproc}
\alicePreproc{\aliceIndex}^\blocklength:~ &\analogMessageAlphabetElement{\aliceIndex} \mapsto (1, \dots, 1) \cdot \analogMessageAlphabetElement{\aliceIndex} \sqrt{\frac{\alicePowerConstraint}{\blocklength}}
\end{align}
The receiver can obtain
\begin{align*}
\outputRV_\blockIndex'
&:=
\frac{\outputRV_\blockIndex - \fadingJammerBob \inputRV_\blockIndex}
     {\fadingAliceBob \aliceNum \sqrt{\alicePowerConstraint/\blocklength}}
\\
&=
\frac{
        \fadingAliceBob \sum_{\aliceIndex=1}^\aliceNum \processedAnalogMessageRV{\aliceIndex}
        +
        \noiseBobComponent{\blockIndex}
     }
     {\fadingAliceBob \aliceNum \sqrt{\alicePowerConstraint/\blocklength}}
\\
&=
\frac{
        \fadingAliceBob \sum_{\aliceIndex=1}^\aliceNum \analogMessageAlphabetElement{\aliceIndex} \sqrt{\alicePowerConstraint/\blocklength}
        +
        \noiseBobComponent{\blockIndex}
     }
     {\fadingAliceBob \aliceNum \sqrt{\alicePowerConstraint/\blocklength}}
\\
&=
\objectiveFunction(\analogMessageAlphabetElement{1}, \dots, \analogMessageAlphabetElement{\aliceNum})
+
\frac{
        \noiseBobComponent{\blockIndex}
     }
     {\fadingAliceBob \aliceNum \sqrt{\alicePowerConstraint/\blocklength}}.
\end{align*}
We define the post-processing operation $\bobPostproc^\blocklength$ at the receiver as first obtaining $\outputRV_1', \dots, \outputRV_\blocklength'$ and then computing the minimum \gls{mse} estimator from Lemma~\ref{lemma:bayes-estimator}. With this choice, Lemma~\ref{lemma:bayes-estimator-mse} yields the claimed reconstruction error guarantee.

On the other hand, the output at $\eve$ is given by
\begin{align*}
\eveOutputRV_\blockIndex
&\stackrel{(\ref{eq:awgneve})}{=}
\fadingAliceEve \sum_{\aliceIndex=1}^\aliceNum \processedAnalogMessageRV{\aliceIndex}
+
\fadingJammerEve \inputRV_\blockIndex
+
\noiseEveComponent{\blockIndex}
\\
&\stackrel{(\ref{eq:awgnPreproc})}{=}
\fadingAliceEve \sum_{\aliceIndex=1}^\aliceNum \analogMessageAlphabetElement{\aliceIndex} \sqrt{\alicePowerConstraint/\blocklength}
+
\fadingJammerEve \inputRV_\blockIndex
+
\noiseEveComponent{\blockIndex}
\\
&\stackrel{(\ref{eq:awgnObjective})}{=}
\objectiveFunction(\analogMessageAlphabetElement{1}, \dots, \analogMessageAlphabetElement{\aliceNum})
\cdot
\aliceNum \sqrt{\alicePowerConstraint/\blocklength} \fadingAliceEve
+
\fadingJammerEve \inputRV_\blockIndex
+
\noiseEveComponent{\blockIndex}.
\end{align*}
From this, Lemmas~\ref{lemma:bayes-estimator} and \ref{lemma:bayes-estimator-mse} yield the claimed \acrshort{mse}-security of the scheme.
\end{proof}

\subsection{Special case $\aliceNum=1$}
\label{sec:onetransmitter}
We conclude this section with a brief discussion of the important special case $\aliceNum=1$. While one of the main motivations of the methods developed in this paper is their scalability to large values of $\aliceNum$, the case of low values of $\aliceNum$ can also be interesting in many practical applications and be instructive to understand the nature of our results better.

For the special case of only a single transmitter ($\aliceNum=1$), the problem reduces to a point-to-point transmission of the real number $\objectiveFunction(\stateSpaceElement_1)$ in the presence of an eavesdropper and a friendly jammer. In our results in this paper, there is no assumption that $\aliceNum$ has to be large; in particular, they remain applicable also when $\aliceNum=1$. However, since in this case no function of \emph{distributed} values has to be computed over the channel, it is possible to separately source and channel encode $\objectiveFunction(\stateSpaceElement_1)$. After the source coding step has been performed, the remaining problem is very similar to jammer-aided secret communication as treated for instance in~\cite{negi2005secret,vilela2010friendly,vilela2011wireless}.

But although this approach is applicable to the same communication task, it is important to note that the way in which the friendly jammer has to be placed differs significantly. In the approach of this paper, the jamming signal has to be stronger at the legitimate receiver than it is at the eavesdropper. As long as this condition is satisfied, the legitimate receiver has the ability to almost completely cancel the jamming signal. This means that our method remains applicable even if the gap in terms of jammer signal strength between the legitimate receiver and the eavesdropper is relatively small. In~\cite{negi2005secret,vilela2010friendly,vilela2011wireless}, on the other hand, it is necessary that the jamming signal is stronger at the eavesdropper than it is at the legitimate receiver. Moreover, this gap between signal strengths has to be as large as possible since the jammer's signal strength at the legitimate receiver diminishes the capacity of the main channel. Therefore, our results in this case are more suitable for scenarios where is possible to assure a high jamming signal strength at the legitimate receiver while results from~\cite{negi2005secret,vilela2010friendly,vilela2011wireless} are more suitable in cases where all possible eavesdropper locations can be covered with strong jamming signals that have very low strength at the location of the legitimate receiver.

With respect to the open research questions given in Section~\ref{sec:conclusion}, we remark that methods from the literature can be used to achieve semantic security with slight adaptations; such a construction is for instance sketched in~\cite{utkovski2019learning}. We are not aware of practically feasible schemes that achieve semantic security, but we expect that weaker guarantees such as \gls{mse} security could be derived, e.g., for the approach given in~\cite{klinc2011ldpc}. In order to accommodate a friendly jammer in the system model, all of these approaches would need to be combined with the works on friendly jamming discussed above. Therefore, it would remain necessary to also have the assumption that the jamming signal is significantly stronger at the eavesdropper than it is at the legitimate receiver. For the case in which this assumption is reversed as in the present work, to the best of our knowledge these questions remain open even for $\aliceNum=1$.

\section{Coding for the Compound Channel}
\label{sec:compound}
In this section, we state and prove a coding result for compound channels with continuous alphabets. This result is used in the proof of Theorem~\ref{theorem:jammersystem} which in turn is a technical contribution needed to prove Theorem~\ref{theorem:awgn-scheme-real}. Although similar to results already available in the literature, it is slightly more general and may therefore also be of independent interest.

\subsection{System Model and Preliminary Definitions}
We begin with some preliminary notations and definitions. Given measures $\generalMeasureOne$ and $\generalMeasureTwo$, we say that $\generalMeasureOne$ is absolutely continuous with respect to $\generalMeasureTwo$, or \gls{absolutelyContinuous}, if all $\generalMeasureTwo$-null sets are also $\generalMeasureOne$-null sets. If we have $\generalMeasureOne \absolutelyContinuous \generalMeasureTwo$, then the Radon-Nikodym derivative \gls{RNDerivative} exists, which is an (up to a $\generalMeasureTwo$-null set) uniquely determined function with the property $\int_\generalSet \RNDerivative{\generalMeasureOne}{\generalMeasureTwo}d\generalMeasureTwo = \generalMeasureOne(\generalSet)$ for all measurable sets $\generalSet$.

For any channel $\channel$, we denote the joint input-output distribution under $\inputDistribution$ and $\channel$ by \gls{jointDist} and the marginal for $\outputAlphabet$ by \gls{marginalDist}. Since we use Euler's number as the basis of the functions $\exp$ and $\log$, all of the information quantities defined in the following are given in nats. We define the \emph{information density} of tuples of elements of the input and output alphabets under the channel $\channel$ and an input distribution $\inputDistribution$ as
\[
\gls{informationDensity}
:=
\log
\RNDerivative{\channel^\blocklength(\inputAlphabetElement^\blocklength, \cdot)}
             {\marginalOutputDistribution{\inputDistribution}{\channel}^\blocklength}
(\outputAlphabetElement^\blocklength).
\]
By convention, if $\channel^\blocklength(\inputAlphabetElement^\blocklength, \cdot) \notAbsolutelyContinuous \marginalOutputDistribution{\inputDistribution}{\channel}^\blocklength$, the information density is $\infty$. 

Correspondingly, the \emph{mutual information} is defined as
\[
\gls{mutInf}
:=
\Expectation_{\inputOutputDistribution{\inputDistribution}{\channel}}
  \informationDensity{\inputDistribution}{\channel}{\inputRV}{\outputRV}.
\]
Although the integrand is guaranteed to not be $\pm \infty$ on non-null sets, the mutual information integral can be infinite. Moreover, given two probability measures $\generalMeasureOne$ and $\generalMeasureTwo$, we define the \emph{Rényi divergence} of order $\renyiParam \in (0,1) \cup (1,\infty)$ between them as
\[
\gls{renyidiv}
:=
\frac{1}{\renyiParam - 1}
\log
\Expectation_\generalMeasureOne\left(
  \left(
    \RNDerivative{\generalMeasureOne}{\generalMeasureTwo}
  \right)^{\renyiParam-1}
\right).
\]
Again, by convention, the Rényi divergence is $\infty$ if $\generalMeasureOne \notAbsolutelyContinuous \generalMeasureTwo$.
$
\gls{kldiv}
:=
\lim_{\renyiParam \nearrow 1}
\renyidiv{\renyiParam}{\generalMeasureOne}{\generalMeasureTwo}
$
is the \emph{Kullback-Leibler divergence}.

A \emph{compound channel} is a family \gls{compoundChannel} of memoryless time-discrete point-to-point channels with common input alphabet \gls{inAlph} and output alphabet $\outputAlphabet$. The transmitter's channel input is passed through a fixed $\compoundChannel{\stateSpaceElement}$ for the entire block length, but the transmitter does not control the choice of $\stateSpaceElement$, nor is it governed by a probability distribution. In this work, we assume neither the transmitter nor the receiver knows $\stateSpaceElement$. A \emph{compound channel code} with block length $\blocklength$ and rate $\codebookRate$ consists of an encoder $\genericEncoder: \{1, \dots, \exp(\blocklength\codebookRate)\} \rightarrow \inputAlphabet^\blocklength$ and a decoder $\genericDecoder: \inputAlphabet^\blocklength \rightarrow \{1, \dots, \exp(\blocklength\codebookRate)\}$. We say that it has error probability $\errorProb$ if under a uniform distribution of $\gls{messageRV} \in \{1, \dots, \exp(\blocklength\codebookRate)\}$, the following is true: Let $\outputRV^\blocklength$ be constructed by passing the components of $\inputRV^\blocklength := \genericEncoder(\messageRV)$ independently through $\compoundChannel{\stateSpaceElement}$. Then, we have
\[
\sup_{\stateSpaceElement \in \stateSpace}
\Expectation_\messageRV
\Probability_\stateSpaceElement(\codewordIndex \neq \genericDecoder(\outputRV^\blocklength)) \leq \errorProb,
\]
where $\errorProb$ is the smallest number with this property.

Our proof of Theorem~\ref{theorem:awgn-scheme-real} hinges on coding for a particular class of Gaussian compound channels. Such channels have continuous input and output alphabets, so we need an achievability result for compound channels with continuous input and output alphabets. As mentioned in Section~\ref{sec:introduction}, it is shown in~\cite{kesten1961some} that even in the case that only the output alphabet is countably infinite, the capacity expressions from the finite case~\cite{blackwell1959capacity,wolfowitz1959simultaneous} do not carry over. It is therefore clear that an additional assumption on the compound channel is needed. In existing literature (e.g., \cite{blackwell1959capacity,root1968capacity}), the problem is often approached by proving that the compound channel can be approximated by a finite class of channels in which case classical channel coding techniques such as joint typicality decoding can be adapted in a straightforward manner. In this work, we choose to directly pose the approximability of the compound channel by a finite class of channels as an assumption of our coding theorem.

\begin{definition}
\label{def:approximation}
Given a compound channel $(\compoundChannel{\stateSpaceElement})_{\stateSpaceElement \in \stateSpace}$ with input alphabet $\inputAlphabet$ and output alphabet $\outputAlphabet$, we say that it can be \emph{$(\channelApproximationError,\generalChannelApproximationNumber)$-approximated} under a probability distribution $\inputDistribution$ on $\inputAlphabet$ if there is a sequence \gls{approximateChannel} of channels from $\inputAlphabet$ to $\outputAlphabet$ such that for every $\stateSpaceElement \in \stateSpace$, there is $\channelApproximationIndex \in \{1,\dots,\generalChannelApproximationNumber\}$ such that
\begin{align}
\label{eq:approximation-kldiv1}
&\Expectation_\inputDistribution
  \kldiv{\compoundChannel{\stateSpaceElement}(\inputRV,\cdot)}
        {\approximateChannel{\channelApproximationError}{\channelApproximationIndex}(\inputRV,\cdot)}
  \leq
  \channelApproximationError
\\
\label{eq:approximation-renyidiv}
\exists \renyiParam > 1~\forall \inputAlphabetElement \in \inputAlphabet:~
  &\renyidiv{\renyiParam}
            {\compoundChannel{\stateSpaceElement}(\inputAlphabetElement,\cdot)}
            {\approximateChannel{\channelApproximationError}{\channelApproximationIndex}(\inputAlphabetElement,\cdot)}
  <
  \infty
\\
\label{eq:approximation-absolute-continuity}
\forall \inputAlphabetElement \in \inputAlphabet:~
  &\approximateChannel{\channelApproximationError}{\channelApproximationIndex}(\inputAlphabetElement,\cdot)
  \absolutelyContinuous
  \compoundChannel{\stateSpaceElement}(\inputAlphabetElement,\cdot)
\\
\label{eq:approximation-information1}
&\information{\inputDistribution}{\approximateChannel{\channelApproximationError}{\channelApproximationIndex}}
-
\information{\inputDistribution}{\compoundChannel{\stateSpaceElement}}
\leq
\channelApproximationError,
\end{align}
and for every $\channelApproximationIndex \in \{1,\dots,\generalChannelApproximationNumber\}$ there is $\stateSpaceElement \in \stateSpace$ such that
\begin{align}
\label{eq:approximation-information2}
\information{\inputDistribution}{\compoundChannel{\stateSpaceElement}}
-
\information{\inputDistribution}{\approximateChannel{\channelApproximationError}{\channelApproximationIndex}}
\leq
\channelApproximationError.
\end{align}
\end{definition}
Conditions (\ref{eq:approximation-kldiv1}) and (\ref{eq:approximation-information1}) tell us in what sense the approximating channel must be similar to the approximated channel, while the remaining conditions are of a more technical nature. (\ref{eq:approximation-renyidiv}) and (\ref{eq:approximation-absolute-continuity}) ensure that all moment-generating functions and Radon-Nikodym derivatives that we use to derive the exponential error bounds exist. Finally, (\ref{eq:approximation-information2}) tells us that in a certain sense, the approximating sequence cannot be too rich, and it can usually be ensured that it holds by not including unnecessary channels in the sequence.

\subsection{Feasibility of Channel Approximation}
\label{sec:chapprox-feasibility}
In this subsection, we provide some tools and examples to argue that many compound channels of practical interest can indeed be $(\channelApproximationError,\generalChannelApproximationNumber)$-approximated so that Theorem~\ref{theorem:compound} may be applied to them. In particular, the results in this section imply that Theorem~\ref{theorem:compound} can be applied to the class of Gaussian channels we need to prove Theorem~\ref{theorem:awgn-scheme-real}.

We begin with an observation that shows that the approximability criterion of Definition~\ref{def:approximation} is a generalization of the assumption of finite channel alphabets that is used in~\cite{blackwell1959capacity,wolfowitz1959simultaneous}.
\begin{remark}
\label{remark:compound-finite}
\cite[Lemma 4]{blackwell1959capacity} implies that for every compound channel $(\compoundChannel{\stateSpaceElement})_{\stateSpaceElement \in \stateSpace}$ with finite input and output alphabets and every $\channelApproximationError > 0$, there is an integer $\channelApproximationNumber{\channelApproximationError}$ such that $(\compoundChannel{\stateSpaceElement})_{\stateSpaceElement \in \stateSpace}$ can be $(\channelApproximationError,\channelApproximationNumber{\channelApproximationError})$-approximated.

We repeat the construction here and discuss how this fact is proved.

Let $\finiteApproxInteger$ be an integer which satisfies
\[
\finiteApproxInteger
\geq
\max\left(
  \frac{4\cardinality{\outputAlphabet}^3}{\channelApproximationError^2},
  \frac{2\cardinality{\outputAlphabet}^2}{\channelApproximationError}
\right).
\]
Given $\stateSpaceElement \in \stateSpace$, we construct a channel $\correspondingApproximateChannel{\stateSpaceElement}$. To this end, given any $\inputAlphabetElement \in \inputAlphabet$, we fix an enumeration $(\outputAlphabetElement_\generalSummationIndex)_{\generalSummationIndex=1}^{\cardinality{\outputAlphabet}}$ such that the finite sequence $(\compoundChannel{\stateSpaceElement}(\inputAlphabetElement,\{\outputAlphabetElement_\generalSummationBound\}))_{\generalSummationIndex=1}^{\cardinality{\outputAlphabet}}$ is nondecreasing. For every $\generalSummationIndex < \cardinality{\outputAlphabet}$, we can then uniquely choose a value for $\correspondingApproximateChannel{\stateSpaceElement}(\inputAlphabetElement,\{\outputAlphabetElement_\generalSummationBound\})$ such that it is an integer multiple of $1/\finiteApproxInteger$ and
\begin{align}
\label{eq:finite-approximation}
\compoundChannel{\stateSpaceElement}(\inputAlphabetElement,\{\outputAlphabetElement_\generalSummationBound\})
\leq
\correspondingApproximateChannel{\stateSpaceElement}(\inputAlphabetElement,\{\outputAlphabetElement_\generalSummationBound\})
<
\compoundChannel{\stateSpaceElement}(\inputAlphabetElement,\{\outputAlphabetElement_\generalSummationBound\}) + \frac{1}{\finiteApproxInteger}.
\end{align}
It is argued in \cite{blackwell1959capacity} that this leaves a positive probability mass for $\correspondingApproximateChannel{\stateSpaceElement}(\inputAlphabetElement,\{\outputAlphabetElement_{\cardinality{\outputAlphabet}}\})$ and therefore, this construction fully defines a channel $\correspondingApproximateChannel{\stateSpaceElement}$. We define the approximation sequence $(\approximateChannel{\channelApproximationError}{\channelApproximationIndex})_{\channelApproximationIndex=1}^{\channelApproximationNumber{\channelApproximationError}}$ as an enumeration of the set $\{\correspondingApproximateChannel{\stateSpaceElement}: \stateSpaceElement \in \stateSpace\}$. The cardinality of this set is upper bounded by $(\finiteApproxInteger + 1)^{\cardinality{\inputAlphabet} \cardinality{\outputAlphabet}}$ since all singleton probabilities are integer multiples of $1/\finiteApproxInteger$.

For finite alphabets, (\ref{eq:approximation-renyidiv}) is trivially satisfied since Rényi divergence is in this case always finite~\cite{vanerven2014renyi}. Regarding the absolute continuity criterion (\ref{eq:approximation-absolute-continuity}), we recall that $\correspondingApproximateChannel{\stateSpaceElement}(\inputAlphabetElement,\{\outputAlphabetElement_{\cardinality{\outputAlphabet}}\})$ always has a positive probability, and for $\generalSummationIndex < \cardinality{\outputAlphabet}$, the assumption $\compoundChannel{\stateSpaceElement}(\inputAlphabetElement,\{\outputAlphabetElement_\generalSummationBound\}) = 0$ immediately implies $\correspondingApproximateChannel{\stateSpaceElement}(\inputAlphabetElement,\{\outputAlphabetElement_{\cardinality{\outputAlphabet}}\})=0$ by (\ref{eq:finite-approximation}), since $0$ is the only integer multiple of $1/\finiteApproxInteger$ which is strictly smaller than $1/\finiteApproxInteger$. The proof in~\cite{blackwell1959capacity} exploits (\ref{eq:finite-approximation}) to prove that the absolute difference between the information of $\compoundChannel{\stateSpaceElement}$ and $\correspondingApproximateChannel{\stateSpaceElement}$ under any input distribution is at most $2\cardinality{\outputAlphabet}^{3/2}\finiteApproxInteger^{-1/2}$ (statement (c) of the lemma) which by our choice of $\finiteApproxInteger$ immediately implies (\ref{eq:approximation-information1}) and (\ref{eq:approximation-information2}). Moreover, it is shown that (\ref{eq:finite-approximation}) also implies that for all $\inputAlphabetElement \in \inputAlphabet, \outputAlphabetElement \in \outputAlphabet$,
\[
\log\frac{\compoundChannel{\stateSpaceElement}(\inputAlphabetElement,\{\outputAlphabetElement\})}{\correspondingApproximateChannel{\stateSpaceElement}(\inputAlphabetElement,\{\outputAlphabetElement\})}
\leq
\frac{2\cardinality{\outputAlphabet}^2}{\finiteApproxInteger}
\]
(statement (b) of the lemma) which by our choice of $\finiteApproxInteger$ implies (\ref{eq:approximation-kldiv1}).

\end{remark}
For many channels of interest, $(\channelApproximationError,\channelApproximationNumber{\channelApproximationError})$-approximability can be shown directly by going through properties (\ref{eq:approximation-kldiv1}) -- (\ref{eq:approximation-information2}). However, it is  often easier to make an argument involving topological properties of $\stateSpace$. The following lemma provides some machinery to this end. In its statement, we use the following generalization of continuity of functions: Let $\stateSpace$ be a topological space. A function $\generalFunction: \stateSpace \rightarrow [-\infty,\infty]$ is called \emph{upper semi-continuous} at $\stateSpaceElement_0$ if for every $\generalReal_0 > \generalFunction(\stateSpaceElement_0)$ there exists an open set $\generalSet \subseteq \stateSpace$ with $\stateSpaceElement_0 \in \generalSet$ and for all $\stateSpaceElement \in \generalSet$, $\generalReal_0 > \generalFunction(\stateSpaceElement)$. $\generalFunction$ is called \emph{lower semi-continuous} at $\stateSpaceElement_0$ if $-\generalFunction$ is upper semi-continuous at $\stateSpaceElement_0$. $\generalFunction$ is called \emph{upper semi-continuous (lower semi-continuous)} if it is upper semi-continuous (lower semi-continuous) at every point of its domain.
\begin{lemma}
\label{lemma:topological-approximation}
Let $(\compoundChannel{\stateSpaceElement})_{\stateSpaceElement \in \stateSpace}$ be a compound channel with input alphabet $\inputAlphabet$ and output alphabet $\outputAlphabet$, let $\inputDistribution$ be a probability distribution on $\inputAlphabet$ and assume that there is a topology on $\stateSpace$ such that $\stateSpace$ is compact and
\begin{align}
\label{eq:topological-approximation-kldiv}
&\forall \stateSpaceElement_0 \in \stateSpace:~
\stateSpaceElement
\mapsto
\Expectation_\inputDistribution \kldiv{\compoundChannel{\stateSpaceElement}(\inputRV,\cdot)}{\compoundChannel{\stateSpaceElement_0}(\inputRV,\cdot)}
\text{is upper semi-continuous at $\stateSpaceElement_0$,}
\\
\label{eq:topological-approximation-renyidiv}
&\forall \stateSpaceElement_1, \stateSpaceElement_2 \in \stateSpace~ \exists \renyiParam > 1~ \forall \inputAlphabetElement \in \inputAlphabet: %\\
\renyidiv{\renyiParam}{\compoundChannel{\stateSpaceElement_1}(\inputAlphabetElement,\cdot)}{\compoundChannel{\stateSpaceElement_2}(\inputAlphabetElement,\cdot)}
<
\infty,
\\
\label{eq:topological-approximation-information}
&\stateSpaceElement \mapsto \information{\inputDistribution}{\compoundChannel{\stateSpaceElement}}
\text{ is lower semi-continuous.}
\end{align}
Then, for any $\channelApproximationError > 0$, there is $\channelApproximationNumber{\channelApproximationError}$ such that $(\compoundChannel{\stateSpaceElement})_{\stateSpaceElement \in \stateSpace}$ can be $(\channelApproximationError,\channelApproximationNumber{\channelApproximationError})$-approximated under $\inputDistribution$.
\end{lemma}
\begin{proof}
Fix some $\channelApproximationError > 0$. For a given $\stateSpaceElement \in \stateSpace$, consider
\[
\{
  \stateSpaceElement':
  \Expectation_\inputDistribution \kldiv{\compoundChannel{\stateSpaceElement'}(\inputRV,\cdot)}{\compoundChannel{\stateSpaceElement}(\inputRV,\cdot)}
  <
  \channelApproximationError
\}
\cap
\{
  \stateSpaceElement':
  \information{\inputDistribution}{\compoundChannel{\stateSpaceElement}}
  -
  \information{\inputDistribution}{\compoundChannel{\stateSpaceElement'}}
  <
  \channelApproximationError
\}.
\]

Clearly, (\ref{eq:topological-approximation-kldiv}) and (\ref{eq:topological-approximation-information}) ensure that this intersection is a neighborhood of $\stateSpaceElement$, so we can find an open neighborhood $\coverElement_\stateSpaceElement$ contained in it. Thus, $(\coverElement_\stateSpaceElement)_{\stateSpaceElement \in \stateSpace}$ is an open cover of $\stateSpace$ and therefore, the compactness of $\stateSpace$ yields a finite subcover $\coverElement_{\stateSpaceElement_1}, \dots, \coverElement_{\stateSpaceElement_\channelApproximationNumber{\channelApproximationError}}$. We set
$
\approximateChannel{\channelApproximationError}{\channelApproximationIndex} := \compoundChannel{\stateSpaceElement_\channelApproximationIndex}
$
and given any $\stateSpaceElement \in \stateSpace$, we choose $\channelApproximationIndex$ such that $\stateSpaceElement \in \coverElement_{\stateSpaceElement_\channelApproximationIndex}$ and argue that $\approximateChannel{\channelApproximationError}{\channelApproximationIndex}$ satisfies  (\ref{eq:approximation-kldiv1}), (\ref{eq:approximation-renyidiv}) and (\ref{eq:approximation-information1}). To this end, we note that (\ref{eq:approximation-renyidiv}) and (\ref{eq:approximation-absolute-continuity}) follow from (\ref{eq:topological-approximation-renyidiv}), while (\ref{eq:approximation-kldiv1}) and (\ref{eq:approximation-information1}) are ensured by the definition of $\coverElement_{\stateSpaceElement_\channelApproximationIndex}$. Finally, (\ref{eq:approximation-information2}) is trivially satisfied, concluding the proof.
\end{proof}

We now make use of Lemma~\ref{lemma:topological-approximation} to prove that a large class of Gaussian fading multiple-input and multiple-output channels can actually be $(\channelApproximationError,\channelApproximationNumber{\channelApproximationError})$-approximated and thus Theorem~\ref{theorem:compound} can be applied to them. The class of compound channels covered in the following theorem contains the class considered in~\cite[Sections 3 and 4]{root1968capacity} as a proper subset. We denote the set of symmetric, positive semidefinite $\generalSummationBound \times \generalSummationBound$-matrices with \gls{psdsym} and the set of symmetric, positive definite $\generalSummationBound \times \generalSummationBound$-matrices with \gls{pdsym}.

\begin{theorem}
\label{theorem:gaussian-compound-approximation}
Let $\inputAlphabet = \reals^\numTxAntennas$, $\outputAlphabet = \reals^\numRxAntennas$, let $\stateSpace$ be a compact subset of $\reals^{\numRxAntennas\numTxAntennas} \times \psdSymMatrices{\numRxAntennas\numTxAntennas} \times \reals^\numRxAntennas \times \pdSymMatrices{\numRxAntennas}$ (under the topology induced by the Frobenius norm). For any $\stateSpaceElement = (\mean_\channelMatrix, \covarianceMatrix_\channelMatrix, \mean_\noiseRV, \covarianceMatrix_\noiseRV) \in \stateSpace$, let $\compoundChannel{\stateSpaceElement}$ be the channel given by
\[
\outputRV = \channelMatrix \inputRV + \noiseRV,
\]
where the channel input $\inputRV$ has range $\reals^\numTxAntennas$, the channel output $\outputRV$ has range $\reals^\numRxAntennas$, the entries of the $\numRxAntennas \times \numTxAntennas$ fading matrix $\channelMatrix$ follow a multivariate normal distribution with mean $\mean_\channelMatrix$ and covariance matrix $\covarianceMatrix_\channelMatrix$ and the additive noise \gls{noiseRV} is independent of $\channelMatrix$ and follows a multivariate normal distribution with mean $\mean_\noiseRV$ and covariance matrix $\covarianceMatrix_\noiseRV$. Let $\inputDistribution$ be a distribution on $\inputAlphabet$ and assume that either $\inputDistribution$ is a multivariate Gaussian with positive definite covariance matrix or that the support of $\inputDistribution$ is contained in some compact set.
Then, given any $\channelApproximationError > 0$, there is $\channelApproximationNumber{\channelApproximationError}$ such that $(\compoundChannel{\stateSpaceElement})_{\stateSpaceElement \in \stateSpace}$ can be $(\channelApproximationError,\channelApproximationNumber{\channelApproximationError})$-approximated under $\inputDistribution$.
\end{theorem}

\begin{proof}
We show that the conditions of Lemma~\ref{lemma:topological-approximation} are met. \cite{gil2011renyi} provides closed-form expressions for Rényi and Kullback-Leibler divergences between multivariate normal distributions. The only fact that we are going to use and which is apparent from these expressions, however, is that the Rényi and Kullback-Leibler divergences between two multivariate normal distributions are finite and continuous in the mean vectors and covariance matrices of the distributions wherever the covariance matrices are positive definite or, equivalently, both distributions are absolutely continuous with respect to the Lebesgue measure.

$\covarianceMatrix_\noiseRV \in \pdSymMatrices{\numRxAntennas}$ and therefore, given any $\inputAlphabetElement \in \inputAlphabet$, $\compoundChannel{\stateSpaceElement}(\inputAlphabetElement, \cdot)$ is absolutely continuous with respect to the Lebesgue measure and thus has a positive definite covariance matrix and a density $\density{\compoundChannel{\stateSpaceElement}(\inputAlphabetElement, \cdot)}$, which implies (\ref{eq:topological-approximation-renyidiv}).

Next, from the well-known closed-form expression of the multivariate normal density, we know that for any $\inputAlphabetElement$ and $\outputAlphabetElement$, $\density{\compoundChannel{\stateSpaceElement}(\inputAlphabetElement, \cdot)}(\outputAlphabetElement)$ is continuous in $\stateSpaceElement$. The boundedness of $\stateSpace$ implies a uniform upper bound on $\density{\compoundChannel{\stateSpaceElement}(\inputAlphabetElement, \cdot)}(\outputAlphabetElement)$, so we can use the theorem of dominated convergence to argue that the marginal density $\density{\marginalOutputDistribution{\inputDistribution}{\compoundChannel{\stateSpaceElement}}}(\outputAlphabetElement) = \Expectation_\inputDistribution \density{\compoundChannel{\stateSpaceElement}(\inputRV, \cdot)}(\outputAlphabetElement)$ depends continuously on $\stateSpaceElement$ for any fixed $\outputAlphabetElement$. We write
\[
\information{\inputDistribution}{\compoundChannel{\stateSpaceElement}}
=
\Expectation_{\inputDistribution \marginalOutputDistribution{\inputDistribution}{\compoundChannel{\stateSpaceElement}}}\left(
  \frac{\density{\compoundChannel{\stateSpaceElement}(\inputRV, \cdot)}(\outputRV)}
       {\density{\marginalOutputDistribution{\inputDistribution}{\compoundChannel{\stateSpaceElement}}}(\outputRV)}
  \log
    \frac{\density{\compoundChannel{\stateSpaceElement}(\inputRV, \cdot)}(\outputRV)}
         {\density{\marginalOutputDistribution{\inputDistribution}{\compoundChannel{\stateSpaceElement}}}(\outputRV)}
\right).
\]
Since the integrand is lower bounded by $-\exp(-1)$, (\ref{eq:topological-approximation-information}) follows as an application of Fatou's lemma.

Finally, in order to argue (\ref{eq:topological-approximation-kldiv}), we distinguish between the two cases in the statement of the theorem. 

First, suppose that there is a compact subset $\hat{\inputAlphabet} \subseteq \inputAlphabet$ with $\inputDistribution(\inputAlphabet \setminus \hat{\inputAlphabet}) = 0$. For any fixed $\stateSpaceElement_0$, the map
\[ (\stateSpaceElement,\inputAlphabetElement) \mapsto \kldiv{\compoundChannel{\stateSpaceElement}(\inputAlphabetElement, \cdot)}{\compoundChannel{\stateSpaceElement_0}(\inputAlphabetElement, \cdot)} \]
is continuous, therefore the image of $\stateSpace \times \hat{\inputAlphabet}$ is compact and hence bounded. We can therefore invoke the theorem of dominated convergence and argue that (\ref{eq:topological-approximation-kldiv}) is satisfied.

Now, suppose that $\inputDistribution$ is multivariate Gaussian with positive definite covariance matrix. We write
\begin{align*}
\Expectation_\inputDistribution \kldiv{\compoundChannel{\stateSpaceElement}(\inputRV,\cdot)}{\compoundChannel{\stateSpaceElement_0}(\inputRV,\cdot)}
&=
\Expectation_{\inputDistribution}
\Expectation_{\compoundChannel{\stateSpaceElement}(\inputRV, \cdot)}
\log\frac{\density{\inputDistribution}(\inputRV)\density{\compoundChannel{\stateSpaceElement}(\inputRV,\cdot)}(\outputRV)}
         {\density{\inputDistribution}(\inputRV)\density{\compoundChannel{\stateSpaceElement_0}(\inputRV,\cdot)}(\outputRV)}
\\
&=
\Expectation_{\inputOutputDistribution{\inputDistribution}{\compoundChannel{\stateSpaceElement}}}
\log\frac{\density{\inputOutputDistribution{\inputDistribution}{\compoundChannel{\stateSpaceElement}}}(\inputRV,\outputRV)}
         {\density{\inputOutputDistribution{\inputDistribution}{\compoundChannel{\stateSpaceElement_0}}}(\inputRV,\outputRV)}
\\
&=
\kldiv{\inputOutputDistribution{\inputDistribution}{\compoundChannel{\stateSpaceElement}}}{\inputOutputDistribution{\inputDistribution}{\compoundChannel{\stateSpaceElement_0}}}.
\end{align*}
From our arguments above, given any $\stateSpaceElement$, the distribution $\inputOutputDistribution{\inputDistribution}{\compoundChannel{\stateSpaceElement}}$ is multivariate Gaussian with positive definite covariance matrix, which implies that (\ref{eq:topological-approximation-kldiv}) is satisfied.
\end{proof}

\subsection{Coding Result}
\label{sec:compound-coding-result}
We use the same random codebook construction that was originally employed by Shannon~\cite{shannon1948mathematical}: Given a channel input alphabet $\inputAlphabet$, a distribution $\inputDistribution$ on $\inputAlphabet$, a block length $\blocklength$ and a rate $\codebookRate$, we define the \gls{codebookensemble} of code books as a random experiment in which $\exp(\blocklength\codebookRate)$ code words of length $\blocklength$ are drawn randomly and independently according to $\inputDistribution$ for each component of each code word.

\begin{theorem}
\label{theorem:compound}
Let $(\compoundChannel{\stateSpaceElement})_{\stateSpaceElement \in \stateSpace}$ be a compound channel with input alphabet $\inputAlphabet$ and output alphabet $\outputAlphabet$, and let $\inputDistribution$ be a probability distribution on $\inputAlphabet$ such that for every $\channelApproximationError > 0$, there is a $\channelApproximationNumber{\channelApproximationError}$ such that $(\compoundChannel{\stateSpaceElement})_{\stateSpaceElement \in \stateSpace}$ can be $(\channelApproximationError,\channelApproximationNumber{\channelApproximationError})$-approximated under $\inputDistribution$. Let
\begin{align}
\label{eq:compound-ratecondition}
0 < \codebookRate < \inf_{\stateSpaceElement \in \stateSpace} \information{\inputDistribution}{\compoundChannel{\stateSpaceElement}},
\end{align}
and let \gls{codebook} be a random codebook from the $(\inputDistribution, \blocklength, \codebookRate)$-ensemble. Define an encoder $\codewordIndex \mapsto \codeword{\codewordIndex}$. Then there is a decoder such that the error probability $\errorProb$ of the resulting compound channel code satisfies
\begin{align}
\label{eq:compound-finalerror}
\Expectation_\codebook(\errorProb) < \exp(-\blocklength\finalconst),
\end{align}
for some $\finalconst > 0$ and sufficiently large $\blocklength$.
\end{theorem}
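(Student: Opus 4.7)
The plan is to reduce the continuous compound channel problem to a finite family via $(\channelApproximationError, \channelApproximationNumber{\channelApproximationError})$-approximability and carry out a Feinstein-type random-coding analysis whose decoder is defined through the approximating family but whose concentration is analyzed directly under each true channel $\compoundChannel{\stateSpaceElement}$.

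First I would fix $\channelApproximationError > 0$ small enough that $\codebookRate + 4\channelApproximationError < \inf_{\stateSpaceElement} \information{\inputDistribution}{\compoundChannel{\stateSpaceElement}}$, invoke $(\channelApproximationError, \channelApproximationNumber{\channelApproximationError})$-approximability to obtain the finite family $\{\approximateChannel{\channelApproximationError}{\channelApproximationIndex}\}_{\channelApproximationIndex = 1}^{\channelApproximationNumber{\channelApproximationError}}$, and draw a random codebook from the $(\inputDistribution, \blocklength, \codebookRate)$-ensemble. The decoder uses an information-density threshold rule: set $\messageEstimate = \codewordIndex$ if $\codewordIndex$ is the unique index admitting some $\channelApproximationIndex$ with
\begin{align*}
\log \frac{d\approximateChannel{\channelApproximationError}{\channelApproximationIndex}^{\blocklength}(\cdot \mid \codeword{\codewordIndex})}{d\codebookMarginalOutputDistribution{\channelApproximationIndex}^{\blocklength}}(\outputRV^\blocklength) > \blocklength(\codebookRate + 2\channelApproximationError),
\end{align*}
where $\codebookMarginalOutputDistribution{\channelApproximationIndex}$ is the output distribution induced by $\inputDistribution \otimes \approximateChannel{\channelApproximationError}{\channelApproximationIndex}$.

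The key observation is that, for the $\channelApproximationIndex$ matching a given $\stateSpaceElement$, the per-letter information density $\log [d\approximateChannel{\channelApproximationError}{\channelApproximationIndex}(\cdot \mid \inputRV) / d\codebookMarginalOutputDistribution{\channelApproximationIndex}](\outputRV)$ evaluated under $\inputDistribution \otimes \compoundChannel{\stateSpaceElement}$ has expectation equal to
\begin{align*}
\information{\inputDistribution}{\compoundChannel{\stateSpaceElement}} - \Expectation_\inputDistribution \kldiv{\compoundChannel{\stateSpaceElement}(\inputRV, \cdot)}{\approximateChannel{\channelApproximationError}{\channelApproximationIndex}(\inputRV, \cdot)} + \kldiv{\marginalOutputDistribution{\inputDistribution}{\compoundChannel{\stateSpaceElement}}}{\codebookMarginalOutputDistribution{\channelApproximationIndex}},
\end{align*}
which by \eqref{eq:approximation-kldiv1} and non-negativity of the last summand is at least $\information{\inputDistribution}{\compoundChannel{\stateSpaceElement}} - \channelApproximationError > \codebookRate + 3\channelApproximationError$. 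The summands across the $\blocklength$ channel uses are i.i.d.\ under $\inputDistribution^\blocklength \otimes \compoundChannel{\stateSpaceElement}^\blocklength$, so Cramér's theorem yields an exponentially small probability that the sum falls below $\blocklength(\codebookRate + 2\channelApproximationError)$, controlling the transmitted-codeword side of the Feinstein bound. A standard reverse-Markov argument, combined with a Rényi-based change of measure from $\marginalOutputDistribution{\inputDistribution}{\compoundChannel{\stateSpaceElement}}^\blocklength$ to $\codebookMarginalOutputDistribution{\channelApproximationIndex}^\blocklength$, handles the contribution of the non-transmitted codewords and remains exponentially small thanks to the $2\channelApproximationError$ slack in the threshold.

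A union bound over the $\blocklength$-independent number $\channelApproximationNumber{\channelApproximationError}$ of approximating channels, together with the expectation over the random codebook, produces $\Expectation_\codebook \Probability_{\compoundChannel{\stateSpaceElement}^\blocklength}(\errorevent) \leq \exp(-\blocklength \finalconst)$ uniformly in $\stateSpaceElement$ for some $\finalconst > 0$. The main technical obstacle is the Cramér/change-of-measure step: the pointwise finiteness of Rényi divergence granted by \eqref{eq:approximation-renyidiv} must be converted into quantitative moment generating function bounds for the information density and for the likelihood ratio between the two output marginals. I would exploit the monotonicity $\renyidiv{\renyiParam}{\cdot}{\cdot} \downarrow \kldiv{\cdot}{\cdot}$ as $\renyiParam \downarrow 1$ together with dominated convergence, using the finite $\renyidiv{\renyiParam_0}{\cdot}{\cdot}$ from the hypothesis as dominating function (possibly after a truncation argument to ensure integrability under $\inputDistribution$), to choose $\renyiParam$ close enough to $1$ that the change-of-measure exponent penalties remain strictly below the Cramér/Feinstein exponents surviving from the approximating-channel analysis.
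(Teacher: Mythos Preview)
Your overall strategy coincides with the paper's: reduce to a finite approximating family, decode by thresholding the information density of the approximating channels, and split the error into a miss event and a false-alarm event. Your choice of a constant threshold $\blocklength(\codebookRate+2\channelApproximationError)$ rather than the paper's $\channelApproximationIndex$-dependent threshold $\blocklength(\information{\inputDistribution}{\approximateChannel{\channelApproximationError}{\channelApproximationIndex}}-\typicalityParameter)$ is a harmless variant; it actually makes the false-alarm analysis slightly cleaner and lets you bypass \eqref{eq:approximation-information1} and \eqref{eq:approximation-information2}. (No R\'enyi change of measure is needed on that side either: \eqref{eq:approximation-kldiv1} already gives $\marginalOutputDistribution{\inputDistribution}{\compoundChannel{\stateSpaceElement}}\ll \marginalOutputDistribution{\inputDistribution}{\approximateChannel{\channelApproximationError}{\channelApproximationIndex}}$, after which the standard Feinstein estimate yields the bound directly.)

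The genuine gap is in the miss event. Invoking ``Cram\'er's theorem'' on $\sum_k \log \frac{d\approximateChannel{\channelApproximationError}{\channelApproximationIndex}(X_k,\cdot)}{d\marginalOutputDistribution{\inputDistribution}{\approximateChannel{\channelApproximationError}{\channelApproximationIndex}}}(Y_k)$ under $\inputDistribution\otimes\compoundChannel{\stateSpaceElement}$ presupposes finiteness of the moment-generating function of the per-letter summand near the origin, and the hypotheses do not give this for the combined quantity. Your formula for the mean is correct, but an exponential lower-tail bound cannot be obtained in one stroke. The paper's key device---which your closing paragraph gestures at but does not articulate---is the decomposition
\[
\log\frac{d\approximateChannel{\channelApproximationError}{\channelApproximationIndex}(X,\cdot)}{d\marginalOutputDistribution{\inputDistribution}{\approximateChannel{\channelApproximationError}{\channelApproximationIndex}}}(Y)
\;=\;
\log\frac{d\compoundChannel{\stateSpaceElement}(X,\cdot)}{d\marginalOutputDistribution{\inputDistribution}{\compoundChannel{\stateSpaceElement}}}(Y)
\;-\;
\log\frac{d\compoundChannel{\stateSpaceElement}(X,\cdot)}{d\approximateChannel{\channelApproximationError}{\channelApproximationIndex}(X,\cdot)}(Y)
\;-\;
\log\frac{d\marginalOutputDistribution{\inputDistribution}{\approximateChannel{\channelApproximationError}{\channelApproximationIndex}}}{d\marginalOutputDistribution{\inputDistribution}{\compoundChannel{\stateSpaceElement}}}(Y),
\]
followed by a union bound and a separate Markov/Chernoff estimate on each piece with its own R\'enyi parameter on the appropriate side of $1$. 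The first and third pieces are handled with R\'enyi orders in $(0,1)$, which are automatically finite; only the middle piece (upper tail of the channel likelihood ratio) requires a R\'enyi order $>1$, and this is precisely where \eqref{eq:approximation-renyidiv} enters. The convergence $\renyidiv{\renyiParam}{\cdot}{\cdot}\downarrow\kldiv{\cdot}{\cdot}$ as $\renyiParam\downarrow 1$, combined with \eqref{eq:approximation-kldiv1}, then lets one pick that parameter close enough to $1$ that the resulting exponent is strictly negative. Without isolating the $\log\frac{d\compoundChannel{\stateSpaceElement}(X,\cdot)}{d\approximateChannel{\channelApproximationError}{\channelApproximationIndex}(X,\cdot)}$ term in this way, \eqref{eq:approximation-renyidiv} cannot be brought to bear, and your proposed ``dominated convergence plus truncation'' step has no evident dominating function.
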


\begin{proof}
We first pick parameters $\channelApproximationError$, $\typicalityParameter$, $\boundParamOne$ and $\boundParamTwo$ in sequence according to the following scheme, where (\ref{eq:compound-ratecondition}) and the previous choices ensure that these intervals are all nonempty.
\begin{align}
\label{eq:compound-pickapproximationerror}
\channelApproximationError
&\in
\left(0, \frac{\inf_{\stateSpaceElement \in \stateSpace} \information{\inputDistribution}{\compoundChannel{\stateSpaceElement}} - \codebookRate}{3}\right)
\\
\label{eq:compound-picktypicality}
\typicalityParameter
&\in
\left(2\channelApproximationError, \inf_{\stateSpaceElement \in \stateSpace} \information{\inputDistribution}{\compoundChannel{\stateSpaceElement}} - \codebookRate - \channelApproximationError\right)
\\
\label{eq:compound-pickbound1}
\boundParamOne &\in (\channelApproximationError, \typicalityParameter - \channelApproximationError)
\\
\label{eq:compound-pickbound2}
\boundParamTwo &\in (0, \typicalityParameter - \channelApproximationError - \boundParamOne)
\end{align}

Fix a sequence $(\approximateChannel{\channelApproximationError}{\channelApproximationIndex})_{\channelApproximationIndex=1}^{\channelApproximationNumber{\channelApproximationError}}$ which $(\channelApproximationError,\channelApproximationNumber{\channelApproximationError})$-approximates $(\compoundChannel{\stateSpaceElement})_{\stateSpaceElement \in \stateSpace}$.

We use a joint typicality decoder, i.e., if there is a unique $\codewordIndex$ such that
\[
\exists \channelApproximationIndex \in \{1, \dots, \channelApproximationNumber{\channelApproximationError}\}:~
\informationDensity{\inputDistribution}
                   {\approximateChannel{\channelApproximationError}
                                       {\channelApproximationIndex}
                   }
                   {\codeword{\codewordIndex}}
                   {\outputRV^\blocklength}
\geq
\blocklength(
  \information{\inputDistribution}
              {\approximateChannel{\channelApproximationError}
                                  {\channelApproximationIndex}
              }
  -
  \typicalityParameter
),
\]
the decoder declares that message $\codewordIndex$ has been sent; otherwise it declares an error (or that message $1$ has been sent).

We denote the transmitted message with $\messageRV$, the message declared by the decoder with $\hat{\messageRV}$ and define error events
\begin{align}
\errorevent &:=
\{ \messageRV \neq \hat{\messageRV} \}
\\
\errorevent_1 &:=
\label{eq:compound-typical-event}
\Big\{
  \forall \channelApproximationIndex \in \{1, \dots, \channelApproximationNumber{\channelApproximationError}\} ~
  \informationDensity{\inputDistribution}
                     {\approximateChannel{\channelApproximationError}
                                         {\channelApproximationIndex}
                     }
                     {\codeword{\messageRV}}
                     {\outputRV^\blocklength}
  <
  \blocklength(
    \information{\inputDistribution}
                {\approximateChannel{\channelApproximationError}
                                    {\channelApproximationIndex}
                }
    -
    \typicalityParameter
  )
\Big\}
\\
\label{eq:compound-atypical-event}
\errorevent_2 &:= 
\Big\{
  \exists \codewordIndex \neq \messageRV ~
  \exists \channelApproximationIndex \in \{1, \dots, \channelApproximationNumber{\channelApproximationError}\} ~
  \informationDensity{\inputDistribution}
                     {\approximateChannel{\channelApproximationError}
                                         {\channelApproximationIndex}
                     }
                     {\codeword{\codewordIndex}}
                     {\outputRV^\blocklength}
  \geq
  \blocklength(
    \information{\inputDistribution}
                {\approximateChannel{\channelApproximationError}
                                    {\channelApproximationIndex}
                }
    -
    \typicalityParameter
  )
\Big\}.
\end{align}
We note that $\errorevent \subseteq \errorevent_1 \cup \errorevent_2$ and consequently

\begin{align}
\label{eq:compound-typical-atypical-split}
\Probability(\errorevent) \leq \Probability(\errorevent_1) + \Probability(\errorevent_2).
\end{align}
So we can bound these two errors separately and then combine them.

We start with bounding the expectation of the first summand, using the definition (\ref{eq:compound-typical-event}) and $\codebook$, as well as an addition of zero. Pick $\channelApproximationIndex$ such that $\approximateChannel{\channelApproximationError}{\channelApproximationIndex}$ satisfies (\ref{eq:approximation-kldiv1}) -- (\ref{eq:approximation-information1}) with respect to the realization $\compoundChannel{\stateSpaceElement}$ of the compound channel. Then we have
\begin{align}
\nonumber
\Expectation_\codebook(\Probability(\errorevent_1))
\nonumber
&\leq
\Expectation_\codebook\left(
  \Probability\left(
    \informationDensity{\inputDistribution}
                      {\approximateChannel{\channelApproximationError}
                                          {\channelApproximationIndex}
                      }
                      {\codeword{\messageRV}}
                      {\outputRV^\blocklength}
    <
    \blocklength(
      \information{\inputDistribution}
                  {\approximateChannel{\channelApproximationError}
                                      {\channelApproximationIndex}
                  }
      -
      \typicalityParameter
    )
  \right)
\right)
\\
\nonumber
&=
\inputOutputDistribution{\inputDistribution}{\compoundChannel{\stateSpaceElement}}^\blocklength\left(
  \informationDensity{\inputDistribution}
                    {\approximateChannel{\channelApproximationError}
                                        {\channelApproximationIndex}
                    }
                    {\inputRV^\blocklength}
                    {\outputRV^\blocklength}
  <
  \blocklength(
    \information{\inputDistribution}
                {\approximateChannel{\channelApproximationError}
                                    {\channelApproximationIndex}
                }
    -
    \typicalityParameter
  )
\right)
\\
\label{eq:compound-typical-initial}
&=
\inputOutputDistribution{\inputDistribution}{\compoundChannel{\stateSpaceElement}}^\blocklength\Bigg(
  \sum\limits_{\blockIndex=1}^{\blocklength}
  \log\Bigg(
    \RNDerivative{\approximateChannel{\channelApproximationError}{\channelApproximationIndex}(\inputRV_\blockIndex, \cdot)}
                 {\marginalOutputDistribution{\inputDistribution}{\approximateChannel{\channelApproximationError}{\channelApproximationIndex}}}
    (\outputRV_\blockIndex)
  \Bigg)
  \blocklength(
    \information{\inputDistribution}
                {\approximateChannel{\channelApproximationError}
                                    {\channelApproximationIndex}
                }
    +
    \information{\inputDistribution}
                {\compoundChannel{\stateSpaceElement}}
    -
    \information{\inputDistribution}
                {\compoundChannel{\stateSpaceElement}}
    -
    \typicalityParameter
  )
\Bigg)
\end{align}

The Radon-Nikodym derivative can be split as
\begin{align}
\label{eq:compound-typical-rnderivative-split}
\RNDerivative{\approximateChannel{\channelApproximationError}{\channelApproximationIndex}(\inputRV_\blockIndex, \cdot)}
              {\marginalOutputDistribution{\inputDistribution}{\approximateChannel{\channelApproximationError}{\channelApproximationIndex}}}
=
\RNDerivative{\approximateChannel{\channelApproximationError}{\channelApproximationIndex}(\inputRV_\blockIndex, \cdot)}
             {\compoundChannel{\stateSpaceElement}(\inputRV_\blockIndex, \cdot)}
\cdot
\RNDerivative{\marginalOutputDistribution{\inputDistribution}{\compoundChannel{\stateSpaceElement}}}
             {\marginalOutputDistribution{\inputDistribution}{\approximateChannel{\channelApproximationError}{\channelApproximationIndex}}}
\cdot
\RNDerivative{\compoundChannel{\stateSpaceElement}(\inputRV_\blockIndex, \cdot)}
             {\marginalOutputDistribution{\inputDistribution}{\compoundChannel{\stateSpaceElement}}}.
\end{align}
This is possible because $\approximateChannel{\channelApproximationError}{\channelApproximationIndex}(\inputAlphabetElement, \cdot) \absolutelyContinuous \compoundChannel{\stateSpaceElement}(\inputAlphabetElement, \cdot)$ by (\ref{eq:approximation-absolute-continuity}), $\marginalOutputDistribution{\inputDistribution}{\compoundChannel{\stateSpaceElement}} \absolutelyContinuous \marginalOutputDistribution{\inputDistribution}{\approximateChannel{\channelApproximationError}{\channelApproximationIndex}}$ by (\ref{eq:approximation-kldiv1}) and the joint convexity of Kullback-Leibler divergence in its arguments, and $\compoundChannel{\stateSpaceElement}(\inputAlphabetElement, \cdot) \absolutelyContinuous \marginalOutputDistribution{\inputDistribution}{\compoundChannel{\stateSpaceElement}}$ for $\inputDistribution$-almost all $\inputAlphabetElement$ by the properties of the marginalization.

We next bound tail probabilities corresponding to the three factors in (\ref{eq:compound-typical-rnderivative-split}) separately, starting with the first. To this end, we introduce a number $\renyiParamOne > 1$ and argue, using Markov's inequality and the definition of Rényi divergence, that
\begin{align}
\nonumber
\inputOutputDistribution{\inputDistribution}{\compoundChannel{\stateSpaceElement}}^\blocklength\left(
  \sum\limits_{\blockIndex=1}^{\blocklength}
  \log\RNDerivative{\compoundChannel{\stateSpaceElement}(\inputRV_\blockIndex, \cdot)}
                   {\approximateChannel{\channelApproximationError}{\channelApproximationIndex}(\inputRV_\blockIndex, \cdot)}
  (\outputRV_\blockIndex)
  \geq
  \blocklength \boundParamOne
\right)
&=
\inputOutputDistribution{\inputDistribution}{\compoundChannel{\stateSpaceElement}}^\blocklength\Bigg(
  \exp\left(
    (\renyiParamOne-1)
    \sum\limits_{\blockIndex=1}^{\blocklength}
    \log\RNDerivative{\compoundChannel{\stateSpaceElement}(\inputRV_\blockIndex, \cdot)}
                    {\approximateChannel{\channelApproximationError}{\channelApproximationIndex}(\inputRV_\blockIndex, \cdot)}
    (\outputRV_\blockIndex)
  \right)
  \exp((\renyiParamOne-1)\blocklength \boundParamOne)
\Bigg)
\\
\nonumber
&\leq
\Expectation_{\inputOutputDistribution{\inputDistribution}{\compoundChannel{\stateSpaceElement}}^\blocklength}\left(
  \left(
    \prod\limits_{\blockIndex=1}^{\blocklength}
    \left(
    \RNDerivative{\compoundChannel{\stateSpaceElement}(\inputRV_\blockIndex, \cdot)}
                 {\approximateChannel{\channelApproximationError}{\channelApproximationIndex}(\inputRV_\blockIndex, \cdot)}
    (\outputRV_\blockIndex)
    \right)^{\renyiParamOne-1}
  \right)
\right)
\cdot \exp(-(\renyiParamOne-1)\blocklength \boundParamOne)
\\
\nonumber
&=
\exp\left(
  \sum_{\blockIndex=1}^\blocklength
    \log\left(
      \Expectation_{\inputOutputDistribution{\inputDistribution}{\compoundChannel{\stateSpaceElement}}^\blocklength}
        \left(
          \left(
            \RNDerivative{\compoundChannel{\stateSpaceElement}(\inputRV_\blockIndex, \cdot)}
                         {\approximateChannel{\channelApproximationError}{\channelApproximationIndex}(\inputRV_\blockIndex, \cdot)}
          \right)^{\renyiParamOne-1}
        \right)
    \right)
\right)
\cdot \exp(-(\renyiParamOne-1)\blocklength \boundParamOne)
\\
\label{eq:compound-typical-first-factor}
&=
\exp\Big(
  -
  (\renyiParamOne-1)
  \blocklength
  \cdot
  \left(
    \boundParamOne
    -
    \Expectation_\inputDistribution
      \renyidiv{\renyiParamOne}
               {\compoundChannel{\stateSpaceElement}(\inputRV, \cdot)}
               {\approximateChannel{\channelApproximationError}{\channelApproximationIndex}(\inputRV, \cdot)}
  \right)
\Big).
\end{align}

For the second factor, we argue in an analogous way, but using $\renyiParamTwo > 0$.

\begin{align}
\nonumber
\marginalOutputDistribution{\inputDistribution}{\compoundChannel{\stateSpaceElement}}^\blocklength\left(
  \sum\limits_{\blockIndex=1}^{\blocklength}
  \log\RNDerivative{\marginalOutputDistribution{\inputDistribution}{\approximateChannel{\channelApproximationError}{\channelApproximationIndex}}}
                   {\marginalOutputDistribution{\inputDistribution}{\compoundChannel{\stateSpaceElement}}}
  (\outputRV_\blockIndex)
  \geq
  \blocklength \boundParamTwo
\right)
&=
\marginalOutputDistribution{\inputDistribution}{\compoundChannel{\stateSpaceElement}}^\blocklength\Bigg(
  \exp\left(
    \renyiParamTwo
    \sum\limits_{\blockIndex=1}^{\blocklength}
    \log\RNDerivative{\marginalOutputDistribution{\inputDistribution}{\approximateChannel{\channelApproximationError}{\channelApproximationIndex}}}
                    {\marginalOutputDistribution{\inputDistribution}{\compoundChannel{\stateSpaceElement}}}
    (\outputRV_\blockIndex)
  \right)
  \exp(\renyiParamTwo \blocklength \boundParamTwo)
\Bigg)
\\
\nonumber
&\leq
\Expectation_{\marginalOutputDistribution{\inputDistribution}{\compoundChannel{\stateSpaceElement}}^\blocklength}\left(
  \prod\limits_{\blockIndex=1}^{\blocklength}
  \left(
    \RNDerivative{\marginalOutputDistribution{\inputDistribution}{\approximateChannel{\channelApproximationError}{\channelApproximationIndex}}}
                 {\marginalOutputDistribution{\inputDistribution}{\compoundChannel{\stateSpaceElement}}}
    (\outputRV_\blockIndex)
  \right)^\renyiParamTwo
\right)
\exp(-\renyiParamTwo \blocklength \boundParamTwo)
\\
\label{eq:compound-typical-second-factor}
&=
\exp\left(
  (\renyiParamTwo-1)
  \blocklength
  \renyidiv{\renyiParamTwo}
           {\marginalOutputDistribution{\inputDistribution}{\approximateChannel{\channelApproximationError}{\channelApproximationIndex}}}
           {\marginalOutputDistribution{\inputDistribution}{\compoundChannel{\stateSpaceElement}}}
  -\renyiParamTwo \blocklength \boundParamTwo
\right).
\end{align}

Finally, for the third factor, we use $\renyiParamThree < 1$.

\begin{align}
\nonumber
&\hphantom{{}={}}
\inputOutputDistribution{\inputDistribution}{\compoundChannel{\stateSpaceElement}}^\blocklength\Big(
  \informationDensity{\inputDistribution}
                    {\compoundChannel{\stateSpaceElement}}
                    {\inputRV^\blocklength}
                    {\outputRV^\blocklength}
  <
  \blocklength(
    \information{\inputDistribution}
                {\compoundChannel{\stateSpaceElement}}
    -
    \typicalityParameter
    +
    \boundParamOne
    +
    \boundParamTwo
    +
    \channelApproximationError
  )
\Big)
\\
\nonumber
&=
\inputOutputDistribution{\inputDistribution}{\compoundChannel{\stateSpaceElement}}^\blocklength\big(
  \exp\left(
    (\renyiParamThree-1)
    \informationDensity{\inputDistribution}
                      {\compoundChannel{\stateSpaceElement}}
                      {\inputRV^\blocklength}
                      {\outputRV^\blocklength}
  \right)
  >
  \exp\left(
    (\renyiParamThree-1)
    \blocklength(
      \information{\inputDistribution}
                  {\compoundChannel{\stateSpaceElement}}
      -
      \typicalityParameter
      +
      \boundParamOne
      +
      \boundParamTwo
      +
      \channelApproximationError
    )
  \right)
\big)
\\
\nonumber
&\leq
\Expectation_{\inputOutputDistribution{\inputDistribution}{\compoundChannel{\stateSpaceElement}}^\blocklength}\left(
  \prod\limits_{\blockIndex=1}^{\blocklength}
  \left(
    \RNDerivative{\compoundChannel{\stateSpaceElement}(\inputRV_\blockIndex,\cdot)}
                 {\marginalOutputDistribution{\inputDistribution}{\compoundChannel{\stateSpaceElement}}}
    (\outputRV_\blockIndex)
  \right)^{\renyiParamThree-1}
\right)
\cdot
\exp\left(
  -(\renyiParamThree-1)
  \blocklength(
    \information{\inputDistribution}
                {\compoundChannel{\stateSpaceElement}}
    -
    \typicalityParameter
    +
    \boundParamOne
    +
    \boundParamTwo
    +
    \channelApproximationError
  )
\right)
\\
\label{eq:compound-typical-third-factor}
&=
\exp\Big(
  -
  (1-\renyiParamThree)
  \blocklength
  \big(
    \renyidiv{\renyiParamThree}
             {\inputOutputDistribution{\inputDistribution}{\compoundChannel{\stateSpaceElement}}}
             {\inputDistribution \marginalOutputDistribution{\inputDistribution}{\compoundChannel{\stateSpaceElement}}}
    +
    \typicalityParameter
    -
    \information{\inputDistribution}
                {\compoundChannel{\stateSpaceElement}}
    -
    \boundParamOne
    -
    \boundParamTwo
    -
    \channelApproximationError
  \big)
\Big),
\end{align}

Clearly, by (\ref{eq:compound-typical-rnderivative-split}), the union bound and (\ref{eq:approximation-information1}), (\ref{eq:compound-typical-initial}) is upper bounded by the sum of (\ref{eq:compound-typical-first-factor}), (\ref{eq:compound-typical-second-factor}) and (\ref{eq:compound-typical-third-factor}). Next, we argue that these expressions all vanish exponentially with $\blocklength \rightarrow \infty$, using the continuity of Rényi divergence in the order which is shown in~\cite[Theorem 7]{vanerven2014renyi}.

From (\ref{eq:approximation-renyidiv}), the theorem of monotone convergence and (\ref{eq:approximation-kldiv1}), we can conclude that
\[
\lim_{\renyiParamOne \searrow 1}
  \Expectation_\inputDistribution
    \renyidiv{\renyiParamOne}
              {\compoundChannel{\stateSpaceElement}(\inputRV_\blockIndex, \cdot)}
              {\approximateChannel{\channelApproximationError}{\channelApproximationIndex}(\inputRV_\blockIndex, \cdot)}
=
\Expectation_\inputDistribution
  \kldiv{\compoundChannel{\stateSpaceElement}(\inputRV_\blockIndex, \cdot)}
           {\approximateChannel{\channelApproximationError}{\channelApproximationIndex}(\inputRV_\blockIndex, \cdot)}
\leq
\channelApproximationError,
\]
so, (\ref{eq:compound-pickbound1}) allows us to fix $\renyiParamOne$ at a value greater than $1$ such that
$
    \boundParamOne
    -
    \Expectation_\inputDistribution
      \renyidiv{\renyiParamOne}
               {\compoundChannel{\stateSpaceElement}(\inputRV_\blockIndex, \cdot)}
               {\approximateChannel{\channelApproximationError}{\channelApproximationIndex}(\inputRV_\blockIndex, \cdot)}
>
0
$
and hence, (\ref{eq:compound-typical-first-factor}) vanishes exponentially.

(\ref{eq:compound-typical-second-factor}) is true for all $\renyiParamTwo < 1$. Since the inequalities are not strict, we can take the limit $\renyiParamTwo \nearrow 1$ and argue that the statement is also valid for $\renyiParamTwo = 1$.

$
\renyidiv{\renyiParamThree}
          {\inputOutputDistribution{\inputDistribution}{\compoundChannel{\stateSpaceElement}}}
          {\inputDistribution \marginalOutputDistribution{\inputDistribution}{\compoundChannel{\stateSpaceElement}}}
$
converges to
$
    \information{\inputDistribution}
                {\compoundChannel{\stateSpaceElement}}
$
from below for $\renyiParamThree \nearrow 1$ and so (\ref{eq:compound-pickbound2}) allows us to fix $\renyiParamThree$ at a value less than $1$ such that
$
\renyidiv{\renyiParamThree}
          {\inputOutputDistribution{\inputDistribution}{\compoundChannel{\stateSpaceElement}}}
          {\inputDistribution \marginalOutputDistribution{\inputDistribution}{\compoundChannel{\stateSpaceElement}}}
+
\typicalityParameter
-
\information{\inputDistribution}
            {\compoundChannel{\stateSpaceElement}}
-
\boundParamOne
-
\boundParamTwo
-
\channelApproximationError
>
0
$
and therefore, (\ref{eq:compound-typical-third-factor}) also vanishes exponentially.

For the second summand in (\ref{eq:compound-typical-atypical-split}), we use the definition (\ref{eq:compound-atypical-event}) to argue that
$
\Expectation_\codebook(\Probability(\errorevent_2))
$
is upper bounded by

\begin{equation}
\label{eq:compound-atypical-apply-def}
\exp(\blocklength\codebookRate)
\sum\limits_{\channelApproximationIndex=1}^\channelApproximationNumber{\channelApproximationError}
  \inputDistribution^\blocklength \marginalOutputDistribution{\inputDistribution}{\compoundChannel{\stateSpaceElement}}^\blocklength\bigg(
    \informationDensity{\inputDistribution}{\approximateChannel{\channelApproximationError}{\channelApproximationIndex}}{\inputRV^\blocklength}{\outputRV^\blocklength}
%    \\
    \geq
    \blocklength\left(\information{\inputDistribution}{\approximateChannel{\channelApproximationError}{\channelApproximationIndex}} - \typicalityParameter \right)
  \bigg).
\end{equation}

We define the indicator function
\[
\indicatorAbbrev(\inputAlphabetElement^\blocklength, \outputAlphabetElement^\blocklength)
:=
\begin{cases}
1,
&\informationDensity{\inputDistribution}{\approximateChannel{\channelApproximationError}{\channelApproximationIndex}}{\inputAlphabetElement^\blocklength}{\outputAlphabetElement^\blocklength}
    \geq
    \blocklength\left(\information{\inputDistribution}{\approximateChannel{\channelApproximationError}{\channelApproximationIndex}} - \typicalityParameter \right) \\
0, &\text{otherwise}.
\end{cases}
\]
Using the definition of information density for a change of measure and multiplying one, we rewrite the probability that appears in (\ref{eq:compound-atypical-apply-def}) as
\begin{align*}
&\hphantom{{}={}}
\inputDistribution^\blocklength \marginalOutputDistribution{\inputDistribution}{\compoundChannel{\stateSpaceElement}}^\blocklength\bigg(
  \informationDensity{\inputDistribution}{\approximateChannel{\channelApproximationError}{\channelApproximationIndex}}{\inputRV^\blocklength}{\outputRV^\blocklength}
  \geq
  \blocklength\left(\information{\inputDistribution}{\approximateChannel{\channelApproximationError}{\channelApproximationIndex}} - \typicalityParameter \right)
\bigg)
\\
&=
\hspace{-3pt}
\int\limits_{\inputAlphabet^\blocklength \times \outputAlphabet^\blocklength}
  \indicatorAbbrev(\inputAlphabetElement^\blocklength, \outputAlphabetElement^\blocklength)
  \cdot \inputDistribution^\blocklength \marginalOutputDistribution{\inputDistribution}{\compoundChannel{\stateSpaceElement}}^\blocklength(d\inputAlphabetElement^\blocklength,d\outputAlphabetElement^\blocklength)
\\
&=
\hspace{-3pt}
\int\limits_{\inputAlphabet^\blocklength \times \outputAlphabet^\blocklength}
  \exp\left(
    -
    \informationDensity{\inputDistribution}
                       {\compoundChannel{\stateSpaceElement}}
                       {\inputAlphabetElement^\blocklength}
                       {\outputAlphabetElement^\blocklength}
  \right)
  \indicatorAbbrev(\inputAlphabetElement^\blocklength, \outputAlphabetElement^\blocklength)
  \inputOutputDistribution{\inputDistribution}{\compoundChannel{\stateSpaceElement}}^\blocklength(d\inputAlphabetElement^\blocklength,d\outputAlphabetElement^\blocklength)
\\
&=
\hspace{-3pt}
\int\limits_{\inputAlphabet^\blocklength \times \outputAlphabet^\blocklength}
  \exp
    \big(-
    \informationDensity{\inputDistribution}
                       {\compoundChannel{\stateSpaceElement}}
                       {\inputAlphabetElement^\blocklength}
                       {\outputAlphabetElement^\blocklength}
    +
    \informationDensity{\inputDistribution}
                       {\approximateChannel{\channelApproximationError}{\channelApproximationIndex}}
                       {\inputAlphabetElement^\blocklength}
                       {\outputAlphabetElement^\blocklength}
    -
    \informationDensity{\inputDistribution}
                       {\approximateChannel{\channelApproximationError}{\channelApproximationIndex}}
                       {\inputAlphabetElement^\blocklength}
                       {\outputAlphabetElement^\blocklength}
  \big)
  \cdot \indicatorAbbrev(\inputAlphabetElement^\blocklength, \outputAlphabetElement^\blocklength)
  \inputOutputDistribution{\inputDistribution}{\compoundChannel{\stateSpaceElement}}^\blocklength(d\inputAlphabetElement^\blocklength,d\outputAlphabetElement^\blocklength)
\end{align*}

Because of the presence of the indicator, we can uniformly bound
\[
\informationDensity{\inputDistribution}{\approximateChannel{\channelApproximationError}{\channelApproximationIndex}}{\inputAlphabetElement^\blocklength}{\outputAlphabetElement^\blocklength}
\geq
\blocklength\left(\information{\inputDistribution}{\approximateChannel{\channelApproximationError}{\channelApproximationIndex}} - \typicalityParameter \right)
\]
and the indicator itself can be upper bounded by $1$. This yields
\begin{align*}
&\hphantom{{}={}}
\inputDistribution^\blocklength \marginalOutputDistribution{\inputDistribution}{\compoundChannel{\stateSpaceElement}}^\blocklength\bigg(
  \informationDensity{\inputDistribution}{\approximateChannel{\channelApproximationError}{\channelApproximationIndex}}{\inputRV^\blocklength}{\outputRV^\blocklength}
  \geq
  \blocklength\left(\information{\inputDistribution}{\approximateChannel{\channelApproximationError}{\channelApproximationIndex}} - \typicalityParameter \right)
\bigg)
\\
&\leq
\begin{multlined}[t]
  \exp\left(-\blocklength\left(\information{\inputDistribution}{\approximateChannel{\channelApproximationError}{\channelApproximationIndex}} - \typicalityParameter \right)\right)
  \\
  \int\limits_{\inputAlphabet^\blocklength \times \outputAlphabet^\blocklength}
  \begin{aligned}[t]
    &\exp\Big(
        -
        \informationDensity{\inputDistribution}
                          {\compoundChannel{\stateSpaceElement}}
                          {\inputAlphabetElement^\blocklength}
                          {\outputAlphabetElement^\blocklength}
        +
        \informationDensity{\inputDistribution}
                          {\approximateChannel{\channelApproximationError}{\channelApproximationIndex}}
                          {\inputAlphabetElement^\blocklength}
                          {\outputAlphabetElement^\blocklength}
    \Big)
  \\
  &\cdot \inputOutputDistribution{\inputDistribution}{\compoundChannel{\stateSpaceElement}}^\blocklength(d\inputAlphabetElement^\blocklength,d\outputAlphabetElement^\blocklength).
  \end{aligned}
\end{multlined}
\end{align*}
We expand the definition of information density and apply Fubini's Theorem to rewrite the integral as
\[
\int\limits_{\outputAlphabet^\blocklength}\left(
  \int\limits_{\inputAlphabet^\blocklength}
    \RNDerivative{\approximateChannel{\channelApproximationError}{\channelApproximationIndex}^\blocklength(\inputAlphabetElement^\blocklength, \cdot)}
                 {\marginalOutputDistribution{\inputDistribution}{\approximateChannel{\channelApproximationError}{\channelApproximationIndex}}^\blocklength}
    (\outputAlphabetElement^\blocklength)
    \inputDistribution^\blocklength(d\inputAlphabetElement^\blocklength)
  \right)
  \marginalOutputDistribution{\inputDistribution}{\compoundChannel{\stateSpaceElement}}^\blocklength(d\outputAlphabetElement^\blocklength)
\]
and observe that it equals $1$.

Combining with (\ref{eq:compound-atypical-apply-def}) and applying (\ref{eq:approximation-information2}), we obtain
\begin{align}
\nonumber
\Expectation_\codebook(\Probability(\errorevent_2))
&\leq
\exp(\blocklength\codebookRate)
\sum\limits_{\channelApproximationIndex=1}^\channelApproximationNumber{\channelApproximationError}
  \exp\left(-\blocklength\left(\information{\inputDistribution}{\approximateChannel{\channelApproximationError}{\channelApproximationIndex}} - \typicalityParameter \right)\right)
\\
\nonumber
&\leq
\exp(\blocklength\codebookRate)
\sum\limits_{\channelApproximationIndex=1}^\channelApproximationNumber{\channelApproximationError}
  \exp\left(-\blocklength\left(\inf_{\stateSpaceElement \in \stateSpace}\information{\inputDistribution}{\compoundChannel{\stateSpaceElement}} - \typicalityParameter  - \channelApproximationError \right)\right)
\\
&=
\label{eq:compound-atypical-final}
\exp\left(
  -
  \blocklength\left(
    \inf_{\stateSpaceElement \in \stateSpace}\information{\inputDistribution}{\compoundChannel{\stateSpaceElement}}
    -
    \typicalityParameter 
    -
    \codebookRate
    -
    \channelApproximationError
    -
    \frac{\log \channelApproximationNumber{\channelApproximationError}}{\blocklength}
  \right)
\right).
\end{align}

We observe that by (\ref{eq:compound-picktypicality}), 
$
\inf_{\stateSpaceElement \in \stateSpace}\information{\inputDistribution}{\compoundChannel{\stateSpaceElement}}
-
\typicalityParameter 
-
\codebookRate
-
\channelApproximationError
>
0
$.

Finally, we pick
\begin{align*}
\finalconst
\in
\Bigg(
  0,
  \min\bigg(
    &\begin{aligned}[t]
    &
    (\renyiParamOne-1)
     \cdot \left(
      \boundParamOne
      -
      \Expectation_\inputDistribution
        \renyidiv{\renyiParamOne}
                {\compoundChannel{\stateSpaceElement}(\inputRV_\blockIndex, \cdot)}
                {\approximateChannel{\channelApproximationError}{\channelApproximationIndex}(\inputRV_\blockIndex, \cdot)}
    \right),
    \\
    &\boundParamTwo
    ,
    \\
    &
    (1-\renyiParamThree)
    \big(
      \renyidiv{\renyiParamThree}
              {\inputOutputDistribution{\inputDistribution}{\compoundChannel{\stateSpaceElement}}}
              {\inputDistribution \marginalOutputDistribution{\inputDistribution}{\compoundChannel{\stateSpaceElement}}}
      +
      \typicalityParameter
      -
      \information{\inputDistribution}
                  {\compoundChannel{\stateSpaceElement}}
      -
      \boundParamOne
      -
      \boundParamTwo
      -
      \channelApproximationError
    \big)
    ,
    \end{aligned}
    \\
    &\inf_{\stateSpaceElement \in \stateSpace}\information{\inputDistribution}{\compoundChannel{\stateSpaceElement}}
    -
    \typicalityParameter 
    -
    \codebookRate
    -
    \channelApproximationError
  \bigg)
\Bigg).
\end{align*}
Since the exponent in (\ref{eq:compound-atypical-final}) is then negative for sufficiently large $\blocklength$, we can combine it with (\ref{eq:compound-typical-first-factor}), (\ref{eq:compound-typical-second-factor}) and (\ref{eq:compound-typical-third-factor}) to obtain (\ref{eq:compound-finalerror}).
\end{proof}

\subsection{Cost Constraint}
\label{sec:cost-constraint}
In this section, we use standard techniques to extend Theorem~\ref{theorem:compound} to the case of cost-constrained code books. We define an \emph{additive cost constraint} \gls{costConstraint} for an input alphabet $\inputAlphabet$  consisting of a function $\costFunction: \inputAlphabet \rightarrow [0,\infty)$ and a number $\costConstraint \in [0,\infty)$. Given any $\blocklength$, we say that $\inputAlphabetElement^\blocklength \in \inputAlphabet^\blocklength$ \emph{satisfies} the cost constraint if
$
\sum_{\blockIndex=1}^\codebookBlocklength
  \costFunction(\inputAlphabetElement_\blockIndex)
\leq
\codebookBlocklength
\costConstraint
$.

The specialization of this definition to a usual average power constraint would be to pick the square function as $\costFunction$ and the maximum admissible average power as $\costConstraint$.

As long as there is at least one $\inputAlphabetElement^\blocklength \in \inputAlphabet^\blocklength$ which satisfies the cost constraint $(\costFunction, \costConstraint)$, given any codebook $\codebook$ of block length $\blocklength$, we can define an associated \emph{cost-constrained codebook} $\codebook_{\costFunction, \costConstraint}$ which is generated from $\codebook$ by replacing all code words that do not satisfy the cost constraint with $\inputAlphabetElement^\blocklength$. Obviously, all code words in a cost-constrained codebook satisfy the cost constraint. We say that a cost constraint $(\costFunction,\costConstraint)$ is \emph{compatible} with an input distribution $\inputDistribution$ if for a random variable $\inputRV$ distributed according to $\inputDistribution$, $\costFunction(\inputRV)$ has a finite moment generating function in an interval containing $0$ in its interior and $\costConstraint > \Expectation_\inputDistribution \costFunction(\inputRV)$.

With these preliminary definitions, we can now state the compound channel coding result under an additive cost constraint.

\begin{cor}
\label{cor:compound}
In the setting of Theorem~\ref{theorem:compound}, and given an additive cost constraint $(\costFunction,\costConstraint)$ compatible with $\inputDistribution$, there are $\finalconstOne, \finalconstTwo > 0$ such that for sufficiently large $\blocklength$,
\begin{align}
\label{eq:cor-compound}
\Probability_{\codebook_{\costFunction,\costConstraint}}\big(\errorProb \geq \exp(-\blocklength \finalconstOne)\big) < \exp(-\blocklength \finalconstTwo).
\end{align}
\end{cor}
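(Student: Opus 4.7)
The plan is to derive Corollary~\ref{cor:compound} from Theorem~\ref{theorem:compound} by a three-step reduction: (i) convert the expectation bound of Theorem~\ref{theorem:compound} into a tail bound over $\codebook$ via Markov's inequality; (ii) bound the number of codewords in $\codebook$ that violate the cost constraint via a Chernoff argument; and (iii) couple the error probabilities of $\codebook$ and $\codebook_{\costFunction,\costConstraint}$ to lift both guarantees to the cost-constrained codebook. Step~(i) is immediate: Markov applied to~\eqref{eq:compound-finalerror} gives $\Probability_\codebook(\Probability(\errorevent) \geq \exp(-\blocklength\finalconst/2)) \leq \exp(-\blocklength\finalconst/2)$, so with exponentially high probability over $\codebook$ the unconstrained error is already exponentially small.

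For step~(ii), I would exploit the compatibility hypothesis: since $\costFunction(\inputRV)$ admits a finite moment generating function on an open interval around $0$ and $\costConstraint > \Expectation_\inputDistribution \costFunction(\inputRV)$, a standard Cram\'er-type Chernoff bound yields some $\kappa > 0$ such that an i.i.d.\ codeword of length $\blocklength$ satisfies $\Probability(\sum_{\blockIndex=1}^\blocklength \costFunction(\inputRV_\blockIndex) > \blocklength\costConstraint) \leq \exp(-\blocklength\kappa)$. The number $N$ of codewords in $\codebook$ failing the constraint is then binomial with $\exp(\blocklength\codebookRate)$ trials and success probability at most $\exp(-\blocklength\kappa)$. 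When $\codebookRate < \kappa$, $\Expectation N$ is exponentially small and Markov yields $N = 0$ with exponentially high probability; when $\codebookRate \geq \kappa$, a multiplicative Chernoff bound on the binomial gives $N \leq 2\exp(\blocklength(\codebookRate-\kappa))$ with doubly exponential probability. In either case there is $\finalconstThree > 0$ such that $N\exp(-\blocklength\codebookRate) \leq 2\exp(-\blocklength\kappa)$ with probability at least $1 - \exp(-\blocklength\finalconstThree)$.

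For step~(iii), I would use as decoder for $\codebook_{\costFunction,\costConstraint}$ the same decoder that Theorem~\ref{theorem:compound} supplies for $\codebook$. For every message $\codewordIndex$ whose codeword was not replaced, the transmitted codeword and the decoder are identical to those in the unconstrained setting, so the marginal error probability is unchanged; for each of the $N$ replaced messages one trivially upper-bounds the marginal error by $1$. Averaging over a uniform message index then gives $\Probability(\errorevent) \leq \Probability(\errorevent_\codebook) + N\exp(-\blocklength\codebookRate)$, and a union bound over the good events from steps~(i) and~(ii) yields~\eqref{eq:cor-compound} with any $\finalconstOne$ slightly below $\min(\finalconst/2,\kappa)$ and $\finalconstTwo$ slightly below $\min(\finalconst/2,\finalconstThree)$. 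The one subtlety I anticipate is in step~(ii): a plain Markov bound on $N$ is too weak once $\codebookRate \geq \kappa$, which forces the use of a multiplicative Chernoff bound on the binomial; however, the resulting failure probability is then doubly exponentially small and composes cleanly with the single-exponential bounds from the other steps.
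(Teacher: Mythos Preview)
Your proposal is correct and follows essentially the same approach the paper sketches: Markov's inequality to pass from the expectation bound of Theorem~\ref{theorem:compound} to a tail bound, together with a standard cost-constraint argument of the kind in~\cite[Section~3.3]{elgamal2011network}.

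One remark on the ``subtlety'' you flag in step~(ii): it can be sidestepped entirely by reversing the order of steps~(i) and~(ii)--(iii). Your coupling inequality $\Probability(\errorevent\mid\codebook_{\costFunction,\costConstraint}) \leq \Probability(\errorevent\mid\codebook) + N\exp(-\blocklength\codebookRate)$ already holds deterministically for every realization of $\codebook$; taking the expectation over $\codebook$ and using $\Expectation_\codebook[N\exp(-\blocklength\codebookRate)] \leq \exp(-\blocklength\kappa)$ gives directly $\Expectation_\codebook\big[\Probability(\errorevent\mid\codebook_{\costFunction,\costConstraint})\big] \leq \exp(-\blocklength\finalconst) + \exp(-\blocklength\kappa)$, after which a single application of Markov yields~\eqref{eq:cor-compound}. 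This is the usual El~Gamal--Kim route and avoids both the case split on $\codebookRate$ versus $\kappa$ and the multiplicative Chernoff bound on the binomial; your more elaborate high-probability control of $N$ is valid but not needed here.
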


The approach used in the proof of Corollary~\ref{cor:compound} is similar to the one in~\cite[Section 3.3]{elgamal2011network}, but we include the adapted derivations in full here for the sake of self-containedness. Our approach is based on the idea that in probabilistic constructions, the union bound assures that even exponentially many constraints that are satisfied with a super-exponential error bound individually are simultaneously satisfied except for an error event of super-exponentially small probability. Such ideas have already been used in earlier works of information theory, such as~\cite{ahlswede1978elimination} and, in the context of Gaussian channels,~\cite{he2014mimo}. We begin with a series of preliminary lemmas and conclude the section with the proof of Corollary~\ref{cor:compound}.

\begin{lemma}
\label{lemma:lln-exponential}
Let $(\generalrvOne_\generalSummationIndex)_{\generalSummationIndex \geq 1}$ be a sequence of independent and identically distributed random variables such that the moment generating function $\momentGeneratingFunction(\chernoffParam) := \Expectation \exp(\chernoffParam \generalrvOne_1)$ exists on an interval containing $0$ in its interior. Let $\costConstraint > \Expectation \generalrvOne_1$. Then there exists $\finalconst > 0$ such that
\[
\Probability
\left(
  \sum\limits_{\generalSummationIndex=1}^\generalSummationBound \generalrvOne_\generalSummationIndex > \generalSummationBound \costConstraint
\right)
\leq
\exp(-\generalSummationBound\finalconst).
\]
\end{lemma}
\begin{proof}
We can without loss of generality assume that $\costConstraint=0$ and $\Expectation(\generalrvOne_1) < 0$, because otherwise we could consider the random variables $(\generalrvOne_\generalSummationIndex - \costConstraint)_{\generalSummationIndex \geq 1}$ instead.

Clearly, $\momentGeneratingFunction(0) = 1$ and $\momentGeneratingFunction'(0) = \Expectation(\generalrvOne_1) < 0$, so we can find some $\chernoffParam > 0$ sufficiently small such that $\momentGeneratingFunction(\chernoffParam) < 1$. With this choice of $\chernoffParam$, we can apply Markov's inequality and get
\begin{align*}
\Probability
\left(
  \sum\limits_{\generalSummationIndex=1}^\generalSummationBound \generalrvOne_\generalSummationIndex > 0
\right)
&=
\Probability
\left(
  \exp\left(
    \chernoffParam
    \sum\limits_{\generalSummationIndex=1}^\generalSummationBound \generalrvOne_\generalSummationIndex
  \right)
  >
  1
\right)
\\
&\leq
\Expectation
\left(
  \exp\left(
    \chernoffParam
    \sum\limits_{\generalSummationIndex=1}^\generalSummationBound \generalrvOne_\generalSummationIndex
  \right)
\right)
\\
&=
\momentGeneratingFunction(\chernoffParam)^\generalSummationBound
\end{align*}
so the lemma follows by choosing $\finalconst := -\log \momentGeneratingFunction(\chernoffParam)$.
\end{proof}

\begin{lemma}
\label{lemma:bernoullidoubleexp}
Let $\lemmaBadCodewords$ be a Bernoulli random variable with $\exp(\blocklength\codebookRate)$ trials and success probability $\lemmapvalue \leq \exp(-\blocklength\proofconstantOne)$ where $\proofconstantOne<\codebookRate/2$. Then there are $\finalconstOne, \finalconstTwo > 0$ such that for sufficiently large $\blocklength$,
\begin{align}
\label{eq:bernoullidoubleexp}
\Probability(\lemmaBadCodewords > \exp(\blocklength(\codebookRate-\finalconstOne)))
\leq
\exp(-\exp(\blocklength\finalconstTwo)).
\end{align}
\end{lemma}
\begin{proof}
We choose $\finalconstOne$, $\finalconstTwo$ and $\proofconstantTwo$ such that $0 < \finalconstOne < \proofconstantOne < \proofconstantTwo < \codebookRate/2$ and $\finalconstTwo < \codebookRate - 2\proofconstantTwo$. Then
\begin{align}
\Probability
\big(
  \lemmaBadCodewords
  >
  \exp(\codebookBlocklength(\codebookRate - \finalconstOne))
\big)
\nonumber
&=
\Probability
\big(
  \lemmaBadCodewords
  >
  \lemmapvalue
  \exp(\codebookBlocklength\codebookRate)
  +
  (\exp(-\codebookBlocklength\finalconstOne)-\lemmapvalue)
  \exp(\codebookBlocklength\codebookRate)
\big)
\nonumber
\\
&\leq
\Probability
\big(
  \lemmaBadCodewords
  >
  \Expectation \lemmaBadCodewords
  +
  (\exp(-\codebookBlocklength\finalconstOne)-\exp(-\codebookBlocklength\proofconstantOne))
  \exp(\codebookBlocklength\codebookRate)
\big)
\nonumber
\\
&\leq
\Probability
\big(
  \lemmaBadCodewords
  >
  \Expectation \lemmaBadCodewords
  +
  \exp(-\codebookBlocklength\proofconstantTwo)
  \exp(\codebookBlocklength\codebookRate)
\big)
\label{proof:cost-badcodewords-proofconstantTwo}
\\
&\leq
\exp\left(
  -2 \frac{\big(\exp(-\codebookBlocklength\proofconstantTwo)\big)^2\big(\exp(\codebookBlocklength\codebookRate)\big)^2}
          {\exp(\blocklength \codebookRate)}
\right)
\label{proof:cost-badcodewords-chernoff-hoeffding}
\\
&=
\exp\left(
  -2 \exp(\codebookBlocklength(\codebookRate - 2\proofconstantTwo))
\right)
\\
&\leq
\exp(-\exp(\codebookBlocklength\finalconstTwo)),
\label{proof:cost-badcodewords-finalconstTwo}
\end{align}
where (\ref{proof:cost-badcodewords-chernoff-hoeffding}) follows by the Chernoff-Hoeffding bound as stated for instance in~\cite[Theorem 1.1, eq. (1.6)]{dubhashi2009concentration}.
\end{proof}

\begin{lemma}
\label{lemma:cost-badcodewords}
Let $\inputDistribution$ be a probability distribution on $\inputAlphabet$. Assume moreover that $\costFunction(\inputRV)$ has a moment generating function defined on an interval containing $0$ in its interior and that $\costConstraint > \Expectation_{\inputDistribution} \costFunction(\inputRV)$. Denote the number of bad code words in $\codebook$ with
\[
\lemmaBadCodewords
:=
\sum\limits_{\codewordIndex=1}^{\exp(\codebookBlocklength\codebookRate)}
  \indicator{
    \sum_{\blockIndex=1}^\codebookBlocklength
      \costFunction(\codeword{\codewordIndex}(\blockIndex))
      >
      \codebookBlocklength\costConstraint
  }.
\]
Then there are $\finalconstOne, \finalconstTwo > 0$ such that
\begin{align}
\label{eq:lln-exponential}
\Probability_\codebook
\left(
  \lemmaBadCodewords
  >
  \exp(\codebookBlocklength(\codebookRate-\finalconstOne))
\right)
\leq
\exp(-\exp(\codebookBlocklength\finalconstTwo)).
\end{align}
\end{lemma}

\begin{proof}
Since the code word components are independently and identically distributed, we can apply Lemma~\ref{lemma:lln-exponential} and obtain an arbitrarily small $\proofconstantOne > 0$ such that for all $\codewordIndex$,
\[
\lemmapvalue
:=
\Probability_\codebook
\left(
  \sum_{\blockIndex=1}^\codebookBlocklength
    \costFunction(\codeword{\codewordIndex}(\blockIndex))
  >
  \codebookBlocklength\costConstraint
\right)
\leq
\exp(-\codebookBlocklength\proofconstantOne).
\]
So since the code words are independent, $\lemmaBadCodewords$ is a Bernoulli variable with $\exp(\codebookBlocklength\codebookRate)$ trials and success probability $\lemmapvalue$, and an application of Lemma~\ref{lemma:bernoullidoubleexp} proves the conclusion.
\end{proof}

\begin{proof}[Proof of Corollary~\ref{cor:compound}]

Assume throughout the proof that $\blocklength$ is sufficiently large. By Lemma~\ref{lemma:cost-badcodewords}, we have $\hat{\finalconstOne},\hat{\finalconstTwo} \in (0,\infty)$ with
\begin{align}
\label{eq:cor-compound-badcodewords}
\Probability_\codebook(\hat{\errorevent})
\leq
\exp(-\exp(\blocklength\hat{\finalconstTwo})),
\end{align}
where
\[
\hat{\errorevent}
:= 
\{\Probability_\messageRV(\codeword{\messageRV} \neq \codewordsub{\costFunction, \costConstraint}{\messageRV}) > \exp(-\blocklength\hat{\finalconstOne}) \}.
\]
We denote the error of $\codebook$ with $\errorProb_\codebook$ and the error of $\codebook_{\costFunction,\costConstraint}$ with $\errorProb_{\codebook_{\costFunction,\costConstraint}}$. By Theorem~\ref{theorem:compound} and Markov's inequality, we have, for some $\hat{\finalconst} \in (0,\infty)$ given by the theorem and with choices $\tilde{\finalconstOne} \in (0,\min(\hat{\finalconst},\hat{\finalconstOne})), \tilde{\finalconstTwo} \in (0,\hat{\finalconst}-\tilde{\finalconstOne})$,
\begin{align}
\nonumber
\Probability_\codebook(\errorProb_\codebook \geq \exp(-\blocklength\tilde{\finalconstOne}))
&\leq
\Expectation_\codebook \errorProb_\codebook \exp(\blocklength\tilde{\finalconstOne})
\\
\nonumber
&\leq
\exp(-\blocklength(\hat{\finalconst} - \tilde{\finalconstOne}))
\\
\label{eq:cor-compound-noconstraint}
&\leq
\exp(-\blocklength\tilde{\finalconstTwo}).
\end{align}

Conditioned on the complement of $\hat{\errorevent}$, we have
\begin{align}
\nonumber
\errorProb_{\codebook_{\costFunction,\costConstraint}}
\overset{(a)}&{=}
\sup_{\stateSpaceElement \in \stateSpace}
  \Expectation_\messageRV\Big(
    \Probability_\stateSpaceElement\big(
      \messageRV \neq \genericDecoder(\outputRV^\blocklength)
      |
      \inputRV^\blocklength = \codebook_{\costFunction,\costConstraint}(\messageRV)
    \big)
  \Big)
\\
\nonumber
&=
\sup_{\stateSpaceElement \in \stateSpace}
  \sum_{\codewordIndex=1}^{\exp(\blocklength\codebookRate)}
    \exp(-\blocklength\codebookRate)
    \Probability_\stateSpaceElement\big(
      \codewordIndex \neq \genericDecoder(\outputRV^\blocklength)
      |
      \inputRV^\blocklength = \codebook_{\costFunction,\costConstraint}(\codewordIndex)
    \big)
\\
\nonumber
\overset{(b)}&{\leq}
\sup_{\stateSpaceElement \in \stateSpace}
  \hspace{-5pt}
  \sum_{\substack{\codewordIndex=1 \\ \codebook_{\costFunction,\costConstraint}(\codewordIndex)=\codebook(\codewordIndex)}}^{\exp(\blocklength\codebookRate)}
  \hspace{-18pt}
    \exp(-\blocklength\codebookRate)
    \Probability_\stateSpaceElement\big(
      \codewordIndex \neq \genericDecoder(\outputRV^\blocklength)
      |
      \inputRV^\blocklength = \codebook_{\costFunction,\costConstraint}(\codewordIndex)
    \big)
  +
  \sum_{\substack{\codewordIndex=1 \\ \codebook_{\costFunction,\costConstraint}(\codewordIndex)\neq\codebook(\codewordIndex)}}^{\exp(\blocklength\codebookRate)}
    \exp(-\blocklength\codebookRate)
\\
\label{eq:cor-compound-conditioned}
\overset{(a)}&{\leq}
\errorProb_{\codebook}
+
\exp(-\blocklength\hat{\finalconstOne}),
\end{align}
where the steps marked with (a) are by the definition of compound coding error, and (b) is by upper bounding some of the probabilities in the sum with $1$. We can now choose $\finalconstOne \in (0, \tilde{\finalconstOne})$ and obtain
\begin{align*}
\Probability_\codebook\big(
  \errorProb_{\codebook_{\costFunction,\costConstraint}}
  \geq
  \exp(-\blocklength\finalconstOne)
\big)
\overset{(a)}&{\leq}
\Probability_\codebook\big(
  \errorProb_{\codebook_{\costFunction,\costConstraint}}
  \geq
  \exp(-\blocklength\finalconstOne)
  |
  \neg \hat{\errorevent}
\big)
+
\Probability_\codebook(\hat{\errorevent})
\\
\overset{(\ref{eq:cor-compound-conditioned})}&{\leq}
\Probability_\codebook\big(
  \errorProb_{\codebook}
  +
  \exp(-\blocklength\hat{\finalconstOne})
  \geq
  \exp(-\blocklength\finalconstOne)
  |
  \neg \hat{\errorevent}
\big)
+
\Probability_\codebook(\hat{\errorevent})
\\
\overset{(a)}&{\leq}
\frac{
  \Probability_\codebook\big(
    \errorProb_{\codebook}
    \geq
    \exp(-\blocklength\finalconstOne)
    -
    \exp(-\blocklength\hat{\finalconstOne})
  \big)
}{1-\Probability_\codebook(\hat{\errorevent})}
+
\Probability_\codebook(\hat{\errorevent})
\\
\overset{(b)}&{\leq}
\frac{
  \Probability_\codebook\big(
    \errorProb_{\codebook}
    \geq
    \exp(-\blocklength\tilde{\finalconstOne})
  \big)
}{1-\Probability_\codebook(\hat{\errorevent})}
+
\Probability_\codebook(\hat{\errorevent})
\\
\overset{(\ref{eq:cor-compound-badcodewords}),(\ref{eq:cor-compound-noconstraint})}&{\leq}
\frac{
  \exp(-\blocklength\tilde{\finalconstTwo})
}{1-\exp(-\exp(\blocklength\hat{\finalconstTwo}))}
+
\exp(-\exp(\blocklength\hat{\finalconstTwo}))
\\
\overset{(c)}&{\leq}
\exp(-\blocklength\finalconstTwo),
\end{align*}
where the steps marked with (a) are by the law of total probability, step (b) is by the choices of $\finalconstOne,\tilde{\finalconstOne}$, and step (c) is valid for any choice of $\finalconstTwo \in (0,\tilde{\finalconstTwo})$.
\end{proof}

\section{Jamming Strategies Induced by Random Code Books}
\label{sec:general-channel}
In this section, we leverage the results of Section~\ref{sec:compound} in conjunction with a known channel resolvability result to establish the main technical contribution that goes into the proof of Theorem~\ref{theorem:awgn-scheme-real}. The results and arguments in this section (except for the proof of Theorem~\ref{theorem:awgn-scheme-real}) are not specific to \gls{awgn} channels. In this section, we therefore use the system model described in Section~\ref{sec:dfaj} without the specializations from Section~\ref{sec:AWGN}. We fix an arbitrary admissible \gls{dfa} scheme as defined in Section~\ref{sec:dfaj}. Such a scheme will induce effective channels for $\bob$ and $\eve$ as outlined in Fig.~\ref{fig:systemmodel}. We denote the legitimate user's effective channel, which is a stochastic kernel mapping from $\analogMessageAlphabet{1} \times \dots \times \analogMessageAlphabet{\aliceNum} \times \inputAlphabet$ to $\outputAlphabet$, by \gls{effectiveChannelBob} and the eavesdropper's effective channel, which is a stochastic kernel mapping from $\analogMessageAlphabet{1} \times \dots \times \analogMessageAlphabet{\aliceNum} \times \inputAlphabet$ to $\eveAlphabet$, by \gls{effectiveChannelEve}.

In this section, we analyze jamming strategies that are induced by a codebook in the following sense: The jammer draws a code word index $\messageRV$ uniformly at random and transmits $\codeword{\messageRV}$, the code word in $\codebook$ indexed by $\messageRV$. Therefore, the number of code words in the codebook controls the amount of randomness contained in the jamming signal. We use the same random ensemble of code books that is defined at the beginning of Section~\ref{sec:compound-coding-result}.

With these concepts and notations defined, we are ready to state the main result of this section, which gives sufficient conditions for the existence of a jamming scheme that can simultaneously ensure that the legitimate receiver is able to reconstruct the full jamming signal and limit the usefulness of the eavesdropper's received signal.

\begin{theorem}
\label{theorem:jammersystem}
Let $\inputDistribution$ be a jammer input distribution.
Suppose that for every $\channelApproximationError > 0$, there is some $\channelApproximationNumber{\channelApproximationError}$ such that the compound channel $(\compoundChannel{\stateSpaceElement})_{\stateSpaceElement \in \stateSpace}$ defined by $\stateSpace := \stateSpace_1 \times \ldots \times \stateSpace_\aliceNum$ and $\compoundChannel{(\analogMessageAlphabetElement{1}, \dots, \analogMessageAlphabetElement{\aliceNum})} := \effectiveChannelBob(\analogMessageAlphabetElement{1}, \dots, \analogMessageAlphabetElement{\aliceNum}, \cdot, \cdot)$ can be $(\channelApproximationError,\channelApproximationNumber{\channelApproximationError})$-approximated under $\inputDistribution$. Suppose further that for all $\analogMessageAlphabetElement{1} \in \analogMessageAlphabet{1}, \dots, \analogMessageAlphabetElement{\aliceNum} \in \analogMessageAlphabet{\aliceNum}$, the moment-generating function
\[
\Expectation \exp(\generalReal \cdot \informationDensity{\inputDistribution}{\effectiveChannelEve(\analogMessageAlphabetElement{1}, \dots, \analogMessageAlphabetElement{\aliceNum}, \cdot, \cdot)}{\inputRV}{\eveOutputRV})
\]
of the information density exists and is finite at some point $\generalReal > 0$.
Let ($\costFunction$, $\costConstraint$) be an additive cost constraint compatible with $\inputDistribution$, and let $\codebook$ be a random codebook from the $(\inputDistribution, \blocklength, \codebookRate)$-ensemble. Let $\codebookRate \in (0,\infty)$ such that
\begin{equation}
\label{eq:jammersystem-rateconstraint}
\sup_{\analogMessageAlphabetElement{1} \in \analogMessageAlphabet{1}, \dots, \analogMessageAlphabetElement{\aliceNum} \in \analogMessageAlphabet{\aliceNum}}
\information{\inputDistribution}{\effectiveChannelEve(\analogMessageAlphabetElement{1}, \dots, \analogMessageAlphabetElement{\aliceNum}, \cdot, \cdot)}
<
\codebookRate
%\\
<
\inf_{\analogMessageAlphabetElement{1} \in \analogMessageAlphabet{1}, \dots, \analogMessageAlphabetElement{\aliceNum} \in \analogMessageAlphabet{\aliceNum}}
\information{\inputDistribution}{\effectiveChannelBob(\analogMessageAlphabetElement{1}, \dots, \analogMessageAlphabetElement{\aliceNum}, \cdot, \cdot)}.
\end{equation}
Then there are numbers $\finalconstOne, \finalconstTwo, \finalconstThree, \finalconstFour > 0$ such that for sufficiently large $\blocklength$,
\begin{equation}
\label{eq:jammersystem-totvar}
\Probability_\codebook\bigg(
  \totalvariationlr{
    \eveMarginalOutputDistribution{\effectiveChannelEve^\blocklength(\analogMessageAlphabetElement{1}, \dots, \analogMessageAlphabetElement{\aliceNum}, \cdot, \cdot), \codebook_{\costFunction, \costConstraint}}
    -
    \marginalOutputDistribution{\inputDistribution}{\effectiveChannelEve(\analogMessageAlphabetElement{1}, \dots, \analogMessageAlphabetElement{\aliceNum}, \cdot, \cdot)}^\blocklength
  }
%  \\
  \geq
  \exp(-\blocklength\finalconstOne)
\bigg)
< \exp(-\exp(\blocklength\finalconstTwo)),
\end{equation}
where \gls{eveMarginalOutputDistribution} denotes output of a channel $\channel^\blocklength$ given that a uniformly random code word from the codebook $\codebook$ is transmitted, and
\begin{equation}
\label{eq:jammersystem-reconstruction}
\Probability_\codebook\left(\errorevent\right)
<
\exp(-\blocklength\finalconstFour),
\end{equation}
where $\errorevent$ is the event that the jamming strategy induced by $\codebook_{\costFunction, \costConstraint}$ does not allow reconstruction of the jamming signal with error at most $\exp(-\blocklength \finalconstThree)$.
\end{theorem}

\begin{remark}
\label{remark:jammersystem-totvar}
The bound (\ref{eq:jammersystem-totvar}) compares two probability distributions. The first one,
$
\eveMarginalOutputDistribution{\effectiveChannelEve^\blocklength(\analogMessageAlphabetElement{1}, \dots, \analogMessageAlphabetElement{\aliceNum}, \cdot, \cdot), \codebook_{\costFunction, \costConstraint}},
$
is the distribution the eavesdropper observes if the jamming strategy follows the approach we propose in this paper. The second one,
$
\marginalOutputDistribution{\inputDistribution}{\effectiveChannelEve(\analogMessageAlphabetElement{1}, \dots, \analogMessageAlphabetElement{\aliceNum}, \cdot, \cdot)}^\blocklength,
$
is the distribution the eavesdropper observes if the jammer transmits white noise. In the sense made explicit in Theorem~\ref{theorem:jammersystem}, these two cases are ``almost'' the same. This comparison is exploited in the proof of Theorem~\ref{theorem:awgn-scheme-real} to prove the \gls{mse}-security guarantee.
\end{remark}

In order to prove this theorem, we decompose the system depicted in Fig.~\ref{fig:systemmodel} into smaller (and more easily analyzed) subsystems by considering only a subset of the depicted terminals at a time.

\paragraph{Considering the terminals $\alice{1}, \dots, \alice{\aliceNum}, \bob$}
\label{par:approximation}
This is the \gls{dfa} system model. This part of the system consists of transmitters $(\alice{\aliceIndex})_{\aliceIndex=1}^\aliceNum$ each of which holds a value $\analogMessageAlphabetElement{\aliceIndex} \in \analogMessageAlphabet{\aliceIndex}$ and a receiver $\bob$ which has the objective of estimating $\objectiveFunction(\analogMessageAlphabetElement{1}, \dots, \analogMessageAlphabetElement{\aliceNum})$. To this end, each transmitter $\alice{\aliceIndex}$ passes $\analogMessageAlphabetElement{\aliceIndex}$ through a pre-processor $\alicePreproc{\aliceIndex}$ independently $\blocklength$ times yielding a sequence $\processedAnalogMessageRV{\aliceIndex}^\blocklength$ of channel inputs. These are transmitted through $\blocklength$ independent uses of the channel, generating a sequence $\outputRV^\blocklength$ of channel outputs. The receiver passes this sequence through a post-processor $\bobPostproc^\blocklength$ which generates an approximation $\objectiveEstimator$ of $\objectiveFunction(\analogMessageAlphabetElement{1}, \dots, \analogMessageAlphabetElement{\aliceNum})$. As mentioned, the design of the pre- and post-processors depends heavily on the channel model and a particular class of functions $\objectiveFunction$. The idea is that the pre-processors, the channel and the post-processor work together to mimic the function $\objectiveFunction$, and any approach following this idea will be highly dependent on the particular structure of the channel and $\objectiveFunction$. In Theorem~\ref{theorem:jammersystem}, it is assumed that such a system is already in place and an augmentation is proposed which makes it more secure. A property of the system described in Section~\ref{sec:dfaj} necessary for our purposes and heavily exploited in this work is that the pre-processing is i.i.d., i.e., each pre-processor $\alicePreproc{\aliceIndex}$ is a stochastic kernel mapping from $\analogMessageAlphabet{\aliceIndex}$ to $\inputAlphabet_\aliceIndex$ and an $\blocklength$-fold product $\alicePreproc{\aliceIndex}^\blocklength$ of it is used to generate the channel input sequence.

\paragraph{Considering the terminals $\alice{1}, \dots, \alice{\aliceNum}, \jammer, \eve$}
\label{par:eavesdropping}
In this setting, we assume that the transmitters $\alice{1}, \dots, \alice{\aliceNum}$ run a scheme of the kind described under~\ref{par:approximation}). Instead of the legitimate receiver, there is now an eavesdropper $\eve$. The objective is then to limit the usefulness of the eavesdropper's received signal $\eveOutputRV^\blocklength$. To this end, we add a friendly jammer $\jammer$ to the system which transmits, according to a certain strategy, a word $\inputRV^\blocklength$. In this work, any jamming strategy we consider is induced by a codebook $\codebook$ of words of length $\blocklength$ through the rule that the jammer chooses an element of the codebook uniformly at random and transmits it. We use existing results on \emph{channel resolvability} to derive a bound on the usefulness of the signal $\eveOutputRV^\blocklength$ received at $\eve$.

\paragraph{Considering the terminals $\alice{1}, \dots, \alice{\aliceNum}, \jammer, \bob$}
\label{par:jammer-canceling}
This is the setting from~\ref{par:approximation}) with an additional transmitter $\jammer$. Here we assume that $\jammer$ uses a jamming strategy induced by a codebook $\codebook$ as described under~\ref{par:eavesdropping}) and use Theorem~\ref{theorem:compound} on compound channel coding to argue that for suitable choices of $\codebook$, $\bob$ is able to fully reconstruct the jamming signal $\inputRV^\blocklength$. This enables $\bob$ to perform a cancellation of the jamming signal before it applies the post-processor $\bobPostproc^\blocklength$ it would use in setting~\ref{par:approximation}). How this cancellation works depends on the particularities of the channel considered, but if, e.g., the jamming signal is simply added to the channel output as in the \gls{awgn} example in Section~\ref{sec:AWGN}, it is possible to cancel it entirely by subtracting it from the received signal. So in this case, the post-processor would consist of a reconstruction of the jamming signal, the subtraction of this signal from the received one and a post-processing step identical to that from~\ref{par:approximation}).

\paragraph{Combining settings~\ref{par:eavesdropping}) and~\ref{par:jammer-canceling})}
The goal here is to argue the existence of a codebook $\codebook$ which achieves both of the objectives described under~\ref{par:eavesdropping}) and~\ref{par:jammer-canceling}). It will turn out that this can be achieved by a standard random codebook construction.

Theorem~\ref{theorem:jammersystem} formulates conditions under which there are code books in the $(\inputDistribution, \blocklength, \codebookRate)$-ensemble of which the $(\costFunction, \costConstraint)$-cost constrained versions simultaneously achieve the goals set forth under~\ref{par:eavesdropping}) and~\ref{par:jammer-canceling}).

As a technical ingredient for our proof, we recall a result on channel resolvability from~\cite{frey2018resolvability} that will be applied in order to guarantee the virtual indistinguishability of the jamming signal from white noise for the eavesdropper.

\begin{theorem}{\cite{frey2018resolvability}}
\label{theorem:resolvability}
Given a channel $\channel$ from $\inputAlphabet$ to $\outputAlphabet$, an input distribution $\inputDistribution$ such that the moment-generating function $\Expectation_{\inputOutputDistribution{\inputDistribution}{\channel}} \exp(\generalReal \cdot \informationDensity{\inputDistribution}{\channel}{\inputRV}{\outputRV})$ of the information density exists and is finite for some $\generalReal > 0$, and $\codebookRate > \information{\inputDistribution}{\channel}$, there exist $\finalconstOne > 0$ and $\finalconstTwo > 0$ such that for large enough block lengths $\codebookBlocklength$, the $(\inputDistribution, \blocklength, \codebookRate)$-ensemble satisfies
\begin{equation}
\label{eq:resolvability-theorem}
\Probability_\codebook \left(
  \totalvariation{
    \eveMarginalOutputDistribution{\channel^\blocklength,\codebook} - \inputOutputDistribution{\inputDistribution}{\channel}^\blocklength
  }
  >
  \exp(-\finalconstOne\blocklength)
\right)
\leq
\exp\left(-\exp\left(\finalconstTwo\blocklength\right)\right),
\end{equation}
where $\eveMarginalOutputDistribution{\channel^\blocklength,\codebook}$ is the output distribution of channel $\channel$ given that a uniformly random code word from $\codebook$ is transmitted.
\end{theorem}

Similarly as with the compound channel coding theorem, we can use known methods to incorporate an additive cost constraint and argue the following corollary.

\begin{cor}
\label{cor:costconstraint}
Let $\inputDistribution$ be an input distribution on $\inputAlphabet$ and $(\costFunction, \costConstraint)$ an additive cost constraint compatible with $\inputDistribution$. Then the statement of Theorem \ref{theorem:resolvability} is valid even if the codebook $\codebook$ is replaced with its associated cost-constrained version $\codebook_{\costFunction, \costConstraint}$.
\end{cor}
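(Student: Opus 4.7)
The plan is to reduce Corollary~\ref{cor:costconstraint} to Theorem~\ref{theorem:resolvability} via a triangle inequality and a concentration argument that controls how much $\codebook_{\costFunction,\costConstraint}$ can differ from $\codebook$. Writing $\idealDistAbbrev := \marginalOutputDistribution{\inputDistribution}{\channel}^\blocklength$ (the intended ideal output distribution), I would first bound
\[
\totalvariation{\eveMarginalOutputDistribution{\channel,\codebook_{\costFunction,\costConstraint}} - \idealDistAbbrev}
\leq
\totalvariation{\eveMarginalOutputDistribution{\channel,\codebook_{\costFunction,\costConstraint}} - \eveMarginalOutputDistribution{\channel,\codebook}}
+
\totalvariation{\eveMarginalOutputDistribution{\channel,\codebook} - \idealDistAbbrev}.
\]
The second summand is already controlled doubly-exponentially often by Theorem~\ref{theorem:resolvability}, so the only new work is handling the first summand.

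Since both $\eveMarginalOutputDistribution{\channel,\codebook}$ and $\eveMarginalOutputDistribution{\channel,\codebook_{\costFunction,\costConstraint}}$ are uniform mixtures of the same size $|\codebook| = \exp(\blocklength\codebookRate)$, and the two codebooks differ exactly on the set of codewords that violate the cost constraint, the first summand is bounded by $N/|\codebook|$, where $N$ denotes the number of codewords in $\codebook$ whose empirical cost exceeds $\costConstraint$. I would then invoke compatibility of $(\costFunction,\costConstraint)$ with $\inputDistribution$: because $\costFunction(\inputRV)$ has a finite moment generating function near $0$ and $\costConstraint > \Expectation_\inputDistribution \costFunction(\inputRV)$, the Cramér--Chernoff theorem gives $\lambda > 0$ and $\mu > 0$ such that the probability $\lemmapvalue$ that a single i.i.d.\ codeword $\inputRV^\blocklength$ violates the cost constraint satisfies $\lemmapvalue \leq \exp(-\blocklength\mu)$ for all sufficiently large $\blocklength$.

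Next I would apply a Chernoff bound to $N$, which is a sum of $|\codebook|$ i.i.d.\ Bernoulli($\lemmapvalue$) random variables. Choosing a threshold $t_\blocklength = \exp(\blocklength\nu)$ with $\max(0,\codebookRate-\mu) < \nu < \codebookRate$ (which is nonempty since $\mu > 0$), the standard multiplicative Chernoff bound gives $\Probability(N \geq t_\blocklength) \leq \exp(-c\exp(\blocklength\nu))$ for some $c > 0$ and large $\blocklength$, i.e.\ doubly-exponentially small in $\blocklength$. On this high-probability event, $N/|\codebook| \leq \exp(-\blocklength(\codebookRate-\nu))$ is exponentially small, so choosing $\finalconstOne$ smaller than both the exponent $\codebookRate-\nu$ and the corresponding exponent supplied by Theorem~\ref{theorem:resolvability}, and $\finalconstTwo$ smaller than both of the doubly-exponential rates, the triangle inequality together with a union bound over the two bad events yields the claim.

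The main obstacle is the parameter tuning in the Chernoff bound on $N$: we need simultaneously that the deviation threshold $t_\blocklength/|\codebook|$ decays exponentially (so the TV contribution from replaced codewords matches the $\exp(-\finalconstOne\blocklength)$ target) and that the tail bound itself is doubly exponentially small (to survive into the $\exp(-\exp(\finalconstTwo\blocklength))$ statement). The nonemptiness of the interval $(\max(0,\codebookRate-\mu),\,\codebookRate)$ is precisely what makes this feasible, and it is ultimately what the compatibility hypothesis on $(\costFunction,\costConstraint)$ buys us.
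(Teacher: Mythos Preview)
Your proposal is correct and is essentially the approach the paper points to: the paper does not spell out a proof but only says that ``similarly as with the compound channel coding theorem, we can use known methods, e.g., in~\cite[Section 3.3]{elgamal2011network}, to incorporate an additive cost constraint,'' and your argument is precisely a concrete instantiation of that standard trick (bound the fraction of cost-violating codewords via Cram\'er--Chernoff, then control the discrepancy between $\codebook$ and $\codebook_{\costFunction,\costConstraint}$). The one place where you go beyond what the textbook argument needs for channel coding---tuning the threshold $t_\blocklength = \exp(\blocklength\nu)$ with $\nu \in (\max(0,\codebookRate-\mu),\codebookRate)$ so that the binomial tail on $N$ is doubly exponentially small rather than merely exponentially small---is exactly the extra care the resolvability statement requires, and you handle it correctly.
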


\begin{proof}
By Lemma~\ref{lemma:cost-badcodewords}, we  pick $\hat{\finalconstOne}, \hat{\finalconstTwo}$ satisfying (\ref{eq:lln-exponential}) and by Theorem~\ref{theorem:resolvability}, we  pick $\tilde{\finalconstOne}, \tilde{\finalconstTwo}$ satisfying (\ref{eq:resolvability-theorem}).

We use the observation that $\lemmaBadCodewords \leq \exp((\codebookRate-\hat{\finalconstOne})\blocklength)$ implies
\begin{align}
\label{eq:cor-costconstraint-badcodewords}
\totalvariation{
  \eveMarginalOutputDistribution{\channel^\blocklength, \codebook_{\costFunction, \costConstraint}} - \eveMarginalOutputDistribution{\channel^\blocklength, \codebook}
}
\leq
\frac{\lemmaBadCodewords}
     {\exp(\blocklength\codebookRate)}
\leq
\exp(-\hat{\finalconstOne}\blocklength)
\end{align}
and observe that, as long as $\finalconstOne < \hat{\finalconstOne}, \tilde{\finalconstOne}$ and $\finalconstTwo < \hat{\finalconstTwo}, \tilde{\finalconstTwo}$ and $\blocklength$ is sufficiently large,
\begin{align*}
&\hphantom{{}={}}
\Probability_{\codebook_{\costFunction, \costConstraint}} \left(
  \totalvariation{
    \eveMarginalOutputDistribution{\channel^\blocklength, \codebook_{\costFunction, \costConstraint}} - \inputOutputDistribution{\inputDistribution}{\channel}^\blocklength
  }
  >
  \exp(-\finalconstOne\blocklength)
\right)
\\
\overset{(a)}&{\leq}
\Probability_{\codebook} \left(
  \totalvariation{
    \eveMarginalOutputDistribution{\channel^\blocklength, \codebook_{\costFunction, \costConstraint}} - \eveMarginalOutputDistribution{\channel^\blocklength, \codebook}
  }
  +
  \totalvariation{
    \eveMarginalOutputDistribution{\channel^\blocklength, \codebook} - \inputOutputDistribution{\inputDistribution}{\channel}^\blocklength
  }
  >
  \exp(-\finalconstOne\blocklength)
\right)
\\
\overset{(b)}&{\leq}
\Probability_{\codebook} \left(
  \totalvariation{
    \eveMarginalOutputDistribution{\channel^\blocklength, \codebook_{\costFunction, \costConstraint}} - \eveMarginalOutputDistribution{\channel^\blocklength, \codebook}
  }
  >
  \exp(-\hat{\finalconstOne}\blocklength)
\right)
%\\
+
\Probability_{\codebook} \left(
  \totalvariation{
    \eveMarginalOutputDistribution{\channel^\blocklength, \codebook} - \inputOutputDistribution{\inputDistribution}{\channel}^\blocklength
  }
  >
  \exp(-\tilde{\finalconstOne}\blocklength)
\right)
\\
\overset{(c)}&{<}
\exp(-\exp(\hat{\finalconstTwo}\blocklength)) + \exp(-\exp(\tilde{\finalconstTwo}\blocklength))
\\
\overset{(d)}&{\leq}
\exp(-\exp(\finalconstTwo\blocklength)),
\end{align*}
where (a) is by the triangle inequality, (b) is by the union bound and the choice of $\finalconstOne$, (c) is due to (\ref{eq:lln-exponential}), (\ref{eq:cor-costconstraint-badcodewords}) and (\ref{eq:resolvability-theorem}), and (d) is by the choice of $\finalconstTwo$.
\end{proof}

Given the previous observations, the proof of the main result of this section is now straightforward.

\begin{proof}[Proof of Theorem~\ref{theorem:jammersystem}]
An application of Corollary~\ref{cor:compound} yields (\ref{eq:jammersystem-reconstruction}), and (\ref{eq:jammersystem-totvar}) follows from Corollary~\ref{cor:costconstraint}.
\end{proof}

We can now put everything together and prove the main theorem of this paper.
\begin{proof}[Proof of Theorem~\ref{theorem:awgn-scheme-real}]
For the pre-processing at the transmitters, we use the same scheme as in the proof of Lemma~\ref{lemma:awgn-scheme-ideal} and begin by verifying that the resulting effective channels $\effectiveChannelBob$ and $\effectiveChannelEve$ with the input distribution $\inputDistribution$ chosen to be Gaussian with mean $0$ and variance $\jammerPowerConstraint$ satisfy the assumptions of Theorem~\ref{theorem:jammersystem}. Since the defined compound channel is a class of Gaussian channels with different means taking values in the compact set $[-1,1]$, the approximability of the channel is an immediate consequence of Theorem~\ref{theorem:gaussian-compound-approximation}. The finiteness of the moment-generating function of the information density can be seen by straightforward applications of the definitions of information density and Rényi divergence:
\begin{align*}
 \Expectation \exp(\generalReal \cdot \informationDensity{\inputDistribution}{\effectiveChannelEve(\analogMessageAlphabetElement{1}, \dots, \analogMessageAlphabetElement{\aliceNum}, \cdot, \cdot)}{\inputRV}{\eveOutputRV})
 &=
 \Expectation\left(
   \left(
     \RNDerivative{
       \effectiveChannelEve(\analogMessageAlphabetElement{1}, \dots, \analogMessageAlphabetElement{\aliceNum}, \inputRV, \cdot)
     }{
       \marginalOutputDistribution{\inputDistribution}{\effectiveChannelEve(\analogMessageAlphabetElement{1}, \dots, \analogMessageAlphabetElement{\aliceNum}, \cdot, \cdot)}
     }(\eveOutputRV)
   \right)^\generalReal
 \right)
 \\
 &=
 \exp\left(
  \generalReal
  \cdot
  \frac{1}{\generalReal}
  \log
  \Expectation\left(
    \left(
      \RNDerivative{
        \effectiveChannelEve(\analogMessageAlphabetElement{1}, \dots, \analogMessageAlphabetElement{\aliceNum}, \inputRV, \cdot)
      }{
        \marginalOutputDistribution{\inputDistribution}{\effectiveChannelEve(\analogMessageAlphabetElement{1}, \dots, \analogMessageAlphabetElement{\aliceNum}, \cdot, \cdot)}
      }(\eveOutputRV)
    \right)^\generalReal
  \right)
 \right)
 \\
 &=
 \exp\left(
  \generalReal
  \renyidiv{\generalReal+1}{\inputOutputDistribution{\inputDistribution}{\effectiveChannelEve(\analogMessageAlphabetElement{1}, \dots, \analogMessageAlphabetElement{\aliceNum}, \cdot, \cdot)}}{\inputDistribution\marginalOutputDistribution{\inputDistribution}{\effectiveChannelEve(\analogMessageAlphabetElement{1}, \dots, \analogMessageAlphabetElement{\aliceNum}, \cdot, \cdot)}}
 \right)
\end{align*}
The Rényi divergence appearing at the end is between two multivariate Gaussian distributions and can be seen to be finite from the expressions given in~\cite{gil2011renyi}. In order to verify (\ref{eq:jammersystem-rateconstraint}), we first note that the information expressions appearing are the capacities of the effective channels $\effectiveChannelBob$ and $\effectiveChannelEve$. Since $\stateSpaceElement_1, \dots, \stateSpaceElement_\aliceNum$ change the mean of the channel only, they do not influence the capacity. Therefore, the infimum and supremum are over singleton sets. Consequently, the condition $\fadingJammerBob/\stddevBob > \fadingJammerEve/\stddevEve$ ensures that there is some $\codebookRate$ satisfying (\ref{eq:jammersystem-rateconstraint}).

Fix $\finalconstOne', \finalconstThree'$ as claimed to exist in Theorem~\ref{theorem:jammersystem}, and also fix $\finalconstOne, \finalconstTwo$ with $0<\finalconstTwo<\finalconstOne'$ and $0<\finalconstOne<\finalconstThree'$.

Note that in the \gls{awgn} channel, $\stateSpaceElement_1, \dots, \stateSpaceElement_\aliceNum$ correspond to a shift of the output distribution of the channel, and therefore, the variational distance that appears in (\ref{eq:jammersystem-totvar}) is independent of $\stateSpaceElement_1, \dots, \stateSpaceElement_\aliceNum$. For sufficiently large $\blocklength$, we can therefore fix a codebook $\codebook$ from the $(\inputDistribution, \blocklength, \codebookRate)$-ensemble such that for all $\stateSpaceElement_1, \dots, \stateSpaceElement_\aliceNum$, neither one of the error events described in (\ref{eq:jammersystem-totvar}) and (\ref{eq:jammersystem-reconstruction}) occurs.

Let the jamming strategy be induced by $\codebook_{\costConstraint,\costFunction}$ and let $\eveDecoder: \eveAlphabet^\blocklength \rightarrow [-1,1]$ be an estimator for $\eve$. We bound the \gls{mse} of $\eveDecoder$ as
\begin{align*}
\Expectation_{\eveMarginalOutputDistribution{\effectiveChannelEve^\blocklength(\analogMessageAlphabetElement{1}, \dots, \analogMessageAlphabetElement{\aliceNum}, \cdot, \cdot), \codebook_{\costConstraint,\costFunction}}}
  \left(
    \left(
      \eveDecoder(\eveOutputRV^\blocklength)
      -
      \objectiveFunction(\stateSpaceElement_1, \dots, \stateSpaceElement_\aliceNum)
    \right)^2
  \right)
&=
\int_0^\infty
  \eveMarginalOutputDistribution{\effectiveChannelEve^\blocklength(\analogMessageAlphabetElement{1}, \dots, \analogMessageAlphabetElement{\aliceNum}, \cdot, \cdot), \codebook_{\costConstraint,\costFunction}}
  \left(
    \left(
        \eveDecoder(\eveOutputRV^\blocklength)
        -
        \objectiveFunction(\stateSpaceElement_1, \dots, \stateSpaceElement_\aliceNum)
      \right)^2
    >
    \generalReal
  \right)
d \generalReal
\\
\overset{(a)}&{=}
\int_0^4
  \eveMarginalOutputDistribution{\effectiveChannelEve^\blocklength(\analogMessageAlphabetElement{1}, \dots, \analogMessageAlphabetElement{\aliceNum}, \cdot, \cdot), \codebook_{\costConstraint,\costFunction}}
  \left(
    \left(
        \eveDecoder(\eveOutputRV^\blocklength)
        -
        \objectiveFunction(\stateSpaceElement_1, \dots, \stateSpaceElement_\aliceNum)
      \right)^2
    >
    \generalReal
  \right)
d \generalReal
\\
\overset{(\ref{eq:jammersystem-totvar})}&{\geq}
\int_0^4
  \bigg(
    \marginalOutputDistribution{\inputDistribution}{\effectiveChannelEve(\analogMessageAlphabetElement{1}, \dots, \analogMessageAlphabetElement{\aliceNum}, \cdot, \cdot)}^\blocklength
    \left(
      \left(
          \eveDecoder(\eveOutputRV^\blocklength)
          -
          \objectiveFunction(\stateSpaceElement_1, \dots, \stateSpaceElement_\aliceNum)
        \right)^2
      >
      \generalReal
    \right)
    -
    \exp(-\blocklength\finalconstOne')
  \bigg)
d \generalReal
\\
&=
\Expectation_{\marginalOutputDistribution{\inputDistribution}{\effectiveChannelEve(\analogMessageAlphabetElement{1}, \dots, \analogMessageAlphabetElement{\aliceNum}, \cdot, \cdot)}^\blocklength}
  \left(
    \left(
      \eveDecoder(\eveOutputRV^\blocklength)
      -
      \objectiveFunction(\stateSpaceElement_1, \dots, \stateSpaceElement_\aliceNum)
    \right)^2
  \right)
%\\
-4\exp(-\blocklength\finalconstOne').
\end{align*}
where step (a) is due to the fact that both $\eveDecoder(\eveOutputRV^\blocklength)$ and $\objectiveFunction(\stateSpaceElement_1, \dots, \stateSpaceElement_\aliceNum)$ are restricted to the interval [-1,1]. Taking the lower bound for the \gls{mse} under $\marginalOutputDistribution{\inputDistribution}{\effectiveChannelEve(\analogMessageAlphabetElement{1}, \dots, \analogMessageAlphabetElement{\aliceNum}, \cdot, \cdot)}^\blocklength$ from Lemma~\ref{lemma:awgn-scheme-ideal} and noting $\finalconstTwo<\finalconstOne'$, we arrive at the expression in (\ref{eq:awgn-scheme-real-eve}) for sufficiently large $\blocklength$.

For the reconstruction strategy at $\bob$, we first let $\bob$ reconstruct the jamming signal as is possible by Theorem~\ref{theorem:jammersystem} and then post-process the received signal as is possible with knowledge of the jamming signal by Lemma~\ref{lemma:awgn-scheme-ideal}. Using the error bound in Lemma~\ref{lemma:awgn-scheme-ideal} and observing that the maximum instantaneous square error is $4$ since we are constrained to an interval of length $2$ and that $\finalconstOne<\finalconstThree'$, for sufficiently large $\blocklength$ we arrive at (\ref{eq:awgn-scheme-real-bob}).
\end{proof}

\section{Conclusion}
\label{sec:conclusion}
In this work, we have introduced a framework for \acrfull{dfaj}. We have shown how well-known information theoretic tools can be used to improve security by means of a jammer whose signal is stronger at the legitimate receiver than it is at the eavesdropper. In the process, we have proved a compound channel coding result which is a generalization of similar results from the literature.

This work is intended as an initial step towards providing security against eavesdropping for \gls{ota} computation schemes. Our theoretical analysis derives \gls{mse} guarantees both for the eavesdropper's and the legitimate receiver's reconstruction of the objective function for the case in which an arithmetic average is computed over an \gls{awgn} channel. However, a gap between this theoretical work and its implementation for the envisioned practical applications remains. In particular, we are interested in the following questions for future research:
\begin{itemize}
 \item Can the secrecy guarantees in this work be achieved with structured codes which allow for practically feasible encoding and decoding?
 \item Can the secrecy guarantees be strengthened to full semantic security?
 \item Can the approach be generalized to a larger class of channels?
\end{itemize}

\appendix
In this appendix, we prove the two lemmas used for the proof of Lemma~\ref{lemma:awgn-scheme-ideal}.

\begin{proof}[Proof of Lemma~\ref{lemma:bayes-estimator}]
It is known~\cite[eq. (6.92)]{jaynes2003probability} that the \gls{mse} is minimized by the mean of the posterior probability distribution. We can therefore calculate the minimum \gls{mse} estimator given the observations $\generalrvTwoValue_1, \dots, \generalrvTwoValue_\blocklength$ as follows, where we use $\density{}$ with random variables in the index to denote (conditional) densities.
\begin{align*}
\hat{\generalrvOne}
&=
\int_\intervalLowerBound^\intervalUpperBound
  \generalrvOneValue
  \density{\generalrvOne|\generalrvTwo_1, \dots, \generalrvTwo_\blocklength}(\generalrvOneValue | \generalrvTwoValue_1, \dots, \generalrvTwoValue_\blocklength)
d \generalrvOneValue
\\
&\stackrel{(a)}{=}
\int_\intervalLowerBound^\intervalUpperBound
   \generalrvOneValue
   \frac{\density{\generalrvTwo_1, \dots, \generalrvTwo_\blocklength | \generalrvOne}(\generalrvTwoValue, \dots, \generalrvTwoValue_\blocklength | \generalrvOneValue )\density{\generalrvOne}(\generalrvOneValue)}{\density{\generalrvTwo_1, \dots, \generalrvTwo_\blocklength}(\generalrvTwoValue_1, \dots, \generalrvTwoValue_\blocklength)}
d \generalrvOneValue
\\
&=
\frac{
       \int_\intervalLowerBound^\intervalUpperBound
         \generalrvOneValue
         \density{\generalrvTwo_1, \dots, \generalrvTwo_\blocklength | \generalrvOne}(\generalrvTwoValue_1, \dots, \generalrvTwoValue_\blocklength | \generalrvOneValue )\density{\generalrvOne}(\generalrvOneValue)
       d \generalrvOneValue
     }
     {
       \int_\intervalLowerBound^\intervalUpperBound
         \density{\generalrvTwo_1, \dots, \generalrvTwo_\blocklength | \generalrvOne}(\generalrvTwoValue_1, \dots, \generalrvTwoValue_\blocklength | \generalrvOneValue )\density{\generalrvOne}(\generalrvOneValue)
       d \generalrvOneValue
     }
\\
&\stackrel{(b)}{=}
\frac{
       \int_\intervalLowerBound^\intervalUpperBound
         \generalrvOneValue
         \exp\left(
           -\frac{1}{2\stddev^2}
           \sum_{\blockIndex=1}^\blocklength
             (\generalrvTwoValue_\blockIndex - \generalrvOneValue)^2
         \right)
       d \generalrvOneValue
     }
     {
       \int_\intervalLowerBound^\intervalUpperBound
         \exp\left(
           -\frac{1}{2\stddev^2}
           \sum_{\blockIndex=1}^\blocklength
             (\generalrvTwoValue_\blockIndex - \generalrvOneValue)^2
         \right)
       d \generalrvOneValue
     }
\\
&=
\frac{
       \int_\intervalLowerBound^\intervalUpperBound
         \generalrvOneValue
         \exp\left(
           -\frac{1}{2\stddev^2/\blocklength}\left(
             \frac{1}{\blocklength}
             \sum_{\blockIndex=1}^\blocklength
               \generalrvTwoValue_\blockIndex^2
             -
             2 \generalrvOneValue \bar{\generalrvTwoValue}
             +
             \generalrvOneValue^2
           \right)
         \right)
       d \generalrvOneValue
     }
     {
       \int_\intervalLowerBound^\intervalUpperBound
         \exp\left(
           -\frac{1}{2\stddev^2/\blocklength}\left(
             \frac{1}{\blocklength}
             \sum_{\blockIndex=1}^\blocklength
               \generalrvTwoValue_\blockIndex^2
             -
             2 \generalrvOneValue \bar{\generalrvTwoValue}
             +
             \generalrvOneValue^2
           \right)
         \right)
       d \generalrvOneValue
     }
\\
&\stackrel{(c)}{=}
\frac{
       \int_\intervalLowerBound^\intervalUpperBound
         \generalrvOneValue
         \exp\left(
           -\frac{1}{2\stddev^2/\blocklength}\left(
             \bar{\generalrvTwoValue}
             -
             \generalrvOneValue
           \right)^2
         \right)
       d \generalrvOneValue
     }
     {
       \int_\intervalLowerBound^\intervalUpperBound
         \exp\left(
           -\frac{1}{2\stddev^2/\blocklength}\left(
             \bar{\generalrvTwoValue}
             -
             \generalrvOneValue
           \right)^2
         \right)
       d \generalrvOneValue
     }
\end{align*}
For (a), we have applied Bayes' rule. (b) is by observing that $\density{\generalrvOne}(\generalrvOneValue) = 1/(\intervalUpperBound-\intervalLowerBound)$ is independent of $\generalrvOneValue$ in $[\intervalLowerBound,\intervalUpperBound]$ and $\density{\generalrvTwo_1, \dots, \generalrvTwo_\blocklength | \generalrvOne}$ is the normal density. (c) is by multiplying
\[
\exp\left(
         -\frac{1}{2\stddev^2/\blocklength}\left(
           \bar{\generalrvTwoValue}^2
           -
           \frac{1}{\blocklength}
             \sum_{\blockIndex=1}^\blocklength
               \generalrvTwoValue_\blockIndex^2
         \right)
       \right)
\]
on both sides of the fraction to complete the binomials.

The term we have calculated for $\hat{\generalrvOne}$ is the mean of a normal distribution centered at $\bar{\generalrvTwoValue}$ with variance $\stddev^2/\blocklength$ truncated in $[\intervalLowerBound,\intervalUpperBound]$. This is a distribution with a known mean~\cite[eq. 13.134]{johnson1994continuous}, and hence we arrive at (\ref{eq:bayes-estimator}).
\end{proof}

\begin{proof}[Proof of Lemma~\ref{lemma:bayes-estimator-mse}]
Based on the representation (\ref{eq:bayes-estimator}), we calculate the \gls{mse} as follows. We use the substitution rule, substituting $\generalrvTwoValue' := \frac{\bar{\generalrvTwoValue}-\intervalLowerBound}{\stddev/\sqrt{\blocklength}}$ in (a) and $\generalrvOneValue' := \frac{\generalrvOneValue-\intervalLowerBound}{\stddev/\sqrt{\blocklength}}$ in (b).

\begin{align*}
\Expectation\left(
  \left(
    \generalrvOne
    -
    \hat{\generalrvOne}
  \right)^2
\right)
&=
\hspace{-5pt}
\int_\intervalLowerBound^\intervalUpperBound
  \int_{-\infty}^\infty
    \left(
      \bar{\generalrvTwoValue}
      +
      \frac{\stddev}{\sqrt{\blocklength}}
      \cdot
      \frac{\stdnormalpdf\left(\frac{\intervalLowerBound-\bar{\generalrvTwoValue}}{\stddev/\sqrt{\blocklength}}\right)
            -
            \stdnormalpdf\left(\frac{\intervalUpperBound-\bar{\generalrvTwoValue}}{\stddev/\sqrt{\blocklength}}\right)}
          {\stdnormalcdf\left(\frac{\intervalUpperBound-\bar{\generalrvTwoValue}}{\stddev/\sqrt{\blocklength}}\right)
            -
            \stdnormalcdf\left(\frac{\intervalLowerBound-\bar{\generalrvTwoValue}}{\stddev/\sqrt{\blocklength}}\right)}
      -\generalrvOneValue
    \right)^2
\cdot
\frac{1}{\intervalUpperBound-\intervalLowerBound}
\cdot
\frac{1}{\stddev/\sqrt{\blocklength}}
\stdnormalpdf\left( \frac{\generalrvOneValue-\bar{\generalrvTwoValue}}{\stddev/\sqrt{\blocklength}} \right)
d \bar{\generalrvTwoValue}
d \generalrvOneValue
%\end{aligned}
\\
&\stackrel{(a)}{=}
\hspace{-7pt}
\int_\intervalLowerBound^\intervalUpperBound
  \int_{-\infty}^\infty
    \left(\rule{0cm}{1cm}\right.
      \frac{\stddev}{\sqrt{\blocklength}}
      \left(
        \generalrvTwoValue'
        +
        \frac{\stdnormalpdf(-\generalrvTwoValue')
              -
              \stdnormalpdf\left(\frac{\intervalUpperBound-\intervalLowerBound}{\stddev/\sqrt{\blocklength}}-\generalrvTwoValue' \right)}
            {\stdnormalcdf\left(\frac{\intervalUpperBound-\intervalLowerBound}{\stddev/\sqrt{\blocklength}}-\generalrvTwoValue' \right)
              -
              \stdnormalcdf(-\generalrvTwoValue')}
      \right)
      +\intervalLowerBound
      -\generalrvOneValue
    \left.\rule{0cm}{1cm}\right)^2
\cdot
\frac{1}{\intervalUpperBound-\intervalLowerBound}
\cdot
\stdnormalpdf\left(\frac{\generalrvOneValue-\intervalLowerBound}{\stddev/\sqrt{\blocklength}} - \generalrvTwoValue'\right)
d \generalrvTwoValue'
d \generalrvOneValue
\\
&\stackrel{(b)}{=}
\hspace{-7pt}
\int_0^{\frac{\intervalUpperBound-\intervalLowerBound}{\stddev/\sqrt{\blocklength}}}
  \int_{-\infty}^\infty
      \left(
        \generalrvTwoValue'
        +
        \frac{\stdnormalpdf(-\generalrvTwoValue')
              -
              \stdnormalpdf\left(\frac{\intervalUpperBound-\intervalLowerBound}{\stddev/\sqrt{\blocklength}}-\generalrvTwoValue' \right)}
            {\stdnormalcdf\left(\frac{\intervalUpperBound-\intervalLowerBound}{\stddev/\sqrt{\blocklength}}-\generalrvTwoValue' \right)
              -
              \stdnormalcdf(-\generalrvTwoValue')}
        -\generalrvOneValue'
      \right)^2
\cdot
\left(\frac{\stddev}{\sqrt{\blocklength}}\right)^3
\cdot
\frac{1}{\intervalUpperBound-\intervalLowerBound}
\cdot
\stdnormalpdf(\generalrvOneValue' - \generalrvTwoValue')
d \generalrvTwoValue'
d \generalrvOneValue'
\\
&=
\frac{\stddev^2}{\blocklength}
\mseFunction\left(
  \frac{\intervalUpperBound-\intervalLowerBound}{\stddev/\sqrt{\blocklength}}
\right),
\end{align*}
concluding the proof of the lemma.
\end{proof}

\bibliographystyle{IEEEtran}
\bibliography{references}

\end{document}